\newenvironment{lp*}{\begin{equation*}  \begin{array}{lll}}{\end{array}\end{equation*}}
\providecommand{\tabularnewline}{\\}
\global\long\def\Otil{\tilde{O}}
\global\long\def\polylog{\operatorname{polylog}}
\global\long\def\poly{\operatorname{poly}}
\global\long\def\dist{\operatorname{dist}}
\newcommand{\E}{\mathbb{E}}
\renewcommand{\P}{\mathbb{P}}
\newcommand{\job}{\mathrm{job}}
\newcommand{\target}{\mathrm{target}}
\newcommand{\Sampling}{\textsc{Resample}}
\newcommand{\Resample}{\textsc{Resample}}
\newcommand{\Schedule}{\textsc{Schedule}}
\renewcommand{\exp}{\mathrm{exp}}
\def\ShowComment{True} 
\def\thatchapholtext#1{\textcolor{purple}{#1}}
\def\thatchaphol#1{\marginpar{$\leftarrow$\fbox{T}}\footnote{$\Rightarrow$~{\sf\textcolor{purple}{#1 --Thatchaphol}}}}
\def\pattext#1{\textcolor{purple}{#1}}
\def\pat#1{\marginpar{$\leftarrow$\fbox{P}}\footnote{$\Rightarrow$~{\sf\textcolor{red}{#1 --Pat}}}}
\def\thatchapholtext#1{}
\def\thatchaphol#1{}
\def\pattext#1{}
\def\pat#1{}
\title{Simple Dynamic Spanners with Near-optimal Recourse against an Adaptive Adversary}
\author{Sayan Bhattacharya}{University of Warwick, United Kingdom}{bhattacharya@warrick.ac.uk}{}{}
\author{Thatchaphol Saranurak}{University of Michigan, Ann Arbor, USA}{thsa@umich.edu}{}{}
\author{Pattara Sukprasert}{Northwestern University, Evanston, USA}{pattara@u.northwestern.edu}{}{}
\authorrunning{S. Bhattacharya, T. Saranurak, and P. Sukprasert} 
\keywords{Algorithms, Dynamic Algorithms, Spanners, Recourse} 
\begin{document}

\maketitle

\begin{abstract}
Designing dynamic algorithms against an adaptive adversary whose performance
match the ones assuming an oblivious adversary is a major research
program in the field of dynamic graph algorithms. One of the prominent
examples whose oblivious-vs-adaptive gap remains maximally large is
the \emph{fully dynamic spanner} problem; there exist algorithms assuming
an oblivious adversary with near-optimal size-stretch trade-off using
only $\polylog(n)$ update time {[}Baswana, Khurana, and Sarkar TALG'12;
Forster and Goranci STOC'19; Bernstein, Forster, and Henzinger SODA'20{]},
while against an adaptive adversary, even when we allow infinite time
and only count recourse (i.e.~the number of edge changes per update in the maintained spanner), all previous algorithms with stretch at most $\log^{5}(n)$
require at least $\Omega(n)$ amortized recourse {[}Ausiello, Franciosa,
and Italiano ESA'05{]}.

In this paper, we completely close this gap with respect to recourse
by showing algorithms against an adaptive adversary with near-optimal
size-stretch trade-off and recourse. More precisely, for any $k\ge1$,
our algorithm maintains a $(2k-1)$-spanner of size $O(n^{1+1/k}\log n)$
with $O(\log n)$ amortized recourse, which is optimal in all parameters
up to a $O(\log n)$ factor. As a step toward algorithms with small
update time (not just recourse), we show another algorithm that maintains
a $3$-spanner of size $\Otil(n^{1.5})$ with $\polylog(n)$ amortized
recourse \emph{and} simultaneously $\Otil(\sqrt{n})$ worst-case update
time. %
\end{abstract}

\section{Introduction}

Increasingly, algorithms are used interactively for data analysis,
decision making, and classically as data structures. Often it is not
realistic to assume that a user or an adversary is \emph{oblivious}
to the outputs of the algorithms; they can be \emph{adaptive} in the
sense that their updates and queries to the algorithm may depend on
the previous outputs they saw. Unfortunately, many classical algorithms
give strong guarantees only when assuming an oblivious adversary.
This calls for the design of algorithms that work against an adaptive
adversary whose performance match the ones assuming an oblivious adversary.
Driven by this question, there have been exciting lines of work across
different communities in theoretical computer science, including streaming
algorithms against an adaptive adversary \cite{ben2020framework,hasidim2020adversarially,woodruff2020tight,alon2021adversarial,kaplan2021separating,Braverman2021adversarial},
statistical algorithms against an adaptive data analyst \cite{hardt2014preventing,dwork2015preserving,bassily2021algorithmic,steinke2017tight},
and very recent algorithms for machine unlearning \cite{gupta2021adaptive}.

In the area of this paper, namely dynamic graph algorithms, a continuous
effort has also been put on designing algorithms against an adaptive
adversary. This is witnessed by dynamic algorithms for maintaining
spanning forests \cite{holm2001poly,NanongkaiS17,Wulff-Nilsen17,NanongkaiSW17,ChuzhoyGLNPS19},
shortest paths \cite{BernsteinC16,Bernstein17,BernsteinChechikSparse,ChuzhoyK19,ChuzhoyS20,gutenberg2020decremental,gutenberg2020deterministic,GutenbergWW20,Chuzhoy21},
matching \cite{BhattacharyaHI15,BhattacharyaHN16,BhattacharyaHN17,BhattacharyaK19,Wajc19,BhattacharyaK21deterministic},
and more. This development led to new powerful tools, such as the
expander decomposition and hierarchy \cite{SaranurakW19,GoranciRST20,liS2021}
applicable beyond dynamic algorithms \cite{Li21,li2021nearly,abboud2021apmf,zhang2021faster},
and other exciting applications such as the first almost-linear time
algorithms for many flow and cut problems \cite{BrandLNPSSSW20,BrandLLSS0W21,Chuzhoy21,BernsteinGS21}.
Nevertheless, for many fundamental dynamic graph problems, including
graph sparsifiers \cite{AbrahamDKKP16}, reachability \cite{BernsteinPW19},
directed shortest paths \cite{gutenberg2020decremental}, the performance
gap between algorithms against an oblivious and adaptive adversary
remains large, waiting to be explored and, hopefully, closed.

One of the most prominent dynamic problems whose oblivious-vs-adaptive
gap is maximally large is the \emph{fully dynamic spanner} problem
\cite{AusielloFI06,Elkin11,BaswanaKS12,BodwinK16,ForsterG19,BernsteinFH19,Bernstein2020fully}.
Given an unweighted undirected graph $G=(V,E)$ with $n$ vertices, an \emph{$\alpha$-spanner
}$H$ is a subgraph of $G$ whose pairwise distances between vertices
are preserved up to the \emph{stretch} factor of $\alpha$, i.e.,
for all $u,v\in V$, we have $\dist_{G}(u,v)\le\dist_{H}(u,v)\le\alpha\cdot\dist_{G}(u,v)$.\footnote{Here, $\dist_G(u,v)$ denotes the distance between $u$ and $v$ in graph $G$.}%
{} In this problem, we want to maintain an $\alpha$-spanner of a graph
$G$ while $G$ undergoes both edge insertions and deletions, and
for each edge update, spend as small update time as possible.

Assuming an oblivious adversary, near-optimal algorithms have been
shown: for every $k\ge1$, there are algorithms maintaining a $(2k-1)$-spanner
containing $\Otil(n^{1+1/k})$ edges\footnote{$\Otil(\cdot)$ hides a $\polylog(n)$ factor.},
which is nearly tight with the $\Omega(n^{1+1/k})$ bound from Erd\H{o}s' girth
conjecture (proven for the cases where $k=1,2,3,5$ \cite{wenger1991extremal}). Their update times are either $O(k\log^{2}n)$ amortized
\cite{BaswanaKS12,ForsterG19} or $O(1)^{k}\log^{3}n$ worst-case
\cite{BernsteinFH19}, both of which are polylogarithmic when $k$
is a constant.

In contrast, the only known dynamic spanner algorithm against an adaptive adversary
by \cite{AusielloFI06} requires $O(n)$ amortized update time and
it can maintain a $(2k-1)$-spanner of size $O(n^{1+1/k})$ only for
$k\le3$. Whether the $O(n)$ bound can be improved remained open until
very recently. Bernstein~et~al.~\cite{Bernstein2020fully} show
that a $\log^6(n)$-spanner can be maintained against an adaptive adversary using $\polylog(n)$
amortized update time. The drawback, however, is that their expander-based
technique is too crude to give any stretch smaller than $\polylog(n)$.
Hence, for $k\le \log^6(n)$, it is still unclear if the $\Theta(n)$
bound is inherent. Surprisingly, this holds even if we allow infinite
time, and only count \emph{recourse}, i.e., the number of edge changes per update
in the maintained spanner. The stark difference in performance between
the two adversarial settings motivates the main question of this paper:
\begin{center}
\emph{Is the $\Omega(n)$ recourse bound inherent for fully dynamic
spanners against an adaptive adversary? }
\par\end{center}

Recourse is an important performance measure of dynamic algorithms.
There are dynamic settings where changes in solutions are costly while
computation itself is considered cheap, and so the main goal is to
directly minimize recourse \cite{gupta2014maintaining,gupta2014online,avin2020dynamic,gupta2020fully}.
Even when the final goal is to minimize update time, there are many
dynamic algorithms that crucially require the design of subroutines
with recourse bounds \emph{stronger than} update time bounds to obtain
small final update time \cite{chechik2020dynamic,GoranciRST20,chen2020fast}.
Historically, there are dynamic problems, such as planar embedding
\cite{HolmR20soda,HolmR20stoc} and maximal independent set \cite{Censor-HillelHK16,BehnezhadDHSS19,ChechikZ19},
where low recourse algorithms were first discovered and later led
to fast update-time algorithms. Similar to dynamic spanners, there
are other fundamental problems, including topological sorting \cite{BernsteinC18cycle}
and edge coloring \cite{bhattacharya2021online}, for which low recourse
algorithms remain the crucial bottleneck to faster update time.

In this paper, we successfully break the $O(n)$ recourse barrier
and completely close the oblivious-vs-adaptive gap with respect to
recourse for fully dynamic spanners against an adaptive adversary.
\begin{theorem}
\label{thm:main greedy}There exists a deterministic algorithm that,
given an unweighted graph $G$ with $n$ vertices undergoing edge
insertions and deletions and a parameter $k\ge1$, maintains a $(2k-1)$-spanner
of $G$ containing $O(n^{1+1/k}\log n)$ edges using $O(\log n)$
amortized recourse.
\end{theorem}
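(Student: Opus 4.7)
My approach would build on the classical static greedy $(2k-1)$-spanner: iterate through the edges in some fixed order and include $(u,v)$ in $H$ iff the current partial spanner satisfies $\dist_H(u,v) > 2k-1$. A standard girth argument (the resulting $H$ has girth $> 2k$) yields $|H| = O(n^{1+1/k})$ together with stretch $2k-1$. The task is to dynamize this with only $O(\log n)$ amortized recourse against an adaptive adversary, which rules out the randomized oblivious-adversary approaches of \cite{BaswanaKS12,ForsterG19,BernsteinFH19}.

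The key idea I would pursue is a level-based hierarchy. Let $L = \lceil \log_2 n \rceil$ (assuming a $\poly(n)$ total number of updates). For each level $i \in \{0,1,\ldots,L\}$, I would maintain a greedy $(2k-1)$-spanner $H_i$ of the subgraph $G_i \subseteq G$ consisting of edges currently in $G$ whose \emph{age} (time elapsed since their most recent insertion) lies in a window of length $\Theta(2^i)$, where these windows partition the nonnegative age axis. Then $\{G_i\}$ partitions the current edge set of $G$, and the output spanner is $H := \bigcup_{i=0}^{L} H_i$. For the size bound, at most $O(2^i)$ insertions can fall into an age window of length $2^i$, so $|H_i| \le \min(O(2^i), n^{1+1/k})$, and summing over $i$ gives $|H| = O(n^{1+1/k} \log n)$. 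For stretch, every current edge $(u,v) \in G$ lies in a unique $G_i$, so $\dist_{H_i}(u,v) \le 2k-1$; this extends to arbitrary vertex pairs via the standard reduction (a length-$\ell$ $G$-path dilates to a length-$(2k-1)\ell$ $H$-path by concatenating per-edge detours).

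For the recourse bound, I would rebuild each $H_i$ from scratch on a time scale of $\Theta(2^i)$ updates, which is the rate at which level $i$'s age window shifts. Each rebuild causes $O(\min(2^i, n^{1+1/k}))$ edge changes, so the per-update amortized contribution from level $i$ is $O(\min(1, n^{1+1/k}/2^i))$; the geometric sum over $i = 0, \ldots, L$ is $O(\log n)$, as desired. Since every component is deterministic, the whole scheme is deterministic and hence works against an adaptive adversary.

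\textbf{Main obstacle.} The delicate point is that between rebuilds at level $i$, the set $G_i$ shifts: edges age out of $G_i$ into $G_{i+1}$, get deleted from $G$ altogether, or are freshly inserted into $G$ and enter $G_i$'s window. The snapshot $H_i$ may therefore fail to span some current members of $G_i$. The heart of the proof will be to stagger the rebuild schedules across levels so that any edge that departs $G_i$ is either no longer in $G$ (no coverage needed) or has migrated into $G_{i+1}$ whose up-to-date snapshot already spans it, and symmetrically for entering edges; concurrently, the amortized recourse argument must be robust against the adaptive adversary's freedom to choose which specific edges to delete at each step. Carefully synchronizing these timing dependencies — while keeping the $O(\log n)$ recourse, the $O(n^{1+1/k} \log n)$ size, and the $(2k-1)$-stretch bound all simultaneously intact — is the main technical hurdle.
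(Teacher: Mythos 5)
Your level/bucketing skeleton is essentially the standard fully-dynamic-to-decremental reduction (the paper uses the binary-counter version of Baswana--Khurana--Sarkar for exactly this purpose, paying the same $\log n$ factors in size and recourse), but the part of the theorem that actually carries the difficulty is missing: you have no mechanism for handling \emph{deletions between rebuilds}, and that is where the paper's entire contribution lies. In your scheme, right after a rebuild of level $i$ an adaptive adversary can delete an edge of $H_i$; the deleted edge itself needs no coverage, but every edge of $G_i \setminus H_i$ whose greedy certificate (a short path in $H_i$) used that edge may now violate the stretch bound, and your plan only restores it at the next level-$i$ rebuild, up to $\Theta(2^i)$ updates later. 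Staggering the rebuild schedules across levels does not help here, because the broken edges have not migrated anywhere --- they are still sitting in $G_i$, unspanned. If instead you repair $H_i$ immediately after each deletion, you must argue the repairs have small total recourse; rebuilding $H_i$ from scratch costs up to $\Theta(n^{1+1/k})$ changes per deletion at the large levels, which destroys the $O(\log n)$ bound.

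The paper closes this gap with a deletion-only greedy algorithm with $O(m)$ \emph{total} recourse: when a spanner edge is deleted, re-run the greedy test only on the current non-spanner edges, in an arbitrary order. The key observation is that this is equivalent to running static greedy with an adaptively chosen order that places the surviving spanner edges first (each such edge still has $\dist_H \ge 2k$ when inspected in that order, so all of $E_H$ is retained), hence the output is always a genuine greedy spanner of size $O(n^{1+1/k})$, and no edge is ever voluntarily removed from $H$ --- so the total recourse over a deletion-only phase is at most $m$. Plugging that decremental routine into the level reduction yields the theorem. Without this (or an equivalent) decremental primitive, your construction either loses the stretch guarantee against an adaptive adversary or loses the recourse bound; a secondary, fixable issue is that your age-based windows shift on every time step, causing continuous migration of edges across levels, which the counter-based reduction avoids by moving edges only at rebuild events.
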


As the above algorithm is deterministic, it automatically works against
an adaptive adversary. Each update can be processed in polynomial
time. Both the recourse and stretch-size trade-off of \Cref{thm:main greedy}
are optimal up to a $O(\log n)$ factor. When ignoring the update
time, it even dominates the current best algorithm assuming an oblivious adversary
\cite{BaswanaKS12,ForsterG19} that maintains a $(2k-1)$-spanner
of size $O(n^{1+1/k}\log n)$ using $O(k\log^{2}n)$ recourse. Therefore,
the oblivious-vs-adaptive gap for amortized recourse is closed.

The algorithm of \Cref{thm:main greedy} is as simple as possible.
As it turns out, a variant of the classical greedy spanner algorithm
\cite{AlthoferDDJS93} simply does the job! Although the argument
is short and ``obvious'' in hindsight, for us, it was very surprising.
This is because the greedy algorithm \emph{sequentially} inspects edges in
some fixed order, and its output solely depends on this order. Generally,
long chains of dependencies in algorithms are notoriously hard to analyze
in the dynamic setting. More recently, a similar greedy approach was
dynamized in the context of dynamic maximal independent set \cite{BehnezhadDHSS19}
by choosing a \emph{random order }for the greedy algorithm\emph{.
}Unfortunately, the random order must be kept secret from the adversary
and so this fails completely in our adaptive setting. Despite these
intuitive difficulties, our key insight is that we can \emph{adaptively
choose the order} for the greedy algorithm after each update. This
simple idea is enough, see \Cref{sec:greedy} for details.

\Cref{thm:main greedy} leaves open the oblivious-vs-adaptive gap for
the update time. Below, we show a partial progress on this direction
by showing an algorithm with near-optimal recourse and simultaneously
non-trivial update time.
\begin{theorem}
\label{thm:main proactive}There exists a randomized algorithm that,
given an unweighted graph $G$ with $n$ vertices undergoing edge
insertions and deletions, with high probability maintains against
an adaptive adversary a $3$-spanner of $G$ containing $\Otil(n^{1.5})$
edges using $\Otil(1)$ amortized recourse \emph{and} $\Otil(\sqrt{n})$
worst-case update time.
\end{theorem}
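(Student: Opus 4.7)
My plan is to adapt the classical Baswana--Sen $3$-spanner construction to the dynamic adaptive setting. Statically, sample $S \subseteq V$ by including each vertex independently with probability $\Theta((\log n)/\sqrt{n})$, so that $|S| = \Otil(\sqrt n)$ w.h.p. Call a vertex $v$ \emph{heavy} if $N(v) \cap S \neq \emptyset$ and \emph{light} otherwise; by Chernoff every light vertex has degree $O(\sqrt n \log n)$ w.h.p. The spanner $H$ consists of: (i) all edges incident to a light vertex; (ii) for every heavy $v$, one chosen edge $(v, p(v))$ with $p(v) \in N(v) \cap S$, which induces the clusters $C_s = \{s\} \cup \{u : p(u)=s\}$; and (iii) for every vertex $v$ and every cluster $C_s$ with $N(v) \cap C_s \neq \emptyset$, one ``connector'' edge from $v$ into $C_s$. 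A standard argument gives $|H| = \Otil(n^{1.5})$ and stretch $3$.

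For the dynamic maintenance I would implement this construction \emph{lazily}: the edges $(v,p(v))$ and the connectors are replaced only when the edge realising them is deleted. For each vertex $v$ I would store, via hash tables, the list $N(v) \cap S$ and, for each neighbouring cluster $C_s$, the list $N(v) \cap C_s$. Each edge update alters $O(1)$ of these lists in $\Otil(1)$ time and triggers at most $O(1)$ reassignments, each of which scans a list of size $\Otil(\sqrt n)$ to find a replacement, for $\Otil(\sqrt n)$ worst-case update time overall. A simple potential argument then charges every recourse event to the deletion of an edge that was previously in $H$, yielding $\Otil(1)$ amortised recourse.

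The main obstacle is the \emph{adaptive} adversary: as in Baswana--Sen, the analysis sketched above is oblivious, since once the adversary sees $H$ she can repeatedly delete the specific chosen edges $(v,p(v))$ and connectors, eventually exhausting a cluster and forcing $v$ to turn light with $\Omega(\sqrt n)$ recourse even though $v$ still has many $S$-neighbours in $G$. To defeat this I would use \emph{proactive resampling}: attach to each vertex $v$ a counter of incident updates, and once it reaches $\Theta(\sqrt n)$, re-toss $v$'s membership in $S$ and re-pick $p(v)$ together with all connectors through $v$ uniformly from the current candidate lists, spreading the resulting $\Otil(\sqrt n)$ work and edge changes across the next $\Theta(\sqrt n)$ updates. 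At any point in time the random bits controlling a heavy vertex $v$ depend only on updates since its last refresh, of which there are fewer than $\sqrt n$; a fresh Chernoff bound then shows that $v$ still has $\Omega(\log n)$ neighbours in $S$ and that $p(v)$ is uniform over them with high probability independently of the adversary's current view. The delicate point, and where I expect the bulk of the work to lie, is scheduling these proactive refreshes (and the associated spanner rewrites) so that the $\Otil(1)$ amortised recourse and $\Otil(\sqrt n)$ worst-case update time budgets are preserved \emph{simultaneously} while many refreshes may be in progress at once; I expect this to be achievable through standard deamortisation combined with a careful accounting that treats each refresh as $\Theta(\sqrt n)$ ``virtual'' updates.
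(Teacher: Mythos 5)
Your plan has a structural gap that the paper deliberately avoids by not using Baswana--Sen-style clustering. In your construction the clusters are \emph{dynamic} objects: when the adversary deletes $(v,p(v))$, the vertex $v$ changes cluster, and this cascades. Every vertex $w$ whose connector into $C_{p_{\mathrm{old}}(v)}$ was the edge $(w,v)$ must be reassigned, and every neighbour's list $N(w)\cap C_s$ must be updated; since $\deg(v)$ can be $\Omega(n)$ and a cluster can contain $\Omega(n)$ vertices, a single deletion can force $\Omega(n)$ recourse and $\Omega(n)$ work, so your claims that ``each edge update alters $O(1)$ lists and triggers at most $O(1)$ reassignments'' and that a refresh costs only $\Otil(\sqrt n)$ edge changes are unjustified. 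The potential argument charging each recourse event to a deleted spanner edge also breaks, because one deleted spanner edge can invalidate many connectors simultaneously. This is exactly why the paper uses the (modified) Grossman--Parter construction with \emph{fixed} $\sqrt n$-size buckets: bucket membership never changes, $E_1,E_2$ are trivially maintainable, and the entire difficulty is isolated in the witness edges $E_3$, where the analogous ``one vertex serves as witness for many pairs'' problem is attacked head-on.

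The second gap is in your treatment of adaptivity. Your argument that ``the random bits controlling $v$ depend only on updates since its last refresh, so a fresh Chernoff bound applies independently of the adversary's view'' is precisely the kind of reasoning that fails against an adaptive adversary: between refreshes the adversary observes the maintained spanner, which reveals the realized random choices, so the deletions are correlated with them and the summands you would feed into Chernoff are not independent. The paper's proactive resampling is a different mechanism from your counter-triggered vertex refresh: after each ``touch'' of a pair it schedules resamplings of that pair's witness at exponentially spaced future times ($t+1,t+2,t+4,\dots$), which guarantees only $O(\log t)$ ``relevant'' experiments can influence any edge's load at any time; the load of each edge is then compared, via a stochastic-domination lemma, to a sum of genuinely independent Bernoulli variables, to which an additive-multiplicative Chernoff bound is applied. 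Without this (or some equivalent decoupling of the adversary's adaptivity from the randomness, and a per-pair rather than per-vertex resampling schedule), neither your $\Otil(1)$ amortized recourse nor your concentration claims are established, so the proposal as written does not prove the theorem.
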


We note again that, prior to the above result, there was no algorithm
against an adaptive adversary with $o(n)$ \emph{amortized update
time} that can maintain a spanner of stretch less than $\log^6(n)$.
\Cref{thm:main proactive} shows that for $3$-spanners, the update
time can be $\Otil(\sqrt{n})$ worst-case, while guaranteeing near-optimal
recourse. %
{}

We prove \Cref{thm:main proactive} by employing a technique called {\em proactive resampling}, which was recently introduced in \cite{Bernstein2020fully} for handling an adaptive adversary. We apply this technique on a modification of a spanner construction by Grossman and Parter  \cite{GrossmanP17} from distributed computation community. The modification is small, but seems inherently necessary for bounding the recourse.
%
%
%

To successfully apply proactive resampling, we refine the technique in two ways.
First, we present a simple abstraction in terms of a certain load balancing problem that captures the power of proactive resampling. Previously, the technique was presented and applied specifically for the dynamic cut sparsifier problem \cite{Bernstein2020fully}. But actually, this technique is conceptually simple and quite generic, so our new abstraction will likely facilitate future applications.
%
Our second technical contribution is to generalize and make the proactive resampling technique more flexible. At a very high level, in \cite{Bernstein2020fully}, there is a single parameter about sampling probability that is \emph{fixed} throughout the whole process, and their analysis requires this fact. In our load-balancing abstraction, we need to work with multiple sampling probabilities and, moreover, they change through time. We manage to analyze this generalized process using probabilistic tools about \emph{stochastic domination}, which in turn  simplifies the whole analysis.


If a strong recourse bound is not needed,
then proactive resampling can be bypassed and the algorithm becomes very simple, deterministic, and has slightly
improved bounds as follows.
\begin{theorem}
\label{thm:main det worst case}There exists a deterministic algorithm
that, given an unweighted graph $G$ with $n$ vertices undergoing
edge insertions and deletions, maintains a $3$-spanner of $G$ containing
$O(n^{1.5})$ edges using $O(\min\{\Delta,\sqrt{n}\}\log n)$ worst-case
update time, where $\Delta$ is the maximum degree of $G$.
\end{theorem}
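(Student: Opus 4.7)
The plan is to maintain a deterministic cluster-based $3$-spanner. Partition $V$ by the degree threshold $\sqrt{n}$: call $v$ \emph{light} if $\deg(v)\le\sqrt{n}$, else \emph{heavy}. I would maintain a set $\C\subseteq V$ of at most $O(\sqrt{n})$ cluster centers so that every heavy vertex lies in $\C\cup N(\C)$, together with an assignment $c(v)\in\C\cap N[v]$ of a cluster center to every heavy vertex $v$. The spanner $H$ then consists of: (i) every edge incident to at least one light vertex, (ii) the star edges $(v,c(v))$ for every heavy $v\notin\C$, and (iii) for each heavy $v$ and each center $c'\in\C\setminus\{c(v)\}$, one bridge edge from $v$ to some vertex of $c'$'s cluster whenever such an edge exists.

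Sizes: piece (i) contains at most $n\sqrt{n}$ edges since each light vertex has degree $\le\sqrt{n}$; piece (ii) contains at most $n$ edges; and piece (iii) contributes at most $n\cdot|\C|=O(n^{1.5})$ edges, yielding $|H|=O(n^{1.5})$ overall. Stretch-$3$ follows from the usual argument: for a missing edge $(u,v)\in E\setminus H$ both endpoints are heavy, and either $c(u)=c(v)$---giving the 2-path $u\to c(u)\to v$ in $H$---or the bridge $(u,w)$ with $w$ in $c(v)$'s cluster yields the 3-path $u\to w\to c(v)\to v$.

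For dynamic maintenance I would keep, for each vertex $v$, a balanced BST of its neighbors indexed by the current cluster of the other endpoint, so that for each heavy $v$ and each center $c'$ the bridge edge from $v$ into $c'$'s cluster can be located or refreshed in $O(\log n)$ time. On an edge update at $(u,v)$: (a) adjacency and degree bookkeeping takes $O(\log n)$; (b) at most two vertices can cross the heavy/light threshold, and in the light-to-heavy case we recompute the cluster center and all $|\C|$ bridges in $O(\sqrt{n}\log n)$; (c) when the heavy/light status of $u,v$ is unchanged, only the bridge entries whose representative was $u$ or $v$ need to be re-examined, which again costs $O(\sqrt{n}\log n)$ via the cluster-indexed BSTs. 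When $\Delta\le\sqrt{n}$ no vertex is heavy and $H=G$, so each update reduces to $O(\log n)$, explaining the $\min\{\Delta,\sqrt{n}\}$ factor.

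The main obstacle is keeping $|\C|=O(\sqrt{n})$ deterministically with worst-case (rather than amortized) guarantees. A purely reactive rule---add a newly heavy uncovered vertex to $\C$---grows $\C$ slowly but can accumulate over time; I would combine it with a periodic greedy rebuild of $\C$ over the current heavy subgraph, spread over the next $\Theta(\sqrt{n})$ consecutive updates via double-buffering so that the rebuild cost is absorbed into the $O(\sqrt{n}\log n)$ per-update budget. Correctness of the spanner is preserved during the rebuild because the previous snapshot of $\C$ remains a valid covering until the new one is switched in.
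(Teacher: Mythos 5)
There is a genuine gap, and it sits exactly at the part you hand-wave. Your claimed $O(\sqrt{n}\log n)$ worst-case repair only covers the case where cluster memberships do not change. The hard case is when they do: if the adversary deletes a star edge $(v,c(v))$ of a heavy vertex $v$ with $\deg(v)\gg\sqrt{n}$, then $v$ must be reassigned to a new center (or promoted to a center), i.e.\ $v$ changes cluster. At that moment every neighbor $x$ of $v$ whose chosen bridge into $v$'s \emph{old} cluster was the edge $(x,v)$ now has an invalid bridge, and every neighbor's ``BST indexed by the current cluster of the other endpoint'' holds a stale key for $v$; there can be $\Theta(\deg(v))=\Theta(n)$ such neighbors, so a single edge deletion forces $\Theta(n)$ repairs, blowing the $O(\sqrt{n}\log n)$ worst-case budget (and deferring the repairs threatens the stretch-3 guarantee in the meantime). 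The same cascade is triggered wholesale by your periodic rebuild: recomputing $\mathcal{C}$, reassigning clusters and refreshing all heavy-vertex$\times$center bridge entries costs $\Omega(n\sqrt{n})$, so spreading it over $\Theta(\sqrt{n})$ updates gives $\Theta(n)$ per update, not $O(\sqrt{n}\log n)$; spreading it over $\Theta(n)$ or more updates would require proving that the purely reactive rule keeps $|\mathcal{C}|=O(\sqrt{n})$ throughout a long window against an (unavoidably adaptive, since the algorithm is deterministic) update sequence, which you do not argue and which also controls the $O(n\cdot|\mathcal{C}|)$ size bound. A smaller but real issue: a dominating set of the heavy vertices of size $O(\sqrt{n})$ need not even exist (the fractional bound is $\sqrt{n}$ but the integrality gap is logarithmic), so a deterministic greedy only gives $|\mathcal{C}|=O(\sqrt{n}\log n)$ and spanner size $O(n^{1.5}\log n)$, weaker than the stated $O(n^{1.5})$.

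For contrast, the paper's construction is designed precisely to avoid reassignment churn: it uses a \emph{fixed, arbitrary} partition of $V$ into $\sqrt{n}$ buckets of size $\sqrt{n}$ (no degree thresholds, so membership never changes), keeps one edge from each vertex into each bucket where it has a neighbor, all intra-bucket edges, and for each same-bucket pair $u,u'$ one common-neighbor ``witness'' 2-path, maintained via the sets $N(u)\cap N(u')$. A deleted edge $(u,v)$ then invalidates only the at most $\min\{\Delta,\sqrt{n}\}$ witness pairs that actually used that edge, each repaired in $O(\log n)$ time, giving the $O(\min\{\Delta,\sqrt{n}\}\log n)$ worst-case bound fully dynamically with no rebuilds. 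If you want to salvage your route, you would need a mechanism that either freezes cluster membership or bounds the number of bridges referencing a reassigned vertex by $O(\sqrt{n})$ per update in the worst case; as written, the proposal does not establish the theorem.
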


Despite its simplicity, the above result improves the update time
of the fastest deterministic dynamic 3-spanner algorithm \cite{AusielloFI06}
from $O(\Delta)$ amortized to $O(\min\{\Delta,\sqrt{n}\}\log n)$
worst-case. In fact, all previous dynamic spanner algorithms with
worst-case update time either assume an oblivious adversary \cite{Elkin11,BodwinK16,BernsteinFH19}
or have a very large stretch of $n^{o(1)}$ \cite{Bernstein2020fully}.
See \Cref{tab:state of the art} for detailed comparison.

\begin{table}
    \begin{adjustwidth}{-.5cm}{}
\footnotesize{
\begin{centering}
\begin{tabular}{|l|c|c|c|c|c|}
\hline
\textbf{Ref.}  & \textbf{Stretch}  & \textbf{Size}  & \textbf{Recourse} & \textbf{Update Time} & \textbf{Deterministic?}\tabularnewline
\hline
\hline
\multicolumn{6}{|l|}{\textbf{Against an oblivious adversary}}\tabularnewline
\hline
\cite{BaswanaKS12} & $2k-1$  & $O(k^{8}n^{1+1/k}\log^{2}n)$  & \multicolumn{2}{c|}{$O(7^{k/2})$ amortized} & rand. oblivious\tabularnewline
\hline
\cite{BaswanaKS12,ForsterG19} & $2k-1$ & $O(n^{1+1/k}\log n)$  & \multicolumn{2}{c|}{$O(k\log^{2}n)$ amortized} & rand. oblivious\tabularnewline
\hline
\cite{BernsteinFH19} & $2k-1$  & $\tilde{O}(n^{1+1/k})$  & \multicolumn{2}{c|}{$O(1)^{k}\log^{3}n$ worst-case} & rand. oblivious\tabularnewline
\hline
\hline
\multicolumn{6}{|l|}{\textbf{Against an adaptive adversary}}\tabularnewline
\hline
\multirow{2}{*}{\cite{AusielloFI06}} & $3$ & $O(n^{1+1/2})$  & \multicolumn{2}{c|}{$O(\Delta)$ amortized} & deterministic \tabularnewline
\cline{2-6}
 & $5$ & $O(n^{1+1/3})$  & \multicolumn{2}{c|}{$O(\Delta)$ amortized} & deterministic \tabularnewline
\hline
\multirow{2}{*}{\cite{Bernstein2020fully}} & \textbf{$\Otil(1)$ } & \textbf{$\Otil(n)$ } & \multicolumn{2}{c|}{\textbf{$\Otil(1)$ }amortized} & rand. adaptive\tabularnewline
\cline{2-6}
 & $n^{o(1)}$  & $\Otil(n)$  & \multicolumn{2}{c|}{$n^{o(1)}$ worst-case} & deterministic \tabularnewline
\hline
\multirow{3}{*}{\textbf{Ours}} & $2k-1$ & $O(n^{1+1/k}\log n)$  & $O(\log n)$ amortized & $\poly(n)$ worst-case & deterministic \tabularnewline
\cline{2-6}
 & $3$ & $\Otil(n^{1+1/2})$ & $\Otil(1)$ amortized & $\Otil(\sqrt{n})$ worst-case & rand. adaptive\tabularnewline
\cline{2-6}
 & $3$ & $O(n^{1+1/2})$ & \multicolumn{2}{c|}{$O(\min\{\Delta,\sqrt{n}\}\log n)$ worst-case} & deterministic\tabularnewline
\hline
\end{tabular}
\par\end{centering}
}
\end{adjustwidth}

\caption{\label{tab:state of the art}The state of the art of fully dynamic
spanner algorithms.}
\end{table}

\medskip\noindent
\textbf{Organization.}  In \Cref{sec:greedy}, we give a very short proof of \Cref{thm:main greedy}.
In \Cref{sec:3spanner}, we prove \Cref{thm:main proactive} assuming a crucial lemma (\Cref{th:proactive:resampling:main}) needed for bounding the recourse.
To prove this lemma, we show a new abstraction for the proactive resampling technique in \Cref{sect:job-machine} and complete the analysis in \Cref{sec:reduce to load}.
Our side result, \Cref{thm:main det worst case}, is based on the the static construction presented in \Cref{sub:sec:static} and its simple proof is given in \Cref{sub:sec:dynamic:1}.

\section{Deterministic Spanner with Near-optimal Recourse}
\label{sec:greedy}
Below, we show a decremental algorithm that \emph{handles edge deletions only} with near-optimal recourse. This will imply \Cref{thm:main greedy} by a known reduction formally stated in \Cref{lemma:fully_dyn_reduction}. To describe our decremental algorithm, let us first recall the classic greedy algorithm.



\medskip\noindent
\textbf{The Greedy Algorithm.} 
Alth\"{o}fer et al.~\cite{dcg/Althofer93} showed the following algorithm for computing $(2k-1)$-spanners.
Given a graph $G=(V,E)$ with $n$ vertices, fix some \emph{order} of edges in $E$. Then, we inspect each edge one by one according to the order.
Initially $E_H = \emptyset$. When we inspect $e=(u,v)$, if $\dist_H(u,v) \geq 2k$, then add $e$ into $E_H$. Otherwise, ignore it. We have the following:

\begin{theorem}[\cite{dcg/Althofer93}]
	\label{thm:greedy classic}
    The greedy algorithm above
    outputs a $(2k-1)$-spanner $H=(V,E_H)$ of $G$ containing $O(n^{1+1/k})$ edges.
\end{theorem}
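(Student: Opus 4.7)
The plan is to establish the two parts of the theorem separately: (i) $H$ is a $(2k-1)$-spanner of $G$, and (ii) $|E_H| = O(n^{1+1/k})$.

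For the stretch bound, I would argue edge-by-edge. Consider any edge $e = (u,v) \in E$. When the greedy algorithm inspected $e$, either it added $e$ to $E_H$ (so $\dist_H(u,v) \leq 1$), or else it skipped $e$ because the partial spanner at that moment already contained a $u$-to-$v$ path of length at most $2k-1$. Since edges are only added and never removed during the greedy pass, this path persists in the final $H$, so $\dist_H(u,v) \leq 2k-1$ for every edge of $G$. Extending along any shortest $G$-path via the triangle inequality then yields $\dist_H(s,t) \leq (2k-1)\cdot\dist_G(s,t)$ for all $s, t \in V$, establishing the spanner property.

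For the size bound, the key invariant to maintain is that $H$ has \emph{girth at least $2k+1$}. I would prove this by induction on the number of edges added so far. Initially $H$ is empty and the invariant holds vacuously. Whenever the greedy rule adds an edge $(u,v)$, the only new cycles in the updated spanner are those using $(u,v)$; each such cycle consists of $(u,v)$ together with a $u$-to-$v$ path in the pre-addition spanner, which by the greedy rule has length at least $2k$, yielding a cycle of length $\geq 2k+1$. Cycles not containing the new edge were present before and already satisfy the bound by the inductive hypothesis. Once the girth invariant is in place, the size bound follows from the classical Moore-type argument: a BFS from any vertex $v$ of degree $d$ must be a tree up to depth $k$ (otherwise a short cycle is produced), giving $1 + d + d(d-1) + \cdots + d(d-1)^{k-1} \leq n$, which forces $d = O(n^{1/k})$ and hence $|E_H| = O(n^{1+1/k})$.

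The whole argument is essentially immediate once the girth invariant is identified, so I do not anticipate any genuine obstacle; the only non-trivial ingredient is the Moore-type degree bound, which I would either sketch in one line as above or simply invoke as a classical combinatorial fact. The cleanness of this proof is in fact what makes the greedy algorithm a natural candidate for dynamization in the subsequent section.
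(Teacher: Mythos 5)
You have correctly identified the two standard ingredients, and indeed the paper itself gives no proof of \Cref{thm:greedy classic} (it is quoted from Alth\"ofer et al.), so the girth-based route you take is exactly the classical one: the stretch argument (each skipped edge already has a path of length at most $2k-1$ that persists, then the triangle inequality along shortest paths) and the invariant that $H$ has girth at least $2k+1$ are both correct and are the intended proof.

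The one genuine problem is your one-line justification of the extremal step. Girth at least $2k+1$ does \emph{not} force every vertex to have degree $O(n^{1/k})$: a star $K_{1,n-1}$ has no cycles at all and a vertex of degree $n-1$, and more to the point, a BFS from a single vertex $v$ of degree $d$ need not branch below the first level, so the count $1 + d + d(d-1) + \cdots + d(d-1)^{k-1} \le n$ is not valid when only $v$ is known to have degree $d$; that counting requires a \emph{minimum}-degree hypothesis on all vertices encountered. Consequently ``$d = O(n^{1/k})$ for every vertex, hence $|E_H| = O(n^{1+1/k})$'' does not follow as written. The standard repair is to bound edges rather than degrees: if $|E_H| \ge c\, n^{1+1/k}$, pass to a nonempty subgraph of minimum degree at least half the average degree (obtained by repeatedly deleting vertices of degree below half the average, which cannot delete all edges), and run your depth-$k$ BFS tree argument inside that subgraph; this gives roughly $(|E_H|/n)^k \le n$ and hence $|E_H| = O(n^{1+1/k})$. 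Since you also offered to invoke the extremal girth bound as a black-box classical fact, the gap is easily fixed, but the sketch as stated would not compile into a proof.
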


It is widely believed that the greedy algorithm is extremely bad in dynamic setting:
an edge update can drastically change the greedy spanner.
In contrary, when we allow the order in which greedy scans the edges to be changed adaptively, we
can avoid removing spanner edges until it is deleted by the adversary.
This key insight leads to optimal recourse.
When recourse is the only concern, prior to our work this result was known only for spanners
with polylog stretch, which is a much easier problem.

\noindent
\textbf{The Decremental Greedy Algorithm.}
Now we describe our deletion-only algorithm. Let $G$ be an initial graph with $m$ edges and $H = (V,E_H)$ be a $(2k-1)$-spanner with $O(n^{1+1/k})$ edges. Suppose an edge $e = (u,v)$ is deleted from the graph $G$. If $(u,v) \notin E_H$, then we do nothing.
Otherwise, we do the following. We first remove $e$ from $E_H$. Now we look at the only remaining non-spanner edges $E \setminus E_H$, one by one in an arbitrary order.
(Note that the order is \emph{adaptively} defined and not fixed through time because it is defined only on $E \setminus E_H$.)
When we inspect $(x,y) \in E \setminus E_H$, as in the greedy algorithm, we ask if $\dist_H(x,y) \geq 2k$ and add $(x,y)$ to $E_H$ if and only if it is the case. This completes the description of the algorithm.
%

\medskip\noindent
\textbf{Analysis.}
We start with the most crucial point. We claim that the new output after removing $e$ is as if we run the greedy algorithm that first inspects edges in $E_{H}$ (the order within $E_{H}$ is preserved) and later inspects edges in $E\setminus E_{H}$.

To see the claim, we argue that if the greedy algorithm inspects $E_H$ first, then the whole set $E_{H}$ must be included, just like $E_H$ remains in the new output.
To see this, note that, for each $(x,y)\in E_{H}$, $\dist_{H}(x,y)\ge2k$ when $(x,y)$ was inspected according to some order. But, if we move the whole set $E_{H}$ to be the prefix of the order (while the order within $E_H$ is preserved), it must still be the case that $\dist_{H}(x,y)\ge2k$ when $(x,y)$ is inspected and so $e$ must be added into the spanner by the greedy algorithm.

So our algorithm indeed ``simulates'' inspecting $E_H$ first, and then it explicitly implements the greedy algorithm on the remaining part $E\setminus E_{H}$. So we conclude that it simulates the greedy algorithm. Therefore, by \Cref{thm:greedy classic}, the new output is a $(2k-1)$-spanner with $O(n^{1+1/k})$ edges.

The next important point is that, whenever an edge $e$ added into the spanner $H$, the algorithm never tries to remove $e$ from $H$. So $e$ remains in $H$ until it is deleted by the adversary. Therefore, the total recourse is $O(m)$. With this, we conclude the following key lemma:


\begin{lemma}
    \label{thm:greedy_spanner}
    Given a graph $G$ with $n$ vertices and $m$ initial edges undergoing only edge deletions, the algorithm above maintains a $(2k-1)$-spanner $H$ of $G$ of size $O(n^{1+1/k})$ with $O(m)$ total recourse.
\end{lemma}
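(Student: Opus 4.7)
The plan is to prove the lemma by maintaining the following invariant throughout the deletion sequence: there is an ordering $\sigma$ of the current edge set $E$ in which the edges of $E_H$ appear as a prefix (in the order in which they were added to $H$ over time), and such that $H$ is exactly the output of the classical greedy algorithm of \Cref{thm:greedy classic} applied to $E$ under order $\sigma$. If this invariant holds at every step, then \Cref{thm:greedy classic} immediately gives both the $(2k-1)$-stretch property and the $O(n^{1+1/k})$ size bound.

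To establish the invariant, I would proceed by induction on the number of deletions. The base case follows by initializing $H$ to the greedy spanner. For the inductive step, consider the deletion of an edge $e$. If $e \notin E_H$, then $H$ is unchanged, and the order $\sigma$ with $e$ removed still witnesses the invariant. If $e \in E_H$, the algorithm removes $e$ and then scans $E \setminus E_H$ in some arbitrary order, adding each edge $(x,y)$ exactly when $\dist_H(x,y) \ge 2k$. I need to show the resulting set is precisely the output of greedy run with the prefix being $E_H \setminus \{e\}$ (in the original internal order), followed by $E \setminus E_H$ in the arbitrary order used by the algorithm.

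The heart of the argument, and the main obstacle, is showing that when greedy is run with prefix $E_H \setminus \{e\}$ in the original internal order, \emph{every} edge in that prefix is actually added to the spanner. Fix any $(x,y) \in E_H \setminus \{e\}$, and let $H_{x,y}$ denote the spanner that existed at the moment $(x,y)$ was originally added; by definition of the algorithm at that time, $\dist_{H_{x,y}}(x,y) \ge 2k$. When greedy now processes $(x,y)$ in the new run, the spanner-so-far consists of exactly those edges of $E_H$ that precede $(x,y)$ in the insertion order and that survive in the current graph; this is a subset of $H_{x,y}$ (since we are in a deletion-only setting and no new edges can appear in the prefix). Because distance in a subgraph is monotone nondecreasing as edges are removed, the spanner-so-far still has distance at least $2k$ between $x$ and $y$, so greedy adds $(x,y)$. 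Consequently, the prefix run reproduces $E_H \setminus \{e\}$, after which the algorithm's scan of $E \setminus E_H$ faithfully executes the remainder of greedy. This reestablishes the invariant.

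Finally, the recourse bound is immediate: in a decremental setting, once an edge is placed in $E_H$ it is removed from $E_H$ only when the adversary deletes it from $G$, so each edge contributes at most one insertion and one deletion to $H$. Summing over all edges ever present (at most $m$ initial edges plus zero insertions by the adversary) gives total recourse $O(m)$, completing the proof.
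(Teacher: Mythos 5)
Your proposal is correct and follows essentially the same route as the paper: both arguments show that the maintained $H$ coincides with the output of the greedy algorithm run under an adaptively chosen order placing the surviving spanner edges first (using that each such edge satisfied $\dist \ge 2k$ at its insertion time, together with monotonicity of distances under edge removal), then invoke the static greedy theorem for stretch and size, and bound recourse by noting a spanner edge is never removed until the adversary deletes it. Your version merely spells out the induction and the subset/monotonicity step more explicitly; the only cosmetic slip is that the witnessing order you exhibit after a deletion does not literally have the new $E_H$ as a prefix, but your key argument never uses that clause of the invariant, so nothing breaks.
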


By plugging \Cref{thm:greedy_spanner} to the fully-dynamic-to-decremental reduction by \cite{BaswanaKS12} below, we conclude \Cref{thm:main greedy}. We also include the proof of \Cref{lemma:fully_dyn_reduction} in \Cref{sec:reduction} for completeness.

\begin{lemma}[\cite{BaswanaKS12}]
	\label{lemma:fully_dyn_reduction}
	Suppose that for a graph $G$ with $n$ vertices and $m$ initial edges undergoing only edge deletions, there is an algorithm that maintains a $(2k-1)$-spanner $H$ of size $O(S(n))$  with $O(F(m))$ total recourse where $F(m) = \Omega(m)$,
	then there exists an algorithm that maintains a $(2k-1)$-spanner $H'$ of size $O(S(n) \log n)$
	in a fully dynamic graph with $n$ vertices using  $O(F((U) \log n))$ total recourse. Here $U$ is the total number of updates, starting from an empty graph.
\end{lemma}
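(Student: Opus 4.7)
The plan is to apply a Bentley--Saxe-style binary-counter layering on top of the decremental algorithm $\mathcal{A}$ provided by the hypothesis. I would maintain $O(\log n)$ levels indexed $i = 0, 1, \ldots, L$; at each level $i$ I store a disjoint subset $E_i$ of the current edge set with the invariant $|E_i| \le 2^i$, and I run a private instance of $\mathcal{A}$ that maintains a $(2k-1)$-spanner $H_i$ of $(V, E_i)$ of size $\min\{2^i, O(S(n))\}$. The output of the fully dynamic algorithm is the union $H' := \bigcup_i H_i$, whose size is automatically $O(S(n) \log n)$.

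Next I would specify how updates are routed. A deletion is forwarded to whichever level currently owns the edge, and handled by that level's decremental instance. An insertion is placed at level $0$; if this breaks the invariant, I cascade as in a binary-counter increment: collect all edges from the lowest few nonempty levels together with the new edge into the smallest level $j$ that is currently empty, destroying the old instances at levels $0, \ldots, j-1$ and spawning a fresh $\mathcal{A}$-instance at level $j$ on these (at most $2^j$) edges. Correctness of the stretch bound is the key conceptual step: since the $E_i$ partition $E(G)$, each edge $(u,v) \in E(G)$ lies in a unique $E_i$ and satisfies $\dist_{H'}(u,v) \le \dist_{H_i}(u,v) \le 2k-1$; concatenating this along a shortest $G$-path between any two vertices certifies that $H'$ is a $(2k-1)$-spanner of $G$, so crucially the stretch does not compound across levels.

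The main work is the recourse analysis, which I would split into two pieces and bound by the standard binary-counter amortization. The \emph{structural} recourse, counting edges added to and removed from $H'$ whenever instances are spawned or destroyed, sums to $\sum_i (U/2^i) \cdot \min\{2^i, O(S(n))\} = O(U \log n)$, which is absorbed into $O(F(U) \log n)$ since $F(U) = \Omega(U)$. The \emph{internal} recourse, coming from $\mathcal{A}$'s own deletion recourse within each instance, sums to $\sum_i (U/2^i) \cdot O(F(2^i))$, which collapses to $O(F(U) \log n)$ whenever $F(m)/m$ is monotone (the natural regime, satisfied in particular by the linear $F$ from \Cref{thm:greedy_spanner}). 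The only real subtlety I anticipate is the bookkeeping around the $\min\{2^i, O(S(n))\}$ term, which is what prevents the structural recourse from blowing up to $O(U \cdot S(n))$; once that is observed, checking that each unit of recourse is attributed to exactly one rebuild event is routine because the partition invariant makes the instances operate on disjoint edge sets.
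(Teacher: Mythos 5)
Your proposal is correct and is essentially the paper's own argument: the proof in the appendix (following Baswana--Khurana--Sarkar) also partitions the edges into $O(\log n)$ levels governed by a binary counter, runs the decremental algorithm on each level, uses the union-of-spanners observation for the stretch, and charges rebuild recourse per level exactly as in your amortization, with only minor bookkeeping differences (an explicit insertion counter offset so the lowest level is a buffer of size $\approx n^{1+1/k}$, and superadditivity of $F$ in place of your monotone-ratio condition).
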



\section{$3$-Spanner with Near-optimal Recourse and Fast Update Time}
\label{sec:3spanner}

In this section, we prove \Cref{thm:main proactive} by showing an algorithm for maintaining a $3$-spanner with small update time {\em in addition} to having small recourse. We start by explaining a basic static construction and needed data structures in \Cref{sub:sec:static} and show the dynamic algorithm in \Cref{sub:sec:dynamic:1}.
Assuming our key lemma (\Cref{th:proactive:resampling:main}) about proactive resampling, most details here are quite straight forward. Hence, due to space constraint,
most proofs are either omitted or deferred to \Cref{sec:app:missing:3spanner}.

Throughout this section,
we let $N_G(u) = \{ v \in V : (u, v) \in E \}$ denote the set of neighbors of a node $u \in V$ in a graph $G = (V, E)$, and we let $\text{deg}_G(u) = |N_G(u)|$ denote the degree of the node $u$ in the graph $G$.

\subsection{A Static Construction and Basic Data Structures}
\label{sub:sec:static}

\textbf{A Static Construction.} We now describe our static algorithm.
Though our presentation is different, our algorithm is almost identical to \cite{GrossmanP17}.
The only difference is that we do not distinguish small-degree vertices from large-degree vertices.

We first arbitrarily partition $V$ into $\sqrt{n}$ equal-sized \emph{buckets} $V_1, \ldots, V_{\sqrt{n}}$.
We then construct three sets of edges $E_1, E_2, E_3$.
For every bucket $V_i, i \in [1,\sqrt{n}]$, we do the following.
First, for all $v \in V \setminus V_i$, if $V_i \cap N_G(v)$ is not empty, we choose a neighbor $c_i(v) \in V_i \cap N_G(v)$ and add $(v,c_i(v))$ to $E_1$. We call $c_i(v)$ an \emph{$i$-partner} of $v$.
Next, for every edge $e=(u,v)$, where both $u,v \in V_i$, we add $e$ to $E_2$.
Lastly, for $u,u' \in V_i$ with overlapping neighborhoods, we
pick an arbitrary common neighbor $w_{uu'} \in N_G(u) \cap N_G(u')$ and add $(u,w_{uu'}),(w_{uu'},u')$ to $E_3$.
We refer to the node $w_{uu'}$ as the \emph{witness} for the pair $u,u'$.

\begin{claim}
    \label{cl:static:stretch-size}
    The subgraph $H = (V, E_1 \cup E_2 \cup E_3)$ is a $3$-spanner of $G$ consisting of at most $O(n\sqrt{n})$ edges.
\end{claim}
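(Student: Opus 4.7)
The claim has two parts: an $O(n^{1.5})$ size bound and the $3$-spanner stretch bound. The plan is to handle them in turn, with the size bound being a direct counting exercise and the stretch bound reducing to a small case analysis based on which buckets the endpoints of an edge lie in.

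For the size bound, I would bound each of $E_1$, $E_2$, $E_3$ separately. For $E_1$, the algorithm adds at most one edge per pair $(V_i, v)$ with $v\in V\setminus V_i$, so $|E_1|\le \sqrt{n}\cdot n=n^{1.5}$. For $E_2$, each bucket $V_i$ has size $\sqrt{n}$ and can contribute at most $\binom{\sqrt{n}}{2}=O(n)$ internal edges, giving $|E_2|=O(\sqrt{n}\cdot n)=O(n^{1.5})$. For $E_3$, the algorithm picks at most one witness $w_{uu'}$ per pair $\{u,u'\}\subseteq V_i$ and adds two edges per such pair, so $|E_3|\le 2\sum_i \binom{|V_i|}{2}=O(n^{1.5})$. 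Summing yields the claimed size.

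For the stretch bound, it suffices to show that every edge $(u,v)\in E$ satisfies $\dist_H(u,v)\le 3$. Say $u\in V_i$. If $v\in V_i$ as well, then $(u,v)\in E_2$, so the distance is $1$. Otherwise $v\in V\setminus V_i$ and $u\in V_i\cap N_G(v)\ne\emptyset$, so an $i$-partner $c_i(v)\in V_i$ exists and the edge $(v,c_i(v))$ lies in $E_1$. If $c_i(v)=u$, then $(u,v)\in E_1$ and the distance is $1$. Otherwise $u$ and $c_i(v)$ are distinct vertices of $V_i$ with $v$ a common neighbor in $G$, so their neighborhoods overlap and a witness $w=w_{u,c_i(v)}$ is chosen, giving edges $(u,w),(w,c_i(v))\in E_3$. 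Combined with $(c_i(v),v)\in E_1$, this yields the path $u\to w\to c_i(v)\to v$ in $H$ of length at most $3$. (The degenerate subcase $w=v$ only shortens the path, since then $(u,v)\in E_3$ directly.)

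The two parts together prove the claim. I do not anticipate a real obstacle: the construction is carefully designed so that every edge type has a bucket-bounded contribution to the size, and the $i$-partner plus witness pair gives a canonical length-$3$ detour in $H$ for any cross-bucket edge. The only small care needed is to verify the degenerate cases $c_i(v)=u$ and $w_{u,c_i(v)}=v$, which both collapse to shorter paths rather than breaking the argument.
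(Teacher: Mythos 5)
Your proposal is correct and follows essentially the same argument as the paper: separate counting of $E_1,E_2,E_3$ for the size, and for the stretch a path through a partner and a witness (the paper routes via $c_j(u)$ in $v$'s bucket, you route symmetrically via $c_i(v)$ in $u$'s bucket, which is an immaterial difference). Your handling of the degenerate cases ($c_i(v)=u$, $w=v$) is fine and, if anything, slightly more careful than the paper's.
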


\noindent
\textbf{Dynamizing the Construction.} Notice that it suffices to separately maintain $E_1, E_2, E_3$, in order to maintain the above dynamic $3$-spanner.
Maintaining  $E_1$ and $E_2$ is straightforward and can be done in a fully-dynamic setting in $O(1)$ worst-case update time. Indeed, if  $e = (u,c_i(u))\in E_1$ is deleted, then we pick a new $i$-partner $c_i(u) \in V_i \cap N_G(u)$ for $u$. Maintaining $V_i \cap N_G(u)$ for all $u$ allows us to update $c_i(u)$ efficiently.
If $e=(u,u') \in E_2$, where $u,u' \in V_i$, is deleted, then we do nothing.

The remaining task, maintaining $E_3$, is the most challenging part of our dynamic algorithm.
Before we proceed, let us define a subroutine and a data structure needed to implement our algorithm.

\medskip\noindent
\textbf{Resampling Subroutine.}
We define {\sc Resample}$(u,u')$ as a subroutine that \emph{uniformly samples} a witness $w_{uu'}$ (i.e.~a common neighbor of $u$ and $u'$), if exists. Notice that, we can obtain $E_3$ by calling {\sc Resample}$(u,u')$ for all $u,u' \in V_i$ and for all $i \in [1,\sqrt{n}]$.


\medskip\noindent
\textbf{Partnership Data Structures.}
The subroutine above hints that we need a data structure for maintaining the common neighborhoods
for all pairs of vertices that are in the same bucket.
For vertices $u$ and $v$ within the same bucket, we let $P(u,v) = N_G(u) \cap N_G(v)$ be the \emph{partnership} between $u$ and $v$.
To maintain these structures dynamically, when an edge $(u,v)$ is inserted, if $u \in V_i$ and $v \in V_j$, we add $v$ to $P(u,u')$ for all $u' \in V_i \cap N_G(v) \setminus \{u\}$, and symmetrically add $u$ to $P(v,v')$ for all $v' \in V_j \cap N_G(u) \setminus \{v\}$.
This clearly takes $O(\sqrt{n} \log{n})$ worst-case time for edge insertion. and this is symmetric for edge deletion.

As we want to prove that our final update time is $\Otil(\sqrt{n})$, we can assume from now that $E_1,E_2$, and all partnerships are maintained in the background.


\subsection{Maintaining Witnesses via Proactive Resampling}
\label{sub:sec:dynamic:1}

\textbf{Remark.} For clarity of exposition, we will present an amortized update time analysis.
Using standard approach, we can make the update time worst case. We will discuss this issue at the end of this section.

Our dynamic algorithm runs in {\em phases}, where each phase lasts for $n \sqrt{n}$ consecutive updates (edge insertions/deletions).
As a spanner is decomposable\footnote{Let $G_1 = (V,E_1)$ and $G_2 = (V,E_2)$. If $H_1$ and $H_2$ are $\alpha$-spanners $G_1$ and $G_2$ respectively, then $H_1 \cup H_2$ is a $\alpha$-spanner of $G_1 \cup G_2$.},
it suffices to maintain a $3$-spanner $H$ of the graph undergoing only edge deletions within this phase and then include all edges inserted within this phase into $H$, which increases the size of $H$ by at most $n\sqrt{n}$ edges. Henceforth, \emph{we only need to present how to initialize a phase and how to handle edge deletions within each phase.} The reason behind this reduction is because our proactive resampling technique naturally works for decremental graphs.

\medskip\noindent
\textbf{Initialization.} At the start of the phase, since our partnerships structures only processed edge deletions from the previous phase,
we first update partnerships with all the $O(n\sqrt{n})$ inserted edges from the previous phase. Then, we call {\sc Resample}$(u,u')$ for all $u,u' \in V_i$ for all $i \in [1,\sqrt{n}]$ to replace all witnesses and initialize $E_3$ of this phase.




\medskip\noindent
\textbf{Difficulty of Bounding Recourse.}
Maintaining $E_3$ (equivalently, the witnesses) in $\Otil(\sqrt{n})$ worst-case time is straightforward because the partnership data structure has $O(\sqrt{n} \log{n})$ update time. However, our goal is to show $\Otil(1)$ amortized recourse, which is the most challenging part of our analysis.
To see the difficulty, if $(u,v)$ is deleted and $u \in V_i$,
a vertex $v$ may serve as a witness  $\{w_{uu'}\}$ for all $u' \in V_i$.
In this case, deleting $(u,v)$ causes the algorithm to find a new witness $w_{uu'}$ for all $u' \in V_i$.
This implies a recourse of $|V_i| = \Omega(\sqrt {n})$.
To circumvent this issue, we apply the technique of {\em proactive resampling}, as described below.

\medskip
\noindent
\textbf{Proactive Resampling.} We  keep track of a  {\em time-variable} $T$; the number of updates to $G$  that have occurred in this phase until now.
$T$ is initially $0$. We increment $T$ each time an edge gets deleted from $G$.

In addition, for all $i \in [1, \sqrt{n}]$ and $u, u' \in V_i$ with $u \neq u'$, we maintain: (1) $w_{uu'}$, the {\em witness} for the pair $u$ and $u'$ and (2)
a set $\text{{\sc Schedule}}[u, u']$ of positive integers, which is the set of timesteps where our algorithm intends to {\em proactively resample} a new witness for $u,u'$. This set grows adaptively each time the adversary deletes $(u,w_{uu'})$ or $(w_{uu'}, u')$.

Finally,  to ensure that the update time of our algorithm remains small,  for each $\lambda \in [1, n \sqrt{n}]$ we maintain a $\text{{\sc List}}[\lambda]$, which consists of all those pairs of nodes $(x, x')$ such that $\lambda \in \text{{\sc Schedule}}[x, x']$.

When an edge $(u,v)$, where $u \in V_i$ and $v \in V_j$ is deleted, we do the following operations.
First, for all $u' \in V_i$ that had $v = w_{uu'}$ as a common neighbor with $u$ before deleting $(u,v)$, we add the timesteps $\{T + 2^k  \mid T + 2^k \leq n \sqrt{n}, k \in \mathbb N \}$ to $\text{{\sc Schedule}}[u,u']$.
Second, analogous to the previous one, for all $v' \in V_j$ that had $u = w_{vv'}$ as a common neighbor with $v$ before deleting $(u,v)$,
we add the timesteps $\{T + 2^k  \mid T + 2^k \leq n \sqrt{n}, k \in \mathbb N \}$ to $\text{{\sc Schedule}}[v,v']$.
Third, we set $T \leftarrow T+1$.
Lastly, for each $(x,x') \in \text{{\sc List}}[T]$, we call the subroutine
$\text{{\sc Resample}}(x,x')$.

The key lemma below summarizes a crucial property of this dynamic algorithm. Its proof appears in \Cref{sect:job-machine}.
\begin{lemma}
\label{th:proactive:resampling:main}
During a phase consisting of $n\sqrt{n}$ edge deletions, our dynamic algorithm makes at most  $\tilde{O}(\sqrt{n})$ calls to the {\sc Resample} subroutine after each edge deletion.
Moreover, the {\em total} number of calls to the {\sc Resample} subroutine during an {\em entire phase} is at most $\tilde{O}(n \sqrt{n})$ w.h.p. Both these guarantees hold  against an adaptive adversary.
%
%
\end{lemma}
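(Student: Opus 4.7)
The plan is to split the argument into two parts: a straightforward deterministic bound on the per-deletion work, and a more involved probabilistic argument for the total work that hinges on the load-balancing abstraction of \Cref{sect:job-machine}.

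For the per-deletion claim, I would argue directly. At any time $T$, the set $\text{{\sc List}}[T]$ consists of pairs $(x,x')$ for which $T \in \text{{\sc Schedule}}[x,x']$. Since $T = T' + 2^k$ for some past deletion time $T'$ and some integer $k \in \{0, 1, \dots, \lceil\log_2(n\sqrt{n})\rceil\}$, the entries of $\text{{\sc List}}[T]$ associated with offset $2^k$ were all created during the single deletion at time $T - 2^k$. A single deletion of an edge $(u,v)$ with $u\in V_i$ and $v\in V_j$ inserts into at most $|V_i| + |V_j| \le 2\sqrt{n}$ different schedules. Summing over the $O(\log n)$ choices of $k$ yields $|\text{{\sc List}}[T]| = O(\sqrt{n}\log n) = \tilde{O}(\sqrt{n})$, so the number of $\Resample$ calls triggered by one deletion is $\tilde{O}(\sqrt{n})$, deterministically and against any adversary.

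The total bound $\tilde{O}(n\sqrt{n})$ is more delicate and requires exploiting the probabilistic nature of $\Resample$. I would reduce the process to the load-balancing abstraction of \Cref{sect:job-machine}, where each pair $(u,u')$ is a ``job'', each vertex in $N_G(u)\cap N_G(u')$ is a ``machine'', and the adversary's deletion of an edge $(u,v)$ forces the removal of machine $v$ for every job $(u,u')$ currently assigned to it (i.e.\ with $w_{uu'} = v$). A key consequence of proactive resampling at times $T + 2^0, T + 2^1, \dots$ is that by the moment the adversary attacks any job, its currently assigned machine has been redrawn uniformly at random from its surviving machine set very recently. I would then bound the expected number of jobs freshly killed per adversarial deletion and, via the stochastic domination arguments developed in \Cref{sect:job-machine} and \Cref{sec:reduce to load}, convert this per-step expected bound into a high-probability bound on the total work over the entire phase.

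The main obstacle is the adaptive adversary: since it sees the witnesses, it will try to concentrate deletions on whichever vertex currently supports the most jobs. The schedule $\{T + 2^k\}_{k\ge0}$ is designed precisely to neutralize this attack, since no matter when the adversary strikes, every affected job has been reshuffled within the previous $O(\log n)$ exponentially spaced intervals; the challenge is to show that this reshuffling makes the adversary ``effectively oblivious'' at the granularity needed to bound the load. Formalizing this requires conditioning on the adversary's view and tracking, for each pair, a time-varying sampling probability; then, in contrast to the fixed-probability analysis of \cite{Bernstein2020fully}, I would use stochastic domination to couple the true per-machine load with a binomial-type random variable of small mean and sum the resulting tail bounds to $\tilde{O}(n\sqrt{n})$ over the phase. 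This generalization to time-varying, per-pair probabilities is the technical heart of the argument and the step I expect to be hardest.
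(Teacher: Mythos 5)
Your first claim and its proof are correct and coincide with the paper's: a deletion of $(u,v)$ with $u\in V_i$, $v\in V_j$ writes into at most $|V_i|+|V_j|\le 2\sqrt{n}$ schedules, and \textsc{List}$[T]$ only contains entries created at the $O(\log (n\sqrt{n}))$ earlier timesteps $T-2^k$, so each deletion triggers $O(\sqrt{n}\log n)$ \textsc{Resample} calls deterministically; this is exactly the ``moreover'' clause of \Cref{lemma:second_guarantee} instantiated with $\lambda=O(\sqrt{n})$.

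For the total bound your high-level route (reduce to the job--machine abstraction and tame the adaptive adversary via stochastic domination with time-varying probabilities) is the paper's, but the plan as stated has a genuine gap. The step ``bound the expected number of jobs freshly killed per adversarial deletion and convert this per-step expected bound into a bound on the total'' cannot yield $\Otil(n\sqrt{n})$: no useful uniform per-step bound exists, because the adversary can arrange, independently of the randomness, that a single deletion kills $\Theta(\sqrt{n})$ jobs (e.g.\ when $v$ is the unique common neighbour of $u$ and every $u'\in V_i$, then $w_{uu'}=v$ for all $u'$ no matter how resampling comes out), and multiplying any per-step bound by the $n\sqrt{n}$ deletions of a phase gives at best $\Otil(n^{2})$. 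The paper instead combines two ingredients that your sketch does not supply: (a) a w.h.p.\ \emph{overhead} bound (\Cref{lemma:overhead}) stating that at all times every machine's load is at most $O(\log T)\cdot\target^t(x)+O(\log|M|)$, where $\target^t(x)=\sum_{r\in R^t(x)}1/\deg_{G^t}(\job(r))$; its proof rests on the structural fact that at most $\log t$ past resampling experiments are relevant to any fixed time--edge pair (\Cref{claim:log_relevant}) --- your remark about the $O(\log n)$ exponentially spaced reshuffles gestures at this but is not the argument --- followed by domination by independent Bernoullis and an additive-multiplicative Chernoff bound with a union bound over times and machines; and (b) an amortization step, entirely missing from your plan, showing $\sum_t \target^t(x^t)=O(|J|\log\Delta)$ for the deleted machines $x^t$: each deletion charges $c/\deg_{G^t}(u)$ to each incident job $u$, and since the charged routines are destroyed, each job absorbs at most a harmonic sum $O(\log\Delta)$ over the whole phase. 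It is (a) and (b) together, not a per-step expectation, that give total recourse $O(|J|\log(\Delta)\log^2|M|+T\log^2|M|)=\Otil(n\sqrt{n})$. A further mismatch: in the paper's abstraction the machines are the \emph{edges} of $G$, so one adversarial edge deletion is exactly one machine deletion and its assignment-degree counts exactly the touched pairs; with your machines being common-neighbour vertices, an edge deletion is not a machine deletion (the vertex survives as a candidate for other pairs), so the abstraction's accounting of $T$ machine deletions and machine loads cannot be invoked verbatim.
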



\noindent
\textbf{Analysis of Recourse and Update Time.} Our analysis are given in the lemmas below.

\begin{lemma}[Recourse]
\label{lm:amortized:recourse}
The   amortized recourse of our  algorithm is $\tilde{O}(1)$ w.h.p., against  an adaptive adversary.
\end{lemma}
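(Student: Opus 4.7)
The plan is to bound the total recourse over any phase (of $n\sqrt n$ consecutive updates) by $\Otil(n\sqrt n)$ w.h.p., which then yields amortized $\Otil(1)$ recourse per update. I would decompose all changes to $H$ into four sources: (i) phase initialization, (ii) insertions during the phase, (iii) modifications of $E_1$ and $E_2$ triggered by deletions, and (iv) modifications of $E_3$ triggered by deletions.

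Three of the four pieces are routine. Each inserted edge is added directly to $H$ via the spanner-decomposability used in \Cref{sub:sec:dynamic:1}, contributing at most $n\sqrt n$ recourse per phase. When a deletion of $(u,v)$ with $u\in V_i$ and $v\in V_j$ arrives, $E_1$ changes by at most two edges (a possible new $i$-partner for $u$ and a new $j$-partner for $v$), and $E_2$ changes at most by the removal of the deleted edge itself, so the combined $(E_1,E_2)$ contribution is $O(n\sqrt n)$ per phase. Phase initialization issues $\Resample(u,u')$ for every in-bucket pair, totalling $\sum_{i=1}^{\sqrt n} O(|V_i|^2)=O(n\sqrt n)$ calls, and also re-absorbs the at most $n\sqrt n$ edges inserted during the previous phase; since each $\Resample$ call alters at most four edges of $E_3$ (the pair's two old and two new witness edges, each independently present in $E_3$ or not), initialization contributes $O(n\sqrt n)$ recourse.

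The fourth piece is the heart of the argument. By the same per-call observation, the mid-phase $E_3$-recourse is at most $4$ times the number of $\Resample$ calls issued during the phase, which \Cref{th:proactive:resampling:main} caps at $\Otil(n\sqrt n)$ w.h.p.\ against an adaptive adversary. Summing the four contributions gives total per-phase recourse $\Otil(n\sqrt n)$ w.h.p., hence amortized $\Otil(1)$ per update; a union bound over polynomially many phases preserves the w.h.p.\ guarantee over the whole update sequence, and adaptivity is inherited directly from \Cref{th:proactive:resampling:main}. The hard part is thus fully encapsulated in \Cref{th:proactive:resampling:main}: once its bound on the number of $\Resample$ calls is granted, the recourse analysis reduces to the routine accounting above.
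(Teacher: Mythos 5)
Your proof is correct and follows essentially the same route as the paper's: charge $O(1)$ per update for $E_1,E_2$, charge $O(n\sqrt n)$ for phase initialization and the inserted edges, bound the mid-phase $E_3$ changes by $O(1)$ times the number of \Resample{} calls via \Cref{th:proactive:resampling:main}, and divide the resulting $\Otil(n\sqrt n)$ per-phase total by the phase length $n\sqrt n$. Your accounting is slightly more explicit (separating insertions and noting the union bound over phases), but the argument is the same.
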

\begin{proof}

To maintain the edge-sets $E_1$ and $E_2$,  we pay a worst-case recourse of $O(1)$ per update. For maintaining the edge-set $E_3$, our total recourse during the entire phase is at most $O(1)$ times the number of calls made to the {\sc Resample}$(.,.)$ subroutine, which in turn is at most $\tilde{O}(n \sqrt{n})$ w.h.p.~(see Lemma~\ref{th:proactive:resampling:main}).
Finally, while computing $E_3$ in the beginning of a phase, we pay  $O(n \sqrt{n})$ recourse. Therefore, the overall total recourse during an entire phase is $\tilde{O}(n \sqrt{n})$ w.h.p.. Since a phase lasts for $n \sqrt{n}$ time steps, we conclude the lemma.
\end{proof}



\begin{lemma}[Worst-case Update Time within a Phase]
\label{lm:worstcase}
Within a phase,  our  algorithm handles a given update in $\tilde{O}(\sqrt{n})$ worst case time w.h.p..
\end{lemma}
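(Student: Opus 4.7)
The plan is to account for the work done by the algorithm during a single edge deletion within a phase, and show that each component takes at most $\tilde{O}(\sqrt{n})$ time in the worst case. Since edge insertions are batched at the end of a phase (handled as additions to $H$ at the start of the next phase), we only need to analyze deletions.

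First I would account for the maintenance of the auxiliary structures. Maintaining $E_1$ costs $O(1)$ worst-case per deletion (replacing an $i$-partner by looking up any neighbor in $V_i \cap N_G(u)$), and maintaining $E_2$ costs $O(1)$. Updating the partnership data structures after a deletion $(u,v)$ with $u \in V_i$, $v \in V_j$ costs $O(\sqrt{n}\log n)$ in the worst case, exactly as stated in \Cref{sub:sec:static} for the symmetric insertion case: we walk through $V_i \cap N_G(v)$ and $V_j \cap N_G(u)$, each of size at most $\sqrt{n}$, and remove $v$ (resp.\ $u$) from the relevant partnerships.

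Next I would bound the cost of populating {\sc Schedule} and {\sc List}. When $(u,v)$ is deleted, the pairs $(u,u')$ whose witness is lost are exactly those $u' \in V_i$ with $w_{uu'} = v$; this set has size at most $|V_i| = \sqrt{n}$ and can be enumerated by scanning $V_i$ and checking the stored witness in $O(\sqrt{n})$ time. Symmetrically for the $v' \in V_j$ side. For each such pair we insert at most $\lceil \log_2(n\sqrt{n})\rceil = O(\log n)$ timesteps of the form $T+2^k$ into {\sc Schedule}, and append the pair to the corresponding {\sc List}$[\lambda]$ buckets. This totals $O(\sqrt{n}\log n)$ work.

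Finally I would bound the cost of the proactive resamples triggered at time $T$. By \Cref{th:proactive:resampling:main}, the number of pairs $(x,x') \in {\sc List}[T]$, i.e., the number of {\sc Resample} calls executed in response to this deletion, is at most $\tilde{O}(\sqrt{n})$ w.h.p.\ against an adaptive adversary. Each {\sc Resample}$(x,x')$ call can be implemented in $O(\log n)$ time using the maintained partnership $P(x,x')$ (uniform sampling from a dynamically maintained set), and setting the new witness incurs $O(1)$ updates to $E_3$. Summing over all at most $\tilde{O}(\sqrt{n})$ such calls gives $\tilde{O}(\sqrt{n})$ time. The main (and only nontrivial) obstacle is obtaining the per-deletion bound $\tilde{O}(\sqrt{n})$ on $|{\sc List}[T]|$, but this is precisely the first guarantee of \Cref{th:proactive:resampling:main} which holds w.h.p.\ against an adaptive adversary. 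Adding the three contributions yields $\tilde{O}(\sqrt{n})$ worst-case update time per deletion within a phase, as claimed.
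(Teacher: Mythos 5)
Your proposal is correct and follows essentially the same route as the paper: it reduces everything to (i) the $\tilde{O}(\sqrt{n})$ cost of the partnership structure, (ii) the at most $|V_i|,|V_j|\le \sqrt{n}$ pairs whose \textsc{Schedule} must be adjusted, and (iii) the per-update bound of $\tilde{O}(\sqrt{n})$ \textsc{Resample} calls from \Cref{th:proactive:resampling:main}. The extra implementation details you supply (enumerating pairs with $w_{uu'}=v$, $O(\log n)$ per resample) are consistent with, and only elaborate on, the paper's argument.
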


\begin{proof}
Recall that the sets $E_1, E_2$ can be maintained in ${O}(1)$ worst case update time.
Henceforth, we  focus on the time required to maintain the edge-set $E_3$ 
after a given update in $G$.

 Excluding the time spent on maintaining the partnership data structure (which is $\tilde{O}(\sqrt{n})$ in the worst-case anyway),
this is proportional to $\tilde{O}(1)$ times the number of calls made to the {\sc Resample}$(.,.)$ subroutine,
{\em plus} $\tilde{O}(1)$ times the number of pairs $u,u' (v,v')$ where we need to adjust $\text{{\sc SCHEDULE}}[u,u']$.
The former is w.h.p.~at most $\tilde{O}(\sqrt{n})$ according to Lemma~\ref{th:proactive:resampling:main},
while the latter is also at most $\tilde{O}(\sqrt{n})$ since $|V_i|, |V_j| \leq \sqrt{n}$.
Thus,  within a phase we can also maintain the set $E_3$ w.h.p.~in $\tilde{O}(\sqrt{n})$ worst case update time.
\end{proof}

Although the above lemma says that we can handle each edge deletion in $\Otil(\sqrt{n})$ worst-case update time,
our current algorithm does not guarantee worst-case update time yet because the intialization time exceed the   $\Otil(\sqrt{n})$ bound. In more details,
observe that the total initialization time is $O(n\sqrt n) \times O(\sqrt{n} \log{n})$ because we need to insert  $O(n\sqrt{n})$ edges into partnership data structures, which has $O(\sqrt{n} \log{n})$ update time.  Over a phase of $n\sqrt{n}$ steps, this implies only  $\Otil(\sqrt{n})$ amortized update time.

However, since the algorithm takes long time only at the initialization of the phase, but takes $\Otil(\sqrt{n})$ worst-case step for each update during the phase, we can apply the standard building-in-the-background technique (see~\Cref{sub:app:spread:work}) to de-amortized the update time.
We conclude the following:


\begin{lemma}[Worst-case Update Time for the Whole Update Sequence]
\label{lm:worstcase:updatetime}
W.h.p., the worst-case update time of our dynamic algorithm is $\tilde{O}(\sqrt{n})$.
\end{lemma}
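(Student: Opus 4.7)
The plan is to combine \Cref{lm:worstcase} with a standard building-in-the-background scheme so that the expensive initialization at each phase boundary is distributed across the $n\sqrt{n}$ updates of the preceding phase. By \Cref{lm:worstcase} the per-update cost inside a phase is already $\Otil(\sqrt{n})$ in the worst case, so the only obstruction to a worst-case bound is the one-shot initialization, whose total cost is $O(n^{2}\log n)$ (dominated by inserting $O(n\sqrt{n})$ edges into the partnership data structures, each taking $O(\sqrt{n}\log n)$ time). Dividing by the phase length $n\sqrt{n}$ already gives amortized $\Otil(\sqrt{n})$; the goal is to realize this worst-case.

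First I would run two copies $A_{\mathrm{old}}$ and $A_{\mathrm{new}}$ of the algorithm in parallel and alternate their roles. At any time exactly one of them, say $A_{\mathrm{old}}$, is \emph{active}: it is the copy whose maintained $3$-spanner $H$ is reported to the adversary. Starting from the midpoint of the active phase (time step $t_{0} + n\sqrt{n}/2$, where $t_{0}$ is when $A_{\mathrm{old}}$ began its phase), I would begin initializing $A_{\mathrm{new}}$ on the current graph snapshot, performing $\Otil(\sqrt{n})$ units of initialization work per subsequent update. Since the full initialization costs $O(n^{2}\log n)$ and we have $\Theta(n\sqrt{n})$ remaining update steps to distribute it over, this contributes only $\Otil(\sqrt{n})$ worst-case work per step. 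In parallel, any edge updates received while $A_{\mathrm{new}}$ is still being constructed are buffered into a FIFO queue; once the snapshot initialization is done, the buffered updates are replayed on $A_{\mathrm{new}}$ at a rate of a constant per step, each replay costing $\Otil(\sqrt{n})$ by \Cref{lm:worstcase} applied to $A_{\mathrm{new}}$. Because there are at most $O(n\sqrt{n})$ updates to replay and again $\Theta(n\sqrt{n})$ steps over which to spread them, this too adds only $\Otil(\sqrt{n})$ worst-case work per step. By the end of the phase $A_{\mathrm{new}}$ is fully caught up, we swap the active role to $A_{\mathrm{new}}$, and $A_{\mathrm{old}}$ is discarded.

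The main subtlety is to confirm that the high-probability guarantees of \Cref{th:proactive:resampling:main} still apply to $A_{\mathrm{new}}$ under this scheme, since a clever adaptive adversary might try to exploit the delayed handover. The saving grace is that while $A_{\mathrm{new}}$ is being built in the background it exposes nothing to the adversary, so its internal randomness is independent of every adversarial choice made before the handover; hence the adaptive-adversary guarantee of \Cref{th:proactive:resampling:main} carries over verbatim once $A_{\mathrm{new}}$ becomes active. Summing the three contributions --- the $\Otil(\sqrt{n})$ per-step cost of the active copy, the spread-out background initialization, and the spread-out background catch-up --- yields the claimed $\Otil(\sqrt{n})$ worst-case update time w.h.p., establishing \Cref{lm:worstcase:updatetime}.
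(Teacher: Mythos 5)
Your plan is the same as the paper's: the standard building-in-the-background de-amortization, running a second copy of the structure, spreading its $\tilde{O}(n^{2})$ initialization over $\Theta(n\sqrt{n})$ steps of the current phase, replaying the buffered updates at a constant rate (each costing $\tilde{O}(\sqrt{n})$ by \Cref{lm:worstcase}), and noting that the new copy's randomness is hidden from the adversary until it becomes active, so the adaptive guarantee of \Cref{th:proactive:resampling:main} is unaffected. All of that matches the paper's argument in \Cref{sub:app:spread:work}.

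There is one concrete step you gloss over: the handover itself. At the end of the phase you simply ``swap the active role to $A_{\mathrm{new}}$,'' but the maintained spanner is an explicit edge set, and replacing $A_{\mathrm{old}}$'s output by $A_{\mathrm{new}}$'s output in a single time step means writing out $\Theta(n\sqrt{n})$ edge changes at once, which by itself exceeds the $\tilde{O}(\sqrt{n})$ worst-case bound you are trying to prove (and, for \Cref{thm:main proactive}, would also ruin the recourse accounting at that step). The paper avoids this by never doing an atomic swap: during a middle window of the phase it gradually feeds the \emph{surviving} edges of the new copy's spanner into the actual output, a few per update, so that for a while the reported spanner is the union of the two copies' outputs --- which is still a $3$-spanner by decomposability --- and the old copy can then be retired without any single step touching more than $\tilde{O}(\sqrt{n})$ output edges. (The word ``surviving'' matters too: edges of $A_{\mathrm{new}}$'s snapshot spanner that were deleted in the interim must not be emitted.) With that interleaved handover added, your argument coincides with the paper's proof.
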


\section{Proactive Resampling: Abstraction}
\label{sect:job-machine}

The goal of this section is to prove \Cref{th:proactive:resampling:main} for bounding the recourse of our 3-spanner algorithm. This is the most technical part of this paper.
To ease the analysis, we will abstract the problem situation in \Cref{th:proactive:resampling:main} as a particular dynamic problem of assigning jobs to machines while an adversary keeps deleting machines and the goal is to minimize the total number of reassignments. Below, we formalize this problem and show how to use it to bound the recourse of our 3-spanner algorithm.

Our abstraction has two technical contributions:
(1) it allows us to easily work with multiple sampling probabilities, while in  \cite{Bernstein2020fully}, they fixed a single parameter on sampling probability,
(2) the simplicity of this abstraction can expose the generality of the proactive resampling technique itself; it is not specific to the cut sparsifier problem as used in \cite{Bernstein2020fully}.

\medskip\noindent
\textbf{Jobs, Machines, Routines, Assignments, and Loads.}
Let $J$ denote a set of \emph{jobs} and $M$ denote a set of \emph{machines}.
We think of them as two sets of vertices of the (hyper)-graph $G=(J,M,R)$.\footnote{This graph is different from the graph that we maintain a spanner in previous sections.}
A \emph{routine} $r \in R$ is a hyperedge of $G$ such that $r$ contains exactly one job-vertex from $J$, denoted by $\job(r)\in J$, and may contain several machine-vertices from $M$, denoted by $M(r) \subseteq M$.
Each routine $r$ in $G$ means there is a routine for handling $\job(r)$ by \emph{simultaneously} calling machines in $M(r)$.
Note that $r = \{\job(r)\}\cup M(r)$.
We say that $r$ is a \emph{routine for} $\job(r)$.
For each machine $x \in M(r)$, we say that routine $r$ \emph{involves} machine $x$, or that $r$ \emph{contains} $x$. The set $R(x)$ is then defined as the set of routines involving machine $x$.
Observe that there are $\deg_G(u)$ different routines for handling job $u$.
%
An \emph{assignment} $A = (J,M, F \subseteq R)$ is simply a subgraph of $G$. We say assignment $A$ is \emph{feasible} iff $\deg_A(u) = 1$ for every job $u \in J$ where $\deg_G(u)>0$. That is, every job is handled by some routine, if exists. When $r \in A$, we say that $\job(r)$ is \emph{handled by} routine $r$.
Finally, given an assignment $A$, the \emph{load} of a machine $x$ is the number of routines in $A$ involving $x$, or in other words, is the degree of $x$ in $A$, $\deg_A(x)$. 
We note explicitly that our end-goal is not to optimize loads of machines. 
Rather, we want to minimize the number of reassignments needed to maintain feasible assignments throughout the process.

In this section, we usually use $u,v,w$ to denote jobs, use $x,y,z$ to denote machines, and use $e$ or $r$ to denote routines or (hyper)edges.

%
\medskip\noindent
\textbf{The Dynamic Problem.}
Our problem is to maintain a feasible assignment $A$ while the graph $G$ undergoes a sequence of machine deletions (which might stop before all machines are deleted).
More specifically, when a machine $x$ is deleted, all routines $r$ containing $x$ are deleted as well.
But when routines in $A$ are deleted,
$A$ might not be feasible anymore and we need to add new edges to $A$ to make $A$ feasible. Our goal is to minimize the total number of routines ever added to $A$.

To be more precise, write the graph $G$ and the assignment $A$ after $t$ machine-deletions as $G^t = (J,M,R^t)$ and $A^t =(J,M,F^t)$, respectively.
Here, we define \emph{recourse} at timestep $t$ to be $|F^t \setminus F^{t-1}|$, which is the number of routined added into $A$ at timestep $t$.
When the adversary deletes $T$ machines, the goal is then to minimize the total recourse $\sum_{t = 1}^{T} |F^t \setminus F^{t-1}|$.

\medskip\noindent
\textbf{The Algorithm: Proactive Resampling.}
To describe our algorithm, first let $\Resample(u)$ denote the process of reassigning job $u$ to a uniformly random routine for $u$. In the graph language, $\Resample(u)$ removes the edge $r$ such that $\job(r) = u$ from $A$, sample an edge $r'$ from $\{r \in R~|~\job(r) = u\}$, and then add $r'$ into $A$.
At timestep $0$, we initialize a feasible assignment $A^0$ by calling $\Resample(u )$ for every job $u \in J$, i.e., assign each job $u$ to a random routine for $u$. Below, we describe how to handling deletions.

Let $T$ be the total number of machine-deletions. For each job $u$, we maintain $\Schedule(u) \subseteq [T]$ containing all time steps that we will invoke $\Resample(u)$.
That is, at any timestep $t$, before an adversary takes any action, we call $\Resample(u)$ if $t \in \Schedule(u)$.

We say that an adversary \emph{touches} $u$ at timestep $t$ if the routine $r \in A^t$ handling $u$ at time $t$ is deleted. When $u$ is touched, we call $\Resample(u)$ and, very importantly, we put $t+1, t+2, t+4, \ldots$ where $t+2^i \le T$ into $\Schedule(u)$.
This is the action that we call \emph{proactive resampling} because we do not just resample a routine for $u$ only when $u$ is touched, but do so proactively in the future as well. This completes the description of the algorithm.

Clearly, $A$ remains a feasible assignment throughout because whenever a job $u$ is touched, we immediately call $\Resample(u)$. The key lemma below states that the algorithm has low recourse, even if the adversary is adaptive in the sense that each deletion at time $t$ depends on previous assignment before time $t$.

\begin{lemma}
\label{lemma:second_guarantee}
Let $T$ be the total number of machine-deletions.
The total recourse of the algorithm running against an adaptive adversary
is $O\big( |J|\log(\Delta)\log^2|M| + T\log^2 |M|\big)$ with high probability
where $\Delta$ is the maximum degree of any job.
Moreover, if the load of a machine never exceeds $\lambda$,
then our algorithm has $O(\lambda \log{T})$ worst-case recourse.
\end{lemma}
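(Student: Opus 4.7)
I would prove the two bounds in the lemma by two rather different arguments: the worst-case bound is combinatorial, whereas the high-probability bound exploits the randomness of $\Resample$ and is the technical heart of the proof. I would prove the worst-case bound first as a warm-up.

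\textbf{Worst-case bound under load $\le \lambda$.} The recourse at any single step $t$ equals the number of $\Resample$ calls fired at $t$, which splits into (i) fresh touches caused by the deletion at time $t$ and (ii) scheduled resamples $|\{u : t \in \Schedule(u)\}|$. Part (i) is exactly the load of the deleted machine, so at most $\lambda$. For part (ii), $t\in\Schedule(u)$ only if $u$ was touched at some earlier time $s = t - 2^i$ with $0\le i\le \lceil\log T\rceil$. At each such $s$, at most $\lambda$ jobs were touched by the load assumption, and for a fixed $t$ there are only $O(\log T)$ values of $s$ of the form $t - 2^i$, so at most $O(\lambda \log T)$ jobs have $t$ in their schedule. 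Summing (i) and (ii) gives the claimed $O(\lambda\log T)$ per-step recourse.

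\textbf{Reduction to counting touches.} For the high-probability total-recourse bound, I would first reduce everything to bounding the total number of touches $C$. A single touch on $u$ at time $s$ triggers one immediate $\Resample$ and adds at most $\lceil\log T\rceil+1$ future entries to $\Schedule(u)$, each producing exactly one additional $\Resample$; together with the $|J|$ initialization resamples, total recourse is $O\bigl((C + |J|)\log |M|\bigr)$ (using $T\le |M|$). Since the fresh touches at time $t$ number exactly the load of the deleted machine in the current assignment, we have $C = \sum_t \mathrm{load}_{A^t}(x_t)$, and it suffices to bound this sum with high probability against an adaptive adversary.

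\textbf{Bounding loads via levels.} I would partition each job $u$ contributing to $\mathrm{load}_{A^t}(x_t)$ by a level $i$ defined by $t - s \in [2^i, 2^{i+1})$, where $s$ is the most recent time $\Resample(u)$ fired before $t$; there are $O(\log |M|)$ levels. For $u$ at level $i$, proactive resampling guarantees that at most $2^{i+1}$ adversarial deletions have intervened between $s$ and $t$. Conditioned on the history up to time $s$, the routine picked by $\Resample(u)$ is uniform over routines of $u$ alive at $s$, while the adversary's choice of the target $x_t$ depends only on that history and on at most $2^{i+1}$ subsequent adaptive deletions. By a coupling/stochastic-domination argument, the indicator ``$u$'s routine contains $x_t$'' is dominated by a Bernoulli whose parameter depends only on $i$ and on $|\{r : \job(r)=u\}|$, so the level-$i$ contribution to $C$ is dominated by a sum of independent Bernoullis. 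Applying Freedman-type martingale concentration combined with a union bound over $|M|$ machines, $T$ timesteps, and $O(\log|M|)$ levels yields an $O(T\log^2|M|)$ bound for the touches attributable to jobs that have already been touched at least once. The $O(|J|\log(\Delta)\log^2|M|)$ summand arises from jobs still carrying their initial-time sample: an initial choice is uniform over at most $\Delta$ routines, and summing the analogous domination bound across the $O(\log \Delta)$ doubling epochs during which the adversary can accumulate ``information'' against this single random choice yields the stated contribution.

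\textbf{Main obstacle.} The central difficulty is adaptivity: because $x_t$ is chosen by the adversary as a function of all previous random samples, the load $\mathrm{load}_{A^t}(x_t)$ is not a sum of independent indicators and Chernoff does not apply directly. The crux of the proof is the coupling/stochastic-domination step that trades dependence between the adversary and the random samples for the level parameter $2^{i+1}$, giving the adversary only a bounded ``information budget'' per level against which the freshness of $\Resample$ wins. Producing this coupling cleanly, and choosing the level decomposition so that the losses telescope into the stated polylogarithmic factors, is where I expect the bulk of the technical work to lie.
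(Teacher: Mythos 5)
Your worst-case bound and your reduction of the total recourse to $O(\log T)\cdot\sum_t \mathrm{load}_{A^t}(x_t)$ (the loads of the deleted machines, inflated by the proactive schedule) coincide with the paper's proof. The gap is in the high-probability part, which you yourself flag as the technical heart but leave as an assertion. The paper does \emph{not} bound $\sum_t \mathrm{load}_{A^t}(x_t)$ directly by a martingale argument; it splits the task into (i) a purely deterministic charging lemma, $\sum_t \target^t(x^t)=O(|J|\log\Delta)$, where $\target^t(x)=\sum_{r\in R^t(x)}1/\deg_{G^t}(\job(r))$, proved by a harmonic-sum charge of each deleted edge to its job, and (ii) a per-machine, per-time ``overhead'' bound $\deg_{A^t}(x)\le O(\log T)\,\target^t(x)+O(\log|M|)$ w.h.p.\ (Lemma~\ref{lemma:overhead}), which is then union-bounded over all machines and times so that it applies to whatever the adaptive adversary deletes. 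Your sketch has no analogue of (i), and your attribution of the two summands is off: a job still carrying its initial sample can be touched at most once, so initial samples contribute only $O(|J|)$ touches and cannot be the source of the $\log\Delta$ factor; the $|J|\log\Delta$ term arises from the harmonic accumulation over \emph{repeated} touches of the same job as its degree shrinks, i.e.\ from the deterministic sum of target loads, not from any ``information budget'' against the initial random choice.

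For (ii), your proposed mechanism --- condition on the most recent resample time $s$ of a job, put the job in level $i$ if $t-s\in[2^i,2^{i+1})$, and dominate the indicator by a Bernoulli whose parameter depends only on $i$ and $\deg(u)$ --- runs into exactly the measurability issue the paper designs around. The event ``$s$ is the last resample of $u$ before $t$'' is not determined by the history up to time $s$: whether $u$ is resampled in $(s,t)$ depends on whether the adversary's interim deletions touched $u$, hence on the outcome of the sample at $s$ itself (and on the adaptively grown $\Schedule(u)$). Conditioning on the level therefore biases the sample (e.g.\ a job with two routines whose other machine was deleted in the window has conditional probability $1$ of sitting on the surviving machine, regardless of its original degree), so the claimed domination by a Bernoulli depending only on $i$ and $\deg(u)$ is not valid as stated. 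The paper avoids this by summing over \emph{all} ``$(t,e)$-relevant'' experiments, where relevance is decided from $\Schedule^{t(X)}(\job(e))$ at the experiment's creation time rather than at time $t$; this makes each experiment's conditional success probability controlled given the past (so Doerr's domination lemma and an additive-multiplicative Chernoff bound apply against an adaptive adversary), at the cost of a multiplicative $\log t$ coming from the structural fact that proactive resampling allows at most $\log t$ relevant experiments per edge (Lemma~\ref{claim:log_relevant}). That bookkeeping, or some substitute that resolves the conditioning problem, is the missing idea in your proposal.
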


We will prove \Cref{lemma:second_guarantee} in \Cref{sec:reduce to load}.
 Before proceeding any further, however, we argue why \Cref{lemma:second_guarantee} directly bounds the recourse of our 3-spanner algorithm.

\medskip\noindent
\textbf{Back to 3-spanners: Proof of  \Cref{th:proactive:resampling:main}.}
It is easy to see that maintaining $E_3$ in our $3$-spanner algorithm can be framed exactly as the job-machine load-balancing problem.
Suppose the given graph is $G = (V,E)$ where $n = |V|$ and $m = |E|$.
We create a job $j_{uu'}$ for each pair of vertices $u,u' \in V_i$ with $u \neq u'$.
For each edge $e \in E$, we create a machine $x_e$.
Hence, $|J| = O(n^{1.5})$ and $|M| = |E| = m$.
For each job, as we want to have a witness $w_{uu'}$, this witness is corresponding
to two edges $e=(u,w_{uu'})$ and $e' = (u',w_{uu'})$. Hence, we create a routine $r = (j_{uu'}, e, e')$ for each $u,u' \in V_i$ and a common neighbor $w_{uu'}$.
Since there are at most $n$ common neighbors between each $u$ and $u'$, $\Delta = O(n)$.
A feasible assignment is then corresponding to finding a witness for each job.
Our algorithm that maintains the spanner is also exactly this load-balancing algorithm. Hence,
the recourse of the $3$-spanner construction follows from Lemma~\ref{lemma:second_guarantee} where we delete exactly
$T=O(n^{1.5})$ machines. As $|J| = O(n^{1.5})$, the total recourse bound then becomes $O(n^{1.5} \log^3 n)$.
As $T=O(n^{1.5})$, averaging this bound over all timesteps yields $O(\log^3 n )$ amortized recourse.
\section{Proactive Resampling: Analysis (Proof of \Cref{lemma:second_guarantee})}
\label{sec:reduce to load}

The first step to prove \Cref{lemma:second_guarantee} is to bound the loads of machines $x$. This is because whenever machine $x$ is deleted, its load of $\deg_A(x)$ would contribute to the total recourse.

What would be the expected load of each machine?
For intuition, suppose that the adversary was \emph{oblivious}. Recall that $R(x)$ denote the set of all routines involving machine $x$.
Then, the expected load of machine $x$ would be $\sum_{r\in R(x)} 1 / \deg_G (\job(r))$ because each job samples its routine uniformly at random, and this is concentrated with high probability using  Chernoff's bound.
%
Although in reality our adversary is \emph{adaptive}, our plan is to still prove that the loads of machines do not exceed its expectation in the oblivious setting too much. This motivates the following definitions.


\begin{definition}
	The \emph{target load} of machine $x$ is $\target(x) = \sum_{r \in R(x)} 1 / \deg_G (\job(r))$. The target load of $x$ at time $t$ is $\target^t(x) = \sum_{r  \in R^t(x)} 1 / \deg_{G^t} (\job(r))$.
    An assignment $A$ has \emph{overhead} $(\alpha, \beta)$ iff $  \deg_A(x) \leq \alpha \cdot \target(x) + \beta$
    for every machine $x \in M$.
\end{definition}

Our key technical lemma is to show that, via proactive resampling, the loads of machines are indeed close to its expectation in the oblivious setting. That is, the maintained assignment has small overhead. 
Recall that $T$ is the total number of machine-deletions.


\begin{lemma}
	\label{lemma:overhead}
	With high probability, the assignment $A$ maintained by our algorithm always has overhead $\left (O(\log{T}), O(\log |M|) \right)$  even when the adversary is adaptive.
\end{lemma}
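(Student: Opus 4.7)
Fix a machine $x$ and a time $t^*$; by a final union bound over all such pairs, it suffices to show $\deg_{A^{t^*}}(x) \le O(\log T) \cdot \target^{t^*}(x) + O(\log |M|)$ with probability at least $1 - 1/\poly(|M|T)$. Each job $u$ contributes $\mathbb{1}[r_u^{t^*} \in R^{t^*}(x)]$ to this load, where $r_u^{t^*}$ is the output of the most recent call to $\Resample(u)$, made at some (random) time $s_u \le t^*$. The plan is to partition the resamples relevant to $t^*$ into $O(\log T)$ \emph{levels} by $t^* - s \in [2^j, 2^{j+1})$; the doubling proactive schedule $t'+2^k$ installed after each touch at time $t'$ ensures that every touch contributes only $O(1)$ resamples per level, so the levels decompose the load additively.

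For any single resample event $(u,s)$, conditionally on the filtration $\mathcal{F}_{s^-}$ the draw is uniform on $R^s(u)$. For $u$ to contribute at $(x, t^*)$, the chosen routine must both lie in $R(x)$ and survive until $t^*$. Using $R^s(u) \supseteq R^{t^*}(u)$ together with $R^s(u) \cap R^{t^*}(x) = R^{t^*}(u) \cap R^{t^*}(x)$, this conditional probability is at most $|R^{t^*}(u) \cap R^{t^*}(x)| / \deg_{G^{t^*}}(u)$, and summing over $u$ gives exactly $\target^{t^*}(x)$. Within a fixed level, I would then upper-bound that level's contribution by a sum of indicator random variables that is \emph{stochastically dominated} by a sum of independent Bernoullis with total expectation $O(\target^{t^*}(x))$, via a standard filtration-based coupling that works against an adaptive adversary because each resample's outcome is uniform given the past. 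A Chernoff bound per level gives an $O(\target^{t^*}(x) + \log(|M|T))$ tail bound; summing across the $O(\log T)$ levels and invoking $T \le \poly(|M|)$ yields the target overhead $(O(\log T), O(\log |M|))$.

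The main obstacle is justifying the stochastic-domination step against an adaptive adversary: the adversary's deletions dictate which jobs get touched, so the set of resample events, their levels, and even the identity of $s_u$ for each $u$ are themselves random variables that depend on the algorithm's earlier randomness. Proactive resampling is precisely what breaks this circularity: after any touch at time $t'$, the scheduled resamples at $t'+2^k$ overwrite any potentially ``bad'' assignment within at most $2^k$ further steps at every level $k$, so the adversary cannot accumulate stale randomness pointing at a single machine. Combined with the fact that each fresh draw is conditionally uniform given the full past, this reduces the per-level analysis to an independent-Bernoulli process to which Chernoff's inequality applies, after which the final union bound over machines, timesteps, and levels promotes the pointwise bound to the ``always'' guarantee of the lemma.
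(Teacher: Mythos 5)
Your high-level plan is the same as the paper's: decompose the load at a fixed $(x,t^*)$ over a set of ``relevant'' resample events, argue that the doubling schedule limits each job/edge to $O(\log T)$ such events, dominate the resulting sum of indicators by independent Bernoullis with success probabilities $1/\deg_{G^{t^*}}(\cdot)$, and finish with an additive--multiplicative Chernoff bound and a union bound over machines and timesteps. Your per-job indicators even avoid the within-call negative correlation that forces the paper to assume machine-disjointness (\Cref{assump:disjoint}) and lift it separately, which is a nice simplification.

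However, the step you yourself flag as ``the main obstacle'' is the crux of the lemma, and your resolution does not supply the missing mechanism; it restates the conclusion. Concretely, two things are missing. First, the collection of events you propose to dominate --- ``the most recent call to $\Resample(u)$ before $t^*$,'' bucketed by $t^*-s_u$ --- is determined by \emph{future} information: whether a given resample is the last one before $t^*$, and whether its routine lies in $R^{t^*}(x)$, depends on later deletions and later experiments, which an adaptive adversary chooses after seeing the outcome of this very draw. Hence the hypothesis needed for the domination lemma (\Cref{lem:doerr}), namely $\P[X_i=1\mid X_1,\dots,X_{i-1}]\le\P[\hat X_i=1]$ for an adapted collection, cannot be verified for your events; a filtration-based coupling does not ``just work'' here. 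The paper's fix is definitional (\Cref{def:rel}): a resample is declared relevant for $(t,e)$ using the schedule \emph{as it stood at the resample's creation time}, which makes relevance past-measurable at the price of admitting several relevant experiments per edge. Second, your bound on that multiplicity is justified by ``every touch contributes only $O(1)$ resamples per level,'' which, being a per-touch count, scales with the number of times $u$ is touched and is far too weak. The statement actually needed is the halving property of \Cref{claim:log_relevant}: if two resample times are both relevant for $(t^*,e)$, the doubling schedule inserted at the intervening touch forces the later one to be at least halfway from the earlier one to $t^*$, so there are at most $\log t^*$ relevant resamples per edge --- this is precisely where the multiplicative $O(\log T)$ in the overhead comes from, and it is what lets the expectation bound $\E[\widehat{\deg}_{A^{t^*}}(x)]\le\log(t^*)\,\target^{t^*}(x)$ and then the Chernoff step (\Cref{lem:bound deg hat}) go through. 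Without these two ingredients, the assertion that proactive resampling ``reduces the per-level analysis to an independent-Bernoulli process'' is exactly the claim to be proven, not a proof.
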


Due to space limit, the proof of \Cref{lemma:overhead} will given in \Cref{sec:proof_overhead}.
Assuming \Cref{lemma:overhead}, we formally show how to bound of machine loads implies the total recourse, which proves \Cref{lemma:second_guarantee}.

\begin{proof}[Proof of \Cref{lemma:second_guarantee}]

Let $T$ be the total number of deletions.
Observe that the total recourse up to time $T$ is precisely the total number of $\Resample(.)$ calls up to time $T$, which in turn is at most the total number of $\Resample(.)$ calls put into $\Schedule(.)$ since time $1$ until time $T$.
Therefore, our strategy is to bound, for each time $t$, the number of $\Resample(.)$ calls \emph{newly generated} at time $t$.
Let $x^t$ be the machine deleted at time $t$. Observe this is at most $O(\log T) \times \deg_{A^t}(x^t)$ where $\deg_{A^t}(x^t)$ is $x^t$'s load at time $t$ and the $O(\log T)$ factor is because of proactive sampling.

By \Cref{lemma:overhead}, we have $\deg_{A^t}(x^t) \le O\left(\log{(T)}\target^t(x^t) + \log {|M|}\right)$. Also, we claim that $\sum_{t=1}^T \target^t(x^t) = O(|J| \log {\Delta})$ where $\Delta$ is the maximum degree of jobs (to be proven below).
Therefore, the total recourse up to time $T$ is at most
\begin{align*}
	O(\log T)\sum_{t=1}^{T}\deg_{A^{t}}(x^{t})
	& \le O(\log T)\sum_{t=1}^{T}O\left(\log(T)\target^{t}(x^{t})+\log|M|\right)
	\\&\le O\left(|J|\log{(\Delta)}\log^{2}{|M|}+T\log^{2}{|M|}\right)
\end{align*}
as $T \le |M|$.

%
%
It remains to show that $\sum_{t=1}^T \target^t(x^t) = O(|J| \log \Delta)$.
%
Recall that $\target^t(x) = \sum_{r \ni x} \frac{1}{\deg_{G^t} (\job(r))}$.
Imagine when machine $x^t$ is deleted at time $t$. We will  show how to charge $\target^t(x^t)$ to jobs with edges connecting to $x^t$.
For each job $u$ with $c$ (hyper)edges connecting to $x^t$, $u$'s contribution of $\target^t(x^t)$ is  $c/\deg_{G^t}(u)$.
%
	So we distribute
	the charge of $c/\deg_{G^t}(u)
	\leq \frac{1}{\deg_{G^t}(u)} +\frac{1}{\deg_{G^t}(u) - 1} + \ldots + \frac{1}{\deg_{G^t}(u)-c+1}$ to $u$.
	Since these edges are charged from machine $x^t$ to job $u$ only once, the total charge of each job $u$ at most
	$\frac{1}{\deg_G(u)}+ \frac{1}{\deg_G(u)-1} + \dots + 1/2 + 1 = O(\log \Delta).$
	Since there are $|J|$ jobs, the bound $\sum_{t=1}^T \target^t(x^t) = O(|J| \log \Delta)$ follows.

    To see that we have worst-case recourse, one can look at any timestep $t$.
    There are $O(\log{t})$ timesteps that can cause $\Resample$ to be invoked at timestep $t$,
    namely, $t-1, t-2, t-4, \ldots$. At each of these timesteps $t'$, one machine is deleted, so
    the number of $\Resample$ calls added from timestep $t'$ is also bounded by
    the load of the deleted machine $x_{t'}$, which does not exceed $\lambda$.
    Summing this up, the number of calls we make at timestep $t$ is at most $O(\lambda \log{t}) = O(\lambda \log{T})$.
	This concludes our proof.
\end{proof}

\section{Conclusion}

In this paper, we study fully dynamic spanner algorithms against an
adaptive adversary. Our algorithm in \Cref{thm:main greedy} maintains
a spanner with near-optimal stretch-size trade-off using only $O(\log n)$
amortized recourse. This closes the current oblivious-vs-adaptive gap with
respect to amortized recourse. Whether the gap can be closed for \emph{worst-case
recourse} is an interesting open problem.

The ultimate goal is to show algorithms against an adaptive adversary
with polylogarithmic amortized update time or even worst-case. Via
the multiplicative weight update framework \cite{Fleischer00,GargK07}, such algorithms would
imply $O(k)$-approximate multi-commodity flow algorithm with $\Otil(n^{2+1/k})$
time which would in turn improve the state-of-the-art.
We made partial
progress toward this goal by showing the first dynamic 3-spanner algorithms
against an adaptive adversary with $\Otil(\sqrt{n})$ update time
in \Cref{thm:main det worst case} and \emph{simultaneously} with
$\Otil(1)$ amortized recourse in \Cref{thm:main proactive}, improving
upon the $O(n)$ amortized update time since the 15-year-old work by \cite{AusielloFI06}.

Generalizing our \Cref{thm:main det worst case} to dynamic $(2k-1)$-spanners of size $\Otil(n^{1+1/k})$, for any $k \ge 2$, is also a very interesting and challenging open question.

\appendix

\bibliography{citations}

\begin{thebibliography}{10}

\bibitem{abboud2021apmf}
Amir Abboud, Robert Krauthgamer, and Ohad Trabelsi.
\newblock Apmf< apsp? gomory-hu tree for unweighted graphs in almost-quadratic
  time.
\newblock {\em arXiv preprint arXiv:2106.02981}, 2021.

\bibitem{AbrahamDKKP16}
Ittai Abraham, David Durfee, Ioannis Koutis, Sebastian Krinninger, and Richard
  Peng.
\newblock On fully dynamic graph sparsifiers.
\newblock In {\em FOCS}, pages 335--344, 2016.
\newblock \href {https://doi.org/10.1109/FOCS.2016.44}
  {\path{doi:10.1109/FOCS.2016.44}}.

\bibitem{alon2021adversarial}
Noga Alon, Omri Ben-Eliezer, Yuval Dagan, Shay Moran, Moni Naor, and Eylon
  Yogev.
\newblock Adversarial laws of large numbers and optimal regret in online
  classification.
\newblock In {\em Proceedings of the 53rd Annual ACM SIGACT Symposium on Theory
  of Computing}, pages 447--455, 2021.

\bibitem{AlthoferDDJS93}
Ingo Alth{\"{o}}fer, Gautam Das, David~P. Dobkin, Deborah Joseph, and
  Jos{\'{e}} Soares.
\newblock On sparse spanners of weighted graphs.
\newblock {\em Discrete {\&} Computational Geometry}, 9:81--100, 1993.
\newblock \href {https://doi.org/10.1007/BF02189308}
  {\path{doi:10.1007/BF02189308}}.

\bibitem{dcg/Althofer93}
Ingo Alth{\"{o}}fer, Gautam Das, David~P. Dobkin, Deborah Joseph, and
  Jos{\'{e}} Soares.
\newblock On sparse spanners of weighted graphs.
\newblock {\em Discret. Comput. Geom.}, 9:81--100, 1993.
\newblock \href {https://doi.org/10.1007/BF02189308}
  {\path{doi:10.1007/BF02189308}}.

\bibitem{AusielloFI06}
Giorgio Ausiello, Paolo~Giulio Franciosa, and Giuseppe~F. Italiano.
\newblock Small stretch spanners on dynamic graphs.
\newblock {\em J. Graph Algorithms Appl.}, 10(2):365--385, 2006.
\newblock Announced at ESA'05.
\newblock URL:
  \url{http://jgaa.info/accepted/2006/AusielloFranciosaItaliano2006.10.2.pdf}.

\bibitem{avin2020dynamic}
Chen Avin, Marcin Bienkowski, Andreas Loukas, Maciej Pacut, and Stefan Schmid.
\newblock Dynamic balanced graph partitioning.
\newblock {\em SIAM Journal on Discrete Mathematics}, 34(3):1791--1812, 2020.

\bibitem{soda/BadanidiyuruV14}
Ashwinkumar Badanidiyuru and Jan Vondr{\'{a}}k.
\newblock Fast algorithms for maximizing submodular functions.
\newblock In Chandra Chekuri, editor, {\em Proceedings of the Twenty-Fifth
  Annual {ACM-SIAM} Symposium on Discrete Algorithms, {SODA} 2014, Portland,
  Oregon, USA, January 5-7, 2014}, pages 1497--1514. {SIAM}, 2014.
\newblock \href {https://doi.org/10.1137/1.9781611973402.110}
  {\path{doi:10.1137/1.9781611973402.110}}.

\bibitem{bassily2021algorithmic}
Raef Bassily, Kobbi Nissim, Adam Smith, Thomas Steinke, Uri Stemmer, and
  Jonathan Ullman.
\newblock Algorithmic stability for adaptive data analysis.
\newblock {\em SIAM Journal on Computing}, (0):STOC16--377, 2021.

\bibitem{baswana2016dynamic}
Surender Baswana, Shreejit~Ray Chaudhury, Keerti Choudhary, and Shahbaz Khan.
\newblock Dynamic dfs in undirected graphs: breaking the o (m) barrier.
\newblock In {\em Proceedings of the twenty-seventh Annual ACM-SIAM Symposium
  on Discrete Algorithms}, pages 730--739. SIAM, 2016.

\bibitem{BaswanaKS12}
Surender Baswana, Sumeet Khurana, and Soumojit Sarkar.
\newblock Fully dynamic randomized algorithms for graph spanners.
\newblock {\em {ACM} Trans. Algorithms}, 8(4):35:1--35:51, 2012.
\newblock Announced at SODA'08.
\newblock \href {https://doi.org/10.1145/2344422.2344425}
  {\path{doi:10.1145/2344422.2344425}}.

\bibitem{BehnezhadDHSS19}
Soheil Behnezhad, Mahsa Derakhshan, MohammadTaghi Hajiaghayi, Cliff Stein, and
  Madhu Sudan.
\newblock Fully dynamic maximal independent set with polylogarithmic update
  time.
\newblock In {\em 60th {IEEE} Annual Symposium on Foundations of Computer
  Science, {FOCS} 2019, Baltimore, Maryland, USA, November 9-12, 2019}, pages
  382--405, 2019.
\newblock \href {https://doi.org/10.1109/FOCS.2019.00032}
  {\path{doi:10.1109/FOCS.2019.00032}}.

\bibitem{ben2020framework}
Omri Ben-Eliezer, Rajesh Jayaram, David~P Woodruff, and Eylon Yogev.
\newblock A framework for adversarially robust streaming algorithms.
\newblock In {\em Proceedings of the 39th ACM SIGMOD-SIGACT-SIGAI symposium on
  principles of database systems}, pages 63--80, 2020.

\bibitem{Bernstein17}
Aaron Bernstein.
\newblock Deterministic partially dynamic single source shortest paths in
  weighted graphs.
\newblock In {\em ICALP}, volume~80, pages 44:1--44:14, 2017.

\bibitem{Bernstein2020fully}
Aaron Bernstein, Jan van~den Brand, Maximilian~Probst Gutenberg, Danupon
  Nanongkai, Thatchaphol Saranurak, Aaron Sidford, and He~Sun.
\newblock Fully-dynamic graph sparsifiers against an adaptive adversary.
\newblock {\em arXiv preprint arXiv:2004.08432}, 2020.

\bibitem{BernsteinC16}
Aaron Bernstein and Shiri Chechik.
\newblock Deterministic decremental single source shortest paths: beyond the
  $o(mn)$ bound.
\newblock In {\em {STOC}}, pages 389--397, 2016.

\bibitem{BernsteinChechikSparse}
Aaron Bernstein and Shiri Chechik.
\newblock Deterministic partially dynamic single source shortest paths for
  sparse graphs.
\newblock In {\em SODA}, pages 453--469, 2017.

\bibitem{BernsteinC18cycle}
Aaron Bernstein and Shiri Chechik.
\newblock Incremental topological sort and cycle detection in expected total
  time.
\newblock In {\em Proceedings of the Twenty-Ninth Annual {ACM-SIAM} Symposium
  on Discrete Algorithms, {SODA} 2018, New Orleans, LA, USA, January 7-10,
  2018}, pages 21--34, 2018.
\newblock \href {https://doi.org/10.1137/1.9781611975031.2}
  {\path{doi:10.1137/1.9781611975031.2}}.

\bibitem{BernsteinFH19}
Aaron Bernstein, Sebastian Forster, and Monika Henzinger.
\newblock A deamortization approach for dynamic spanner and dynamic maximal
  matching.
\newblock In {\em SODA}, pages 1899--1918, 2019.

\bibitem{BernsteinGS21}
Aaron Bernstein, Maximilian~Probst Gutenberg, and Thatchaphol Saranurak.
\newblock Deterministic decremental sssp and approximate min-cost flow in
  almost-linear time.
\newblock {\em arXiv preprint arXiv:2101.07149}, 2021.

\bibitem{BernsteinPW19}
Aaron Bernstein, Maximilian Probst, and Christian Wulff{-}Nilsen.
\newblock Decremental strongly-connected components and single-source
  reachability in near-linear time.
\newblock In {\em STOC}, pages 365--376, 2019.

\bibitem{bhattacharya2021online}
Sayan Bhattacharya, Fabrizio Grandoni, and David Wajc.
\newblock Online edge coloring algorithms via the nibble method.
\newblock In {\em Proceedings of the 2021 ACM-SIAM Symposium on Discrete
  Algorithms (SODA)}, pages 2830--2842. SIAM, 2021.

\bibitem{BhattacharyaHI15}
Sayan Bhattacharya, Monika Henzinger, and Giuseppe~F. Italiano.
\newblock Deterministic fully dynamic data structures for vertex cover and
  matching.
\newblock In {\em SODA}, 2015.

\bibitem{BhattacharyaHN16}
Sayan Bhattacharya, Monika Henzinger, and Danupon Nanongkai.
\newblock New deterministic approximation algorithms for fully dynamic
  matching.
\newblock In {\em {STOC}}, pages 398--411, 2016.

\bibitem{BhattacharyaHN17}
Sayan Bhattacharya, Monika Henzinger, and Danupon Nanongkai.
\newblock Fully dynamic approximate maximum matching and minimum vertex cover
  in \emph{O}(log\({}^{\mbox{3}}\) \emph{n}) worst case update time.
\newblock In {\em {SODA}}, pages 470--489, 2017.

\bibitem{BhattacharyaK21deterministic}
Sayan Bhattacharya and Peter Kiss.
\newblock Deterministic rounding of dynamic fractional matchings.
\newblock {\em arXiv preprint arXiv:2105.01615}, 2021.

\bibitem{BhattacharyaK19}
Sayan Bhattacharya and Janardhan Kulkarni.
\newblock Deterministically maintaining a $(2 +\epsilon)$-approximate minimum
  vertex cover in $o(1/\epsilon^2)$ amortized update time.
\newblock In {\em {SODA}}, 2019.

\bibitem{BodwinK16}
Greg Bodwin and Sebastian Krinninger.
\newblock Fully dynamic spanners with worst-case update time.
\newblock In {\em ESA}, pages 17:1--17:18, 2016.
\newblock \href {https://doi.org/10.4230/LIPIcs.ESA.2016.17}
  {\path{doi:10.4230/LIPIcs.ESA.2016.17}}.

\bibitem{BrandLLSS0W21}
Jan van~den Brand, Yin~Tat Lee, Yang~P. Liu, Thatchaphol Saranurak, Aaron
  Sidford, Zhao Song, and Di~Wang.
\newblock Minimum cost flows, mdps, and l1-regression in nearly linear time for
  dense instances.
\newblock In {\em {STOC}}, pages 859--869. {ACM}, 2021.

\bibitem{BrandLNPSSSW20}
Jan van~den Brand, Yin~Tat Lee, Danupon Nanongkai, Richard Peng, Thatchaphol
  Saranurak, Aaron Sidford, Zhao Song, and Di~Wang.
\newblock Bipartite matching in nearly-linear time on moderately dense graphs.
\newblock In {\em {FOCS}}, pages 919--930. {IEEE}, 2020.

\bibitem{Braverman2021adversarial}
Vladimir Braverman, Avinatan Hassidim, Yossi Matias, Mariano Schain, Sandeep
  Silwal, and Samson Zhou.
\newblock Adversarial robustness of streaming algorithms through importance
  sampling.
\newblock {\em arXiv preprint arXiv:2106.14952}, 2021.

\bibitem{Censor-HillelHK16}
Keren Censor{-}Hillel, Elad Haramaty, and Zohar~S. Karnin.
\newblock Optimal dynamic distributed {MIS}.
\newblock In {\em Proceedings of the 2016 {ACM} Symposium on Principles of
  Distributed Computing, {PODC} 2016, Chicago, IL, USA, July 25-28, 2016},
  pages 217--226, 2016.
\newblock \href {https://doi.org/10.1145/2933057.2933083}
  {\path{doi:10.1145/2933057.2933083}}.

\bibitem{ChechikZ19}
Shiri Chechik and Tianyi Zhang.
\newblock Fully dynamic maximal independent set in expected poly-log update
  time.
\newblock In {\em 60th {IEEE} Annual Symposium on Foundations of Computer
  Science, {FOCS} 2019, Baltimore, Maryland, USA, November 9-12, 2019}, pages
  370--381, 2019.
\newblock \href {https://doi.org/10.1109/FOCS.2019.00031}
  {\path{doi:10.1109/FOCS.2019.00031}}.

\bibitem{chechik2020dynamic}
Shiri Chechik and Tianyi Zhang.
\newblock Dynamic low-stretch spanning trees in subpolynomial time.
\newblock In {\em Proceedings of the Fourteenth Annual ACM-SIAM Symposium on
  Discrete Algorithms}, pages 463--475. SIAM, 2020.

\bibitem{chen2020fast}
Li~Chen, Gramoz Goranci, Monika Henzinger, Richard Peng, and Thatchaphol
  Saranurak.
\newblock Fast dynamic cuts, distances and effective resistances via vertex
  sparsifiers.
\newblock In {\em 2020 IEEE 61st Annual Symposium on Foundations of Computer
  Science (FOCS)}, pages 1135--1146. IEEE, 2020.

\bibitem{Chuzhoy21}
Julia Chuzhoy.
\newblock Decremental all-pairs shortest paths in deterministic near-linear
  time.
\newblock In {\em {STOC} '21: 53rd Annual {ACM} {SIGACT} Symposium on Theory of
  Computing, Virtual Event, Italy, June 21-25, 2021}, pages 626--639, 2021.
\newblock \href {https://doi.org/10.1145/3406325.3451025}
  {\path{doi:10.1145/3406325.3451025}}.

\bibitem{ChuzhoyGLNPS19}
Julia Chuzhoy, Yu~Gao, Jason Li, Danupon Nanongkai, Richard Peng, and
  Thatchaphol Saranurak.
\newblock A deterministic algorithm for balanced cut with applications to
  dynamic connectivity, flows, and beyond.
\newblock {\em CoRR}, abs/1910.08025, 2019.

\bibitem{ChuzhoyK19}
Julia Chuzhoy and Sanjeev Khanna.
\newblock A new algorithm for decremental single-source shortest paths with
  applications to vertex-capacitated flow and cut problems.
\newblock In {\em {STOC}}, pages 389--400, 2019.

\bibitem{ChuzhoyS20}
Julia Chuzhoy and Thatchaphol Saranurak.
\newblock Deterministic algorithms for decremental shortest paths via
  layeredcore decomposition.
\newblock 2020.
\newblock In submission to SODA'21.

\bibitem{Doerr2020}
Benjamin Doerr.
\newblock {\em Probabilistic Tools for the Analysis of Randomized Optimization
  Heuristics}, pages 1--87.
\newblock Springer International Publishing, Cham, 2020.
\newblock \href {https://doi.org/10.1007/978-3-030-29414-4_1}
  {\path{doi:10.1007/978-3-030-29414-4_1}}.

\bibitem{dwork2015preserving}
Cynthia Dwork, Vitaly Feldman, Moritz Hardt, Toniann Pitassi, Omer Reingold,
  and Aaron~Leon Roth.
\newblock Preserving statistical validity in adaptive data analysis.
\newblock In {\em Proceedings of the forty-seventh annual ACM symposium on
  Theory of computing}, pages 117--126, 2015.

\bibitem{Elkin11}
Michael Elkin.
\newblock Streaming and fully dynamic centralized algorithms for constructing
  and maintaining sparse spanners.
\newblock {\em {ACM} Trans. Algorithms}, 7(2):20:1--20:17, 2011.
\newblock Announced at ICALP'07.
\newblock \href {https://doi.org/10.1145/1921659.1921666}
  {\path{doi:10.1145/1921659.1921666}}.

\bibitem{Fleischer00}
Lisa Fleischer.
\newblock Approximating fractional multicommodity flow independent of the
  number of commodities.
\newblock {\em {SIAM} J. Discrete Math.}, 13(4):505--520, 2000.
\newblock announced at FOCS'99.
\newblock \href {https://doi.org/10.1137/S0895480199355754}
  {\path{doi:10.1137/S0895480199355754}}.

\bibitem{ForsterG19}
Sebastian Forster and Gramoz Goranci.
\newblock Dynamic low-stretch trees via dynamic low-diameter decompositions.
\newblock In {\em STOC}, pages 377--388. {ACM}, 2019.

\bibitem{GargK07}
Naveen Garg and Jochen K{\"{o}}nemann.
\newblock Faster and simpler algorithms for multicommodity flow and other
  fractional packing problems.
\newblock {\em {SIAM} J. Comput.}, 37(2):630--652, 2007.
\newblock \href {https://doi.org/10.1137/S0097539704446232}
  {\path{doi:10.1137/S0097539704446232}}.

\bibitem{GoranciRST20}
Gramoz Goranci, Harald Raecke, Thatchaphol Saranurak, and Zihan Tan.
\newblock The expander hierarchy and its applications to dynamic graph
  algorithms.
\newblock 2020.
\newblock In submission to FOCS'20.

\bibitem{GrossmanP17}
Ofer Grossman and Merav Parter.
\newblock Improved deterministic distributed construction of spanners.
\newblock In {\em DISC}, pages 24:1--24:16, 2017.
\newblock \href {https://doi.org/10.4230/LIPIcs.DISC.2017.24}
  {\path{doi:10.4230/LIPIcs.DISC.2017.24}}.

\bibitem{gupta2014online}
Anupam Gupta and Amit Kumar.
\newblock Online steiner tree with deletions.
\newblock In {\em Proceedings of the twenty-fifth annual ACM-SIAM symposium on
  Discrete algorithms}, pages 455--467. SIAM, 2014.

\bibitem{gupta2014maintaining}
Anupam Gupta, Amit Kumar, and Cliff Stein.
\newblock Maintaining assignments online: Matching, scheduling, and flows.
\newblock In {\em Proceedings of the twenty-fifth annual ACM-SIAM symposium on
  Discrete algorithms}, pages 468--479. SIAM, 2014.

\bibitem{gupta2020fully}
Anupam Gupta and Roie Levin.
\newblock Fully-dynamic submodular cover with bounded recourse.
\newblock In {\em 2020 IEEE 61st Annual Symposium on Foundations of Computer
  Science (FOCS)}, pages 1147--1157. IEEE, 2020.

\bibitem{gupta2021adaptive}
Varun Gupta, Christopher Jung, Seth Neel, Aaron Roth, Saeed Sharifi-Malvajerdi,
  and Chris Waites.
\newblock Adaptive machine unlearning.
\newblock {\em arXiv preprint arXiv:2106.04378}, 2021.

\bibitem{GutenbergWW20}
Maximilian~Probst Gutenberg, Virginia~Vassilevska Williams, and Nicole Wein.
\newblock New algorithms and hardness for incremental single-source shortest
  paths in directed graphs.
\newblock In {\em Symposium on Theory of Computing}, 2020.
\newblock URL: \url{https://arxiv.org/abs/2001.10751}.

\bibitem{gutenberg2020decremental}
Maximilian~Probst Gutenberg and Christian Wulff-Nilsen.
\newblock Decremental {SSSP} in weighted digraphs: Faster and against an
  adaptive adversary.
\newblock In {\em SODA}, pages 2542--2561, 2020.

\bibitem{gutenberg2020deterministic}
Maximilian~Probst Gutenberg and Christian Wulff-Nilsen.
\newblock Deterministic algorithms for decremental approximate shortest paths:
  Faster and simpler.
\newblock In {\em SODA}, pages 2522--2541, 2020.

\bibitem{hardt2014preventing}
Moritz Hardt and Jonathan Ullman.
\newblock Preventing false discovery in interactive data analysis is hard.
\newblock In {\em 2014 IEEE 55th Annual Symposium on Foundations of Computer
  Science}, pages 454--463. IEEE, 2014.

\bibitem{hasidim2020adversarially}
Avinatan Hasidim, Haim Kaplan, Yishay Mansour, Yossi Matias, and Uri Stemmer.
\newblock Adversarially robust streaming algorithms via differential privacy.
\newblock {\em Advances in Neural Information Processing Systems}, 33, 2020.

\bibitem{holm2001poly}
Jacob Holm, Kristian De~Lichtenberg, and Mikkel Thorup.
\newblock Poly-logarithmic deterministic fully-dynamic algorithms for
  connectivity, minimum spanning tree, 2-edge, and biconnectivity.
\newblock {\em Journal of the ACM (JACM)}, 48(4):723--760, 2001.

\bibitem{HolmR20stoc}
Jacob Holm and Eva Rotenberg.
\newblock Fully-dynamic planarity testing in polylogarithmic time.
\newblock In {\em {STOC}}, pages 167--180. {ACM}, 2020.

\bibitem{HolmR20soda}
Jacob Holm and Eva Rotenberg.
\newblock Worst-case polylog incremental spqr-trees: Embeddings, planarity, and
  triconnectivity.
\newblock In {\em Proceedings of the 2020 {ACM-SIAM} Symposium on Discrete
  Algorithms, {SODA} 2020, Salt Lake City, UT, USA, January 5-8, 2020}, pages
  2378--2397, 2020.
\newblock \href {https://doi.org/10.1137/1.9781611975994.146}
  {\path{doi:10.1137/1.9781611975994.146}}.

\bibitem{kaplan2021separating}
Haim Kaplan, Yishay Mansour, Kobbi Nissim, and Uri Stemmer.
\newblock Separating adaptive streaming from oblivious streaming.
\newblock {\em arXiv preprint arXiv:2101.10836}, 2021.

\bibitem{kiss2021deterministic}
Peter Kiss.
\newblock Deterministic dynamic matching in worst-case update time.
\newblock {\em arXiv preprint arXiv:2108.10461}, 2021.

\bibitem{Li21}
Jason Li.
\newblock Deterministic mincut in almost-linear time.
\newblock In {\em {STOC}}, pages 384--395. {ACM}, 2021.

\bibitem{li2021nearly}
Jason Li, Debmalya Panigrahi, and Thatchaphol Saranurak.
\newblock A nearly optimal all-pairs min-cuts algorithm in simple graphs.
\newblock {\em arXiv preprint arXiv:2106.02233}, 2021.

\bibitem{liS2021}
Jason Li and Thatchaphol Saranurak.
\newblock Deterministic weighted expander decomposition in almost-linear time.
\newblock {\em arXiv preprint arXiv:2106.01567}, 2021.

\bibitem{NanongkaiS17}
Danupon Nanongkai and Thatchaphol Saranurak.
\newblock Dynamic spanning forest with worst-case update time: adaptive, {L}as
  vegas, and ${O}(n^{1/2-\epsilon})$-time.
\newblock In {\em STOC}, pages 1122--1129, 2017.
\newblock \href {https://doi.org/10.1145/3055399.3055447}
  {\path{doi:10.1145/3055399.3055447}}.

\bibitem{NanongkaiSW17}
Danupon Nanongkai, Thatchaphol Saranurak, and Christian Wulff{-}Nilsen.
\newblock Dynamic minimum spanning forest with subpolynomial worst-case update
  time.
\newblock In {\em {FOCS}}, pages 950--961, 2017.

\bibitem{SaranurakW19}
Thatchaphol Saranurak and Di~Wang.
\newblock Expander decomposition and pruning: Faster, stronger, and simpler.
\newblock In {\em {SODA}}, pages 2616--2635, 2019.

\bibitem{steinke2017tight}
Thomas Steinke and Jonathan Ullman.
\newblock Tight lower bounds for differentially private selection.
\newblock In {\em 2017 IEEE 58th Annual Symposium on Foundations of Computer
  Science (FOCS)}, pages 552--563. IEEE, 2017.

\bibitem{thorup2005worst}
Mikkel Thorup.
\newblock Worst-case update times for fully-dynamic all-pairs shortest paths.
\newblock In {\em Proceedings of the thirty-seventh annual ACM symposium on
  Theory of computing}, pages 112--119, 2005.

\bibitem{Wajc19}
David Wajc.
\newblock Rounding dynamic matchings against an adaptive adversary.
\newblock {\em Symposium on Theory of Computing}, 2020.
\newblock URL: \url{http://arxiv.org/abs/1911.05545}.

\bibitem{wenger1991extremal}
Rephael Wenger.
\newblock Extremal graphs with no c4's, c6's, or c10's.
\newblock {\em Journal of Combinatorial Theory, Series B}, 52(1):113--116,
  1991.

\bibitem{woodruff2020tight}
David~P Woodruff and Samson Zhou.
\newblock Tight bounds for adversarially robust streams and sliding windows via
  difference estimators.
\newblock {\em arXiv preprint arXiv:2011.07471}, 2020.

\bibitem{Wulff-Nilsen17}
Christian Wulff{-}Nilsen.
\newblock Fully-dynamic minimum spanning forest with improved worst-case update
  time.
\newblock In {\em STOC}, pages 1130--1143, 2017.
\newblock \href {https://doi.org/10.1145/3055399.3055415}
  {\path{doi:10.1145/3055399.3055415}}.

\bibitem{zhang2021faster}
Tianyi Zhang.
\newblock Faster cut-equivalent trees in simple graphs.
\newblock {\em arXiv preprint arXiv:2106.03305}, 2021.

\end{thebibliography}

\section{Proof of \Cref{lemma:overhead}}
\label{sec:proof_overhead}


Here, we show that the load $\deg_{A^t}(x)$ of every machine $x$ at each time $t$ is small.
Some basic notions are needed in the analysis.

\medskip\noindent
\textbf{Experiments and Relevant Experiments.}
An \emph{experiment} $X$ is a binary random variable associated with an edge/routine $e$ and time step $t$, where $X = 1$ iff $\Sampling(\job(e))$ is called at time $t$ and $e$ is chosen to handle $\job(e)$, among all edges incident to $\job(e)$. Observe that $\P[X=1] = 1/\deg_{G^t}(\job(e))$.
Note that each call to $\Sampling(u)$ at time $t$ creates new $\deg_{G^t}(u)$ experiments. We order all experiments $X_1,X_2,X_3,\dots$ by the time of their creation.
For convenience, for each experiment $X$, we let $e(X)$, $t(X)$, and $\job(X)$ denote its edge, time of creation, and job respectively.

Next, we define the most important notion in the whole analysis.
\begin{definition}\label{def:rel}
	For any time $t$ and edge $e \in R^t$ at time $t$,
	an experiment $X$ is \emph{$(t,e)$-relevant} if
	\begin{itemize}
		\item $e(X) = e$, and
		\item there is no $t(X) < t' < t$ such that $t' \in \Schedule^{t(X)}(\job(e))$.
	\end{itemize}
	Moreover, $X$ is $(t,x)$-relevant if it is $(t,e)$-relevant and edge $e \in R^{t}(x)$ is incident to $x$.
\end{definition}
Intuitively, $X$ is a $(t,e)$-relevant experiment if $X$ could cause $e$ to appear in the assignment $A^t$ at time $t$.
To see why, clearly if $e(X) \neq e$, then $X$ cannot cause $e$ to appear. Otherwise, if $e(X) = e$ but there is $t' \in (t(X),t)$ where $t' \in \Schedule^{t(X)}(\job(e))$, then $X$ cannot cause $e$ to appear at time $t$ either.
This is because even if $X$ is successful and so $e$ appears at time $t(X)$, then later at time $t' > t(X)$, $e$ will be resampled again, and so $X$ has nothing to do whether $e$ appears at time $t > t'$.
With the same intuition, $X$ is $(t,x)$-relevant if $X$ could contribute to the load $\deg_{A^t}(x)$ of machine $x$ at time $t$.

It is important to note that, we decide whether $X$ is a $(t,e)$-relevant based on $\Schedule^{t(X)}(\job(e))$ at time $t(X)$. If it was based on $\Schedule^{t}(\job(e))$ at time $t$, then there would be only a single experiment $X$ that is $(t,e)$-relevant (which is the one with $e(X)=e$ and maximum $t(X)<t$).

According to \Cref{def:rel}, there could be more than one experiments that are $(t,e)$-relevant. For example, suppose $X$ is $(t,e)$-relevant. At time $t(X) + 1$, the adversary could touch $\job(e)$, hence, adding $t(X) + 2, t(X) + 4, \ldots$ into $\Schedule(\job(e))$. Because of this action, there is another experiment $X'$ that is $(t,e)$-relevant and $t(X') > t(X)$. This motivates the following definition.

\begin{definition}
	Let $Rel(t,e)$ be the random variable denoting the number of $(t,e)$-relevant experiments,
	and let  $Rel(t,x) = \sum_{e \in R^t(x)} Rel(t,e)$ denote the total number of $(t,x)$-relevant experiments.
\end{definition}

To simplify the notations in the proof of \Cref{lemma:overhead} below, we assume the following.
\begin{assumption}[The Machine-disjoint Assumption]\label{assump:disjoint}
	For any routines $e,e'$
	with $\job(e) = \job(e')$, then $M(e) \cap M(e') = \emptyset$. That is, the edges adjacent to the same job are machine-disjoint.
\end{assumption}
Note that this assumption indeed holds for our $3$-spanner application. This is because any two paths of length $2$ between a pair of centers $u$ and $u'$ must be edge disjoint in any simple graph.
We show how remove this assumption in \Cref{sec:lift_assumption}, but the notations are more complicated.

\medskip\noindent
\textbf{Roadmap for Bounding Loads.}
We are now ready to describe the key steps for bounding the load $\deg_{A^{t}}(x)$, for any time $t$ and machine $x$.

First, we write $\mathcal X^{(t,x)} = X_1^{(t,x)}, X_2^{(t,x)}, \ldots, X_{Rel(t,x)}^{(t,x)}$ as the sequence of all $(t, x)$-relevant experiments (ordered by time step the experiments are taken).
The order in $\mathcal X^{(t,x)}$ will be important only later. For now,
we write
$$\overline{\deg}_{A^t}(x) = \sum_{X \in \mathcal X^{(t,x)}} X,$$
as the total number of success $(t, x)$-relevant experiments.
As any edge $e$ adjacent to $x$ in $A^t$ may appear only because of some successful $(t,x)$-relevant experiment $X \in \mathcal  X^{(t,x)}$, we conclude the following:
\begin{lemma}[Key Step 1]
	\label{lem:bound deg}
	$\deg_{A^{t}}(x)\le\overline{\deg}_{A^{t}}(x)$.
\end{lemma}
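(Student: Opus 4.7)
The plan is to construct an injection from the edges $e \in A^t \cap R^t(x)$ into the set of successful $(t,x)$-relevant experiments; this map together with the definition of $\overline{\deg}_{A^t}(x)$ yields the bound immediately. For each such edge $e$, I would define a canonical witness experiment $X^\star(e)$ as follows. Since $A^t$ is feasible and $e$ is incident to $\job(e)$ in $A^t$, there must be a latest time $t^\star \le t$ at which $\Resample(\job(e))$ was invoked and happened to sample $e$ (taking $t^\star = 0$ if it is the initialization call). I let $X^\star(e)$ be the experiment created by that call with $e(X^\star(e))=e$. By construction $X^\star(e)=1$, and since $e \in R^t(x)$, the first clause of $(t,x)$-relevance in \Cref{def:rel} is immediate.

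The delicate step, which I expect to be the main obstacle, is verifying the second clause of $(t,x)$-relevance: no $t' \in (t^\star, t)$ lies in $\Schedule^{t^\star}(\job(e))$. The key observation I would leverage is that the schedule is append-only---entries are only added when a job is touched or when proactive resampling is triggered, and are never removed. Combined with the maximality of $t^\star$ as the most recent $\Resample(\job(e))$ call before $t$, any $t' \in \Schedule^{t^\star}(\job(e)) \cap (t^\star, t)$ would still be present in $\Schedule^{t'-1}(\job(e))$ and would therefore force a $\Resample(\job(e))$ call at time $t'$, contradicting the maximality of $t^\star$. I would also need to rule out an adversarial touch of $\job(e)$ in the open interval $(t^\star, t)$, which is handled by the algorithm's rule of firing $\Resample(\job(e))$ immediately whenever $\job(e)$ is touched. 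Hence $X^\star(e)$ is $(t,x)$-relevant.

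It remains to verify injectivity of the map $e \mapsto X^\star(e)$ and to conclude. The job of an edge is recoverable from the experiment since $\job(X^\star(e)) = \job(e)$, so it suffices to show that distinct edges in $A^t \cap R^t(x)$ have distinct jobs. This follows from feasibility of $A^t$ (at most one routine handles each job), or equivalently under \Cref{assump:disjoint}, since two routines of the same job cannot share the machine $x$. Summing over the injection then gives
$$\deg_{A^t}(x) \;=\; |A^t \cap R^t(x)| \;\le\; \bigl|\{X \in \mathcal X^{(t,x)} : X = 1\}\bigr| \;=\; \overline{\deg}_{A^t}(x),$$
which establishes the lemma. The only nontrivial content is the schedule-monotonicity argument in the second paragraph; the rest is bookkeeping against the algorithm's definitions.
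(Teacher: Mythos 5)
Your proposal is correct and is essentially the paper's argument: the paper proves \Cref{lem:bound deg} by the single observation that every edge of $A^t$ incident to $x$ can only be present because of some successful $(t,x)$-relevant experiment, which is exactly your injection $e \mapsto X^\star(e)$. Your extra work verifying the schedule condition for relevance (via the maximality of $t^\star$ and the append-only schedule) just spells out the detail the paper leaves implicit, so there is no substantive difference in route.
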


\Cref{lem:bound deg} reduces the problem to bounding $\overline{\deg}_{A^{t}}(x)$.
If all $(t, x)$-relevant experiments $\mathcal X^{(t,x)}= \{X_i^{(t,x)}\}_i$ were independent, then we could have easily applied standard concentration bounds to $\overline{\deg}_{A^{t}}(x) = \sum_{X \in \mathcal X^{(t,x)}} X$. Unfortunately, they are not independent as the outcome of earlier experiments can affect
the adversary's actions, which in turn affect later experiments.

Our strategy is to relate the sequence $\mathcal X^{(t,x)}$ of $(t, x)$-relevant experiments to another  sequence  $\hat{\mathcal X}^{(t,x)} = \hat{X}_1^{(t,x)}, \hat{X}_2^{(t,x)} \ldots, \hat{X}_{Rel(t,x)}^{(t,x)}$ of \emph{independent} random variables defined as follows.
For each $(t, x)$-relevant experiment $\hat{X}_i^{(t,x)}$ where  $e = e(\hat{X}_i^{(t,x)})$ and $u = \job(e)$,
we carefully define $\hat{X}_i^{(t,x)}$ as an \emph{independent} binary random variable such that $$\P[\hat{X}_i^{(t,x)} = 1 ]= 1 /\deg_{G^t}(u),$$ which is the probability that $\Sampling(u)$ chooses $e$ at time $t$. We similarly define
$$\widehat{\deg}_{A^t}(x) = \sum_{\hat{X} \in \hat{\mathcal X}^{(t,x)}} \hat X,$$
that sums independent random variables, where each term in the sum is closely related to the corresponding $(t, x)$-relevant experiments.
By our careful choice of $\P[\hat{X}_i^{(t,x)} = 1 ]$, we can
relates $\widehat{\deg}_{A^t}(x)$ to $\overline{\deg}_{A^t}(x)$ via the notion of \emph{stochastic dominance} defined below.

\begin{definition}
	Let $Y$ and $Z$ be two random variables not necessarily defined on the same probability space.
	We say that $Z$ \emph{stochastically dominates} $Y$, written as $Y \preceq Z$, if for all $\lambda \in \mathbb R$, we have $\P[Y \geq \lambda] \leq \P[Z \geq \lambda]$.
\end{definition}

Our second important step is to prove the following:
\begin{lemma}[Key Step 2]
	\label{lem:bound deg bar}
	$\overline{\deg}_{A^t}(x) \preceq \widehat{\deg}_{A^t}(x)$.
\end{lemma}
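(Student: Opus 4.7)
The plan is to construct a coupling between each real $(t,x)$-relevant experiment $X_i^{(t,x)}$ and its auxiliary counterpart $\hat{X}_i^{(t,x)}$ such that $X_i^{(t,x)} \leq \hat{X}_i^{(t,x)}$ holds pointwise in the joint probability space; the desired marginal stochastic dominance $\overline{\deg}_{A^t}(x) \preceq \widehat{\deg}_{A^t}(x)$ will then follow immediately by summing over the common list $\mathcal{X}^{(t,x)}$. The cornerstone enabling this coupling is monotonicity of degrees: because $G$ only undergoes machine deletions, for every job $u$ and every pair of times $t(X_i) \leq t$ we have $\deg_{G^{t(X_i)}}(u) \geq \deg_{G^t}(u)$, and so the target success probability $1/\deg_{G^t}(u)$ is at least the actual success probability $1/\deg_{G^{t(X_i)}}(u) = \P[X_i^{(t,x)} = 1]$.

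To build the coupling explicitly, for each experiment $X_i^{(t,x)}$ I would draw a fresh auxiliary uniform $W_i \in [0,1]$, independent of everything else, and set $\hat{X}_i^{(t,x)} = 1$ whenever $X_i^{(t,x)} = 1$; conditional on $X_i^{(t,x)} = 0$, I would set $\hat{X}_i^{(t,x)} = 1$ exactly when $W_i$ falls below the ``top-up'' threshold $p_i = \bigl(1/\deg_{G^t}(u) - 1/\deg_{G^{t(X_i)}}(u)\bigr) / \bigl(1 - 1/\deg_{G^{t(X_i)}}(u)\bigr)$, which lies in $[0,1]$ by the monotonicity observation above (with the degenerate case $\deg_{G^{t(X_i)}}(u) = 1$ handled separately, as then $X_i^{(t,x)} = 1$ deterministically and $\deg_{G^t}(u) \in \{0,1\}$ forces $\hat{X}_i^{(t,x)} = 1$ as well). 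A direct computation then shows that, conditional on the real process, $\hat{X}_i^{(t,x)}$ has Bernoulli marginal with the required parameter $1/\deg_{G^t}(u)$, and the $\hat{X}_i^{(t,x)}$'s are mutually independent across different $i$ because the $W_i$'s are drawn independently; crucially, by construction $X_i^{(t,x)} \leq \hat{X}_i^{(t,x)}$ always.

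Because the list $\mathcal{X}^{(t,x)}$ of $(t,x)$-relevant experiments is entirely determined by the real process, the same list indexes both $\overline{\deg}_{A^t}(x)$ and $\widehat{\deg}_{A^t}(x)$ in the coupled space, so summing the pointwise inequality $X_i^{(t,x)} \leq \hat{X}_i^{(t,x)}$ over this common list yields $\overline{\deg}_{A^t}(x) \leq \widehat{\deg}_{A^t}(x)$ pointwise, which immediately gives the desired stochastic dominance. The main subtlety I anticipate is justifying that the boosting step may legitimately use $\deg_{G^t}(u)$---a quantity depending on the adversary's adaptive reactions to all prior experiments---without spoiling the coupling: the resolution is that the $W_i$'s are drawn fresh and independently of the entire process, so the pointwise bound $X_i^{(t,x)} \leq \hat{X}_i^{(t,x)}$ remains valid no matter how the adversary behaves, and the conditional-on-process independence of the $\hat{X}_i^{(t,x)}$'s is exactly the property later needed when invoking Chernoff-type bounds on $\widehat{\deg}_{A^t}(x)$ in the proof of \Cref{lemma:overhead}.
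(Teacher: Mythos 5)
Your coupling strategy is, in essence, a from-scratch proof of the domination lemma the paper imports (\Cref{lem:doerr}), and your degree-monotonicity observation is exactly \Cref{claim:upperbound_prob_exp}. The gap is in the other half. For the coupled variables $\hat X_i^{(t,x)}$ to be mutually independent with marginal $1/\deg_{G^t}(u)$, the top-up must be performed sequentially against the \emph{conditional} success probability $\P[X_i^{(t,x)}=1\mid X_1^{(t,x)},\dots,X_{i-1}^{(t,x)}]$ under the realized history, not against the nominal probability $1/\deg_{G^{t(X_i)}}(u)$ that enters your threshold $p_i$. The two agree only because of the machine-disjointness assumption (\Cref{assump:disjoint}): experiments generated by the same call to $\Resample(u)$ are negatively correlated (at most one of them can succeed), so if two such siblings could both be $(t,x)$-relevant, conditioning on the earlier one failing pushes the later one's success probability above $1/\deg_{G^{t(X_i)}}(u)$; then your marginal computation fails and the coupled $\hat X_i^{(t,x)}$ are neither independent nor Bernoulli with the claimed parameter. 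Ruling this out is precisely the content of \Cref{claim:history_wont_help}, which is where \Cref{assump:disjoint} is used (and why \Cref{sec:lift_assumption} is needed to remove it); your proposal never proves or even invokes anything of this kind.

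Relatedly, the assertion that the $\hat X_i^{(t,x)}$ are mutually independent ``because the $W_i$'s are drawn independently'' does not hold as stated: freshness of the $W_i$'s yields independence only \emph{conditionally on the real process}, whereas $\hat X_i^{(t,x)}\ge X_i^{(t,x)}$ inherits the correlations among the $X_i^{(t,x)}$'s (which are genuine, both from the adaptive adversary reacting to outcomes and from mutually exclusive siblings of a single resampling); the dependence of your threshold on the future-determined quantity $\deg_{G^t}(u)$ is a further wrinkle that freshness of $W_i$ alone does not resolve. Consequently the pointwise inequality only bounds $\overline{\deg}_{A^t}(x)$ by the sum of \emph{your} coupled variables, and you have not shown that this sum is distributed as $\widehat{\deg}_{A^t}(x)$, i.e., as a sum of genuinely independent Bernoullis, which is exactly what the Chernoff argument in Key Step 3 (\Cref{lem:bound deg hat}) requires. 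The repair is to establish $\P[X_i^{(t,x)}=1\mid X_1^{(t,x)},\dots,X_{i-1}^{(t,x)}]\le 1/\deg_{G^{t(X_i)}}(\job(e))\le 1/\deg_{G^t}(\job(e))$ (the first inequality via \Cref{assump:disjoint}, the second via monotonicity under deletions) and top up with respect to the realized conditional probability, so that each $\hat X_i^{(t,x)}$ is Bernoulli with the target parameter independently of everything earlier; this is exactly the paper's route through \Cref{claim:history_wont_help}, \Cref{claim:upperbound_prob_exp}, and \Cref{lem:doerr}, whose standard proof is the coupling you are attempting.
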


\Cref{lem:bound deg bar}, which will be proven in \Cref{sub:key 2}, reduces the problem to bounding $\widehat{\deg}_{A^{t}}(x)$, which is indeed relatively easy to bound because it is a sum of independent random variables.
The last key step of our proof does exactly this:

\begin{lemma}[Key Step 3]
	\label{lem:bound deg hat}
	$ \widehat{\deg}_{A^t}(x) \le 2 \log{(t)} \target^t(x) + O(\log{|M|})$ with probability $1-1/{|M|}^{10}$.
\end{lemma}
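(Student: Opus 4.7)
By construction $\widehat{\deg}_{A^t}(x) = \sum_{\hat X \in \hat{\mathcal X}^{(t,x)}} \hat X$ is a sum of \emph{independent} Bernoulli variables with $\P[\hat X = 1] = 1/\deg_{G^t}(\job(e(\hat X)))$, so the proof reduces to (a) bounding its expectation and (b) applying a multiplicative Chernoff tail bound. Linearity gives
\[
\mu \;:=\; \E[\widehat{\deg}_{A^t}(x)] \;=\; \sum_{e \in R^t(x)} \frac{Rel(t,e)}{\deg_{G^t}(\job(e))},
\]
so everything hinges on a deterministic bound of the form $Rel(t,e) \le O(\log t)$ for every edge $e$, which I describe next.

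\textbf{Bounding $Rel(t,e)$.} Fix $e$, write $u = \job(e)$, and let $t_1 < t_2 < \cdots$ be the touch times of $u$. Because each touch $t_*$ inserts the doubling sequence $t_*+1, t_*+2, t_*+4, \ldots$ into $\Schedule(u)$, the next scheduled time from $t_*$ strictly after any $s \ge t_*$ is at most $2s - t_* + O(1)$. Hence, if $t_i \le s$ is the most recent touch, an experiment at time $s$ can be $(t,e)$-relevant only if $2s - t_i \ge t$, i.e.\ $s \ge (t + t_i)/2$. Call an epoch $[t_i, t_{i+1})$ \emph{contributing} if it contains at least one relevant time. Within the ``latter half'' window $[(t+t_i)/2, t_{i+1})$ of any epoch, the geometric sparsity of the doubling schedule (both from $t_i$ and from any earlier touch $t_j < t_i$) leaves only $O(1)$ candidate resampling times, and the relevance constraints enforced by earlier touches can only reduce that count further. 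Moreover, if two consecutive epochs $[t_i, t_{i+1})$ and $[t_{i+1}, t_{i+2})$ both contribute, then $s \ge (t+t_i)/2$ together with $s < t_{i+1}$ forces $t - t_{i+1} < (t - t_i)/2$. Thus $t - t_{\cdot}$ strictly halves across contributing epochs, bounding their number by $O(\log t)$ and yielding $Rel(t,e) = O(\log t)$.

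\textbf{Chernoff and main obstacle.} Substituting back, $\mu \le O(\log t) \cdot \target^t(x)$. A standard multiplicative Chernoff bound for sums of independent $\{0,1\}$ random variables (in the form $\P[Y > \max(c\mu, C \log|M|)] \le |M|^{-10}$ for appropriate constants) then yields $\widehat{\deg}_{A^t}(x) \le 2\log(t) \cdot \target^t(x) + O(\log |M|)$ with probability $\ge 1 - |M|^{-10}$, where the additive $O(\log|M|)$ term absorbs the regime $\mu \ll \log|M|$ and the constants in the counting step are tuned so that the multiplicative constant in the final bound is at most $2$. The technical bottleneck is the combinatorial argument in the middle paragraph: one must verify that scheduled resamplings created by earlier touches $t_j < t_i$ do not introduce extra relevant times within an epoch, which I expect to be the most delicate case. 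This is where the doubling geometry is essential---for $t_j$ far from $s$, the ``next scheduled time from $t_j$ after $s$'' is still roughly $2(s - t_j)$, so only $O(1)$ of $t_j$'s scheduled times can fall in the narrow relevance window, and combined with the epoch-halving argument the total count stays $O(\log t)$.
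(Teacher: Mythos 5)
Your overall route is the same as the paper's: reduce everything to a deterministic bound $Rel(t,e)=O(\log t)$ on the number of $(t,e)$-relevant experiments per edge, deduce $\E[\widehat{\deg}_{A^t}(x)]\le O(\log t)\cdot\target^t(x)$ by linearity, and finish with a Chernoff bound carrying an additive term (the paper uses the additive--multiplicative form of \cite{soda/BadanidiyuruV14} with $\delta=1$ and $\alpha=30\log|M|$, which is exactly your ``$\max(c\mu,C\log|M|)$'' tail). The probabilistic half of your proposal is fine.

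The gap is in the combinatorial middle step, precisely where you flag it yourself. Your claim of ``$O(1)$ candidate resampling times'' inside the window $[(t+t_i)/2,\,t_{i+1})$ is only justified per earlier touch: each $t_j<t_i$ can indeed place $O(1)$ entries of its doubling sequence in that narrow window, but there can be up to $t$ earlier touches, so this does not bound the total number of candidates by $O(1)$; and the complementary assertion that earlier touches ``can only reduce that count further'' is not right as stated, since every additional scheduled entry also creates an additional experiment, i.e.\ an additional potential relevant time. What closes the hole is the observation that no touches occur in the interior of an epoch, so every schedule entry lying inside the epoch is already present in $\Schedule(\job(e))$ at the epoch's start; hence among all scheduled times below $t$ falling in the epoch, only the largest can be $(t,e)$-relevant (any earlier one already sees a later scheduled time strictly before $t$), giving at most one relevant time per contributing epoch, after which your halving argument over contributing epochs finishes. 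The paper avoids the epoch bookkeeping entirely (\Cref{claim:log_relevant}): if $t'<t''$ are consecutive $(t,e)$-relevant times, relevance of $t'$ forces a touch at some $s\ge t'$, and the entry $s+2^{\lceil\log(t-s)\rceil-1}\ge (s+t)/2$ added by that touch excludes any relevant time in $(t',(t'+t)/2)$, so consecutive relevant times halve the distance to $t$ and $Rel(t,e)\le\log t$; this also yields the multiplicative constant $2$ in the statement directly, rather than by tuning constants as you propose.
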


We prove \Cref{lem:bound deg hat}  in \Cref{sub:key 3}. Here, we only mention one important point about the proof.
The $\log(t)$ factor above follows from the factor the number of $(t,e)$-relevant experiment is always at most $Rel(t,e) \le \log(t)$ for any time $t$ and edge $e$. This property is so crucial and, actually, is what the proactive resampling technique is designed for.

Given three key steps above (\Cref{lem:bound deg,lem:bound deg bar,lem:bound deg hat}), we can conclude the proof of \Cref{lemma:overhead}.

\medskip\noindent
\textbf{Proof of \Cref{lemma:overhead}.} Recall that we ultimately want to show that,
for every timestep $t$, the maintained assignment $A^t$ has overhead $O(\log{(T)},\log|M|)$.
In other words, for every $t \in T$ and every $x \in M$,
we want to show that
$$\deg_{A^t}(x) \leq \target^t(x) \cdot O(\log{(T)}) + O(\log |M|).$$
By \Cref{lem:bound deg}, it suffices to show that
$$\overline{\deg}_{A^t}(x) \leq \target^t(x) \cdot O(\log{(T)}) + O(\log |M|).$$
By \Cref{lem:bound deg bar,lem:bound deg hat},
\begin{align*}
&~\P[\overline{\deg}_{A^t}(x) \geq 2 \log(t) \target^t(x) + O(\log |M|) ]\\
\leq&~\P[\widehat{\deg}_{A^t}(x) \geq 2 \log(t) \target^t(x) + O(\log |M|)] & \text{~(\Cref{lem:bound deg bar})} \\
\leq&~1/{|M|}^{10}. & \text{~(\Cref{lem:bound deg hat})}
\end{align*}

Now we apply union bound to the probability above. There are $T\leq |M|$ timesteps
and $|M|$ machines, hence the probability that a bad event happens
is bounded by $ \frac{T|M| }{{|M|}^{10}} = \frac{1}{{|M|}^8}.$
Here, we conclude the proof of \Cref{lemma:overhead}.

\subsection{Key Step 2}
\label{sub:key 2}

The goal of this subsection is to prove \Cref{lem:bound deg bar}.
The following reduces our problem
into proving that a certain probabilistic condition holds.

\begin{lemma}[Lemma 1.8.7(a) \cite{Doerr2020}]
	\label{lem:doerr}
	Let $X_1, \ldots, X_n$ be arbitrary boolean random variables and let
	$X^*_1 \ldots X^*_n$ be independent binary random variables.
	If we have
	$$\P[X_i = 1| X_1 = x_1, \ldots, X_{i-1} = x_{i-1}] \leq \P[X^*_i = 1]$$
	for all $i \in [n]$ and all $x_1, \ldots, x_{i-1} \in \{0,1\}$ with
	$\P[X_1 = x_1, \ldots, X_{i-1} = x_{i-1}] >0$, then
	$$ \sum_{i=1}^n X_i \preceq \sum_{i=1}^n X^*_i.$$
\end{lemma}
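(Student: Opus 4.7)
The plan is to prove the stochastic domination by constructing an explicit coupling between the two sequences, following the standard ``quantile coupling'' argument. Let $U_1,\dots,U_n$ be i.i.d.\ uniform random variables on $[0,1]$, independent of everything. I will build the joint law of $(X_1,\dots,X_n)$ sequentially on this source of randomness: having realized $X_1=x_1,\dots,X_{i-1}=x_{i-1}$, define $p_i(x_1,\dots,x_{i-1}) := \P[X_i=1 \mid X_1=x_1,\dots,X_{i-1}=x_{i-1}]$ (whenever this conditional probability is well defined), and set $X_i := \mathbf{1}[U_i < p_i(X_1,\dots,X_{i-1})]$. A straightforward induction on $i$ shows that the resulting joint distribution of $(X_1,\dots,X_n)$ agrees with the given one.

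On the same probability space, I would define an auxiliary sequence $\tilde X^*_i := \mathbf{1}[U_i < \P[X^*_i = 1]]$, using the \emph{same} uniforms $U_i$ but \emph{deterministic} thresholds $\P[X^*_i = 1]$. Two properties are immediate. First, $\tilde X^*_i$ is Bernoulli with the same marginal as $X^*_i$, and because the thresholds are deterministic while $U_1,\dots,U_n$ are independent, the variables $\tilde X^*_1,\dots,\tilde X^*_n$ are mutually independent; hence $(\tilde X^*_1,\dots,\tilde X^*_n)$ has the same joint distribution as $(X^*_1,\dots,X^*_n)$. Second, the hypothesis $p_i(x_1,\dots,x_{i-1}) \le \P[X^*_i = 1]$ gives, pointwise on the probability space, the implication $\{U_i < p_i(X_1,\dots,X_{i-1})\} \subseteq \{U_i < \P[X^*_i=1]\}$, so $X_i \le \tilde X^*_i$ almost surely for every $i$.

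Summing the pointwise inequalities yields $\sum_{i=1}^n X_i \le \sum_{i=1}^n \tilde X^*_i$ with probability one, and monotonicity of probabilities then gives, for every $\lambda \in \mathbb{R}$,
\begin{equation*}
\P\Bigl[\sum_{i=1}^n X_i \ge \lambda\Bigr] \;\le\; \P\Bigl[\sum_{i=1}^n \tilde X^*_i \ge \lambda\Bigr] \;=\; \P\Bigl[\sum_{i=1}^n X^*_i \ge \lambda\Bigr],
\end{equation*}
which is precisely the definition of $\sum_{i=1}^n X_i \preceq \sum_{i=1}^n X^*_i$.

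The only subtle point, and where I would be most careful, is the handling of histories $(x_1,\dots,x_{i-1})$ of probability zero: $p_i$ is undefined on such histories, but they form a null set under the coupling, so assigning $X_i$ arbitrarily there does not affect any probability statement. Everything else is a clean bookkeeping argument, and no concentration or moment computation is needed — the coupling transfers the inequality sample-path by sample-path, which is actually stronger than stochastic domination.
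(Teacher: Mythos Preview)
Your coupling argument is correct and is exactly the standard proof of this stochastic-domination lemma. Note that the paper does not give its own proof of this statement at all: it is quoted verbatim as Lemma~1.8.7(a) from \cite{Doerr2020} and used as a black box, so there is no in-paper argument to compare against. Your proof is precisely the one found in the cited source.
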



In light of the above lemma, we will prove that
	$$\P[X_i^{(t,x)} = 1 | X_1^{(t,x)}, \ldots, X_{i-1}^{(t,x)}] \leq  \P[{X}_i^{(t,x)} = 1] \leq  \P[\hat{X}_i^{(t,x)} = 1],$$
in  Claims~\ref{claim:history_wont_help} and  \ref{claim:upperbound_prob_exp}, respectively. This would imply  $\sum_{X^{(t,x)}_i\in{\mathcal{X}}^{(t,x)}}{X^{(t,x)}_i} \preceq \sum_{\hat{X}^{(t,x)}_i\in\hat{\mathcal{X}}^{(t,x)}}\hat{X}^{(t,x)}_i$ by  \Cref{lem:doerr} above,
and so $\overline{\deg}_{A^t}(x) \preceq \widehat{\deg}_{A^t}(x)$, completing the proof \Cref{lem:bound deg bar}.




\begin{claim}
	\label{claim:upperbound_prob_exp}
	For any $(t,x)$-relevant experiment $X^{(t,x)}_i$,
	$\P[X_i^{(t,x)} = 1] \leq \P[\hat{X}_i^{(t,x)} = 1]$.
\end{claim}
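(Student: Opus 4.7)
The plan is a direct computation using monotonicity of degrees in the decremental setting. By the definition of $\hat{X}_i^{(t,x)}$ we already have $\P[\hat{X}_i^{(t,x)}=1]=1/\deg_{G^t}(u)$, where $u=\job(e)$ and $e=e(X_i^{(t,x)})=e(\hat{X}_i^{(t,x)})$. So it suffices to compute (or upper bound) $\P[X_i^{(t,x)}=1]$ and compare.

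First I would let $t'=t(X_i^{(t,x)})$ denote the creation time of the $i$-th $(t,x)$-relevant experiment. Since $X_i^{(t,x)}$ is $(t,x)$-relevant, by \Cref{def:rel} this experiment lives on edge $e\in R^t(x)$ and was spawned by a call to $\Resample(u)$ at some moment $t'\le t$. At that moment, $\Resample(u)$ picks an edge incident to $u$ uniformly at random out of $\deg_{G^{t'}}(u)$ many choices, so the event ``$e$ is the chosen edge'' occurs with probability exactly $1/\deg_{G^{t'}}(u)$. Averaging this conditional success probability over any randomness that determines whether the experiment is created at all can only make the unconditional probability smaller, so
\[
\P[X_i^{(t,x)}=1]\;\le\;\frac{1}{\deg_{G^{t'}}(u)}.
\]

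Second, I would invoke the fact that the problem is purely decremental: machines (and hence the routines/edges incident to $u$) are only removed as time progresses, so $\deg_{G^{t}}(u)\le\deg_{G^{t'}}(u)$ whenever $t'\le t$. Taking reciprocals gives $1/\deg_{G^{t'}}(u)\le 1/\deg_{G^{t}}(u)$. Chaining the two inequalities yields
\[
\P[X_i^{(t,x)}=1]\;\le\;\frac{1}{\deg_{G^{t'}}(u)}\;\le\;\frac{1}{\deg_{G^{t}}(u)}\;=\;\P[\hat{X}_i^{(t,x)}=1],
\]
which is exactly the desired bound.

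There is really no serious obstacle here; the argument is essentially ``an edge chosen uniformly at random from a larger set is less likely to be any specific edge,'' combined with the monotonicity of $\deg_{G^{\cdot}}(u)$ under deletions. The only mildly delicate point is bookkeeping: one must remember that ``the experiment exists'' is itself a random event depending on the adaptive adversary's choices before $t'$, but since we only need an upper bound on the marginal $\P[X_i^{(t,x)}=1]$ and the conditional success probability given the experiment is created is at most $1/\deg_{G^{t'}}(u)$ on every such sample path, the unconditional probability inherits the same upper bound. This is the only place where the decremental nature of the dynamic setup is actually used in this step.
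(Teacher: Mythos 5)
Your proposal is correct and follows essentially the same route as the paper: the success probability at the creation time $t'$ is $1/\deg_{G^{t'}}(\job(e))$, and since the graph only loses edges, $\deg_{G^{t'}}(\job(e)) \ge \deg_{G^{t}}(\job(e))$, giving the bound by the definition of $\hat{X}_i^{(t,x)}$. Your extra care about conditioning on the experiment's existence (stating an inequality rather than the paper's equality, justified by noting relevance depends only on past information) matches the paper's parenthetical remark and does not change the argument.
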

\begin{proof}
Let $e = e(X_i^{(t,x)})$ and $t' = t(X_i^{(t,x)})$ be the time that the experiment $X_i^{(t,x)}$ is taken. Note that $t' \le t$ as $X_i^{(t,x)}$ is $(t,x)$-relevant.
The claim follows because
$$
\P[X_i^{(t,x)} = 1] = \frac{1}{ \deg_{G^{t'}}(\job(e)) }
\leq \frac{1}{ \deg_{G^{t}}(\job(e)) }
= \P[\hat{X}_i^{(t,x)} = 1].
$$
The first equality is because $X_i^{(t,x)}$ succeeds iff $\Sampling(\job(e))$ chooses $e$ at time $t'$, which happens with probability $1/ \deg_{G^{t'}}(\job(e))$. (Note that knowing that $X_i^{(t,x)}$ is $(t,x)$-relevant does not change the probability that the experiment $X_i^{(t,x)}$ succeeds because we can determine if $X_i$ is $(t,x)$-relevant depends on information in the past, including $X_1, X_2, \ldots, X_{i-1}$ and the adversary's actions.)
The inequality is because $t' \le t$ and $G$ undergoes deletions only. The last equality is by definition of $\hat{X}_i^{(t,x)}$.
%
%
%
\end{proof}


\begin{claim}
	\label{claim:history_wont_help}
	$\P[X_i^{(t,x)} = 1 | X_1^{(t,x)}, \ldots, X_{i-1}^{(t,x)}] \leq \P[X_i^{(t,x)} = 1].$
\end{claim}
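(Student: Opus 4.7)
The plan is to show that conditioning on earlier outcomes in the sequence $\mathcal{X}^{(t,x)}$ cannot increase the success probability of $X_i^{(t,x)}$, because the randomness driving $X_i^{(t,x)}$ is a fresh uniform sample drawn at time $t_i' = t(X_i^{(t,x)})$ by the call $\Resample(\job(e_i))$ (with $e_i = e(X_i^{(t,x)})$), and this sample is independent of everything that happened before $t_i'$. The whole argument is thus a careful tower-law computation together with the observation that $\P[X_i^{(t,x)} = 1]$, as used in Claim~\ref{claim:upperbound_prob_exp}, is really the conditional success probability given the (random) identities $(e_i,t_i')$.

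First, I would fix any realization $(x_1,\ldots,x_{i-1})$ of the preceding outcomes with positive probability and refine the conditioning. Let $\mathcal{H}$ denote the enriched event that specifies: the identities $(e_j,t_j')$ of all $(t,x)$-relevant experiments up to and including the $i$-th, the full sequence of the adversary's deletions up to time $t_i'$, and all random choices made by $\Resample(\cdot)$ strictly before time $t_i'$. Conditional on $\mathcal{H}$, the graph $G^{t_i'}$, the identity of the job $\job(e_i)$, and the value $\deg_{G^{t_i'}}(\job(e_i))$ are all deterministic, and the outcomes $X_1^{(t,x)},\ldots,X_{i-1}^{(t,x)}$ take prescribed values consistent with $(x_1,\ldots,x_{i-1})$.

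Second, conditional on $\mathcal{H}$, the event $\{X_i^{(t,x)} = 1\}$ is precisely the event that the fresh uniform sample drawn at time $t_i'$ by $\Resample(\job(e_i))$ equals $e_i$. Because this sample is independent of $\mathcal{H}$ by construction of the algorithm, we get
\[
\P\bigl[X_i^{(t,x)} = 1 \,\bigm|\, \mathcal{H}\bigr] \;=\; \frac{1}{\deg_{G^{t_i'}}(\job(e_i))} \;=\; \P\bigl[X_i^{(t,x)} = 1 \,\bigm|\, e(X_i^{(t,x)}) = e_i,\ t(X_i^{(t,x)}) = t_i'\bigr],
\]
where the last equality is the same convention under which Claim~\ref{claim:upperbound_prob_exp} is stated (both sides are interpreted as conditional on the identities of the $i$-th $(t,x)$-relevant experiment, not on outcomes). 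Summing over all enriched histories $\mathcal{H}$ that are compatible with the prescribed outcomes $(x_1,\ldots,x_{i-1})$ and applying the tower law then yields
\[
\P\bigl[X_i^{(t,x)} = 1 \,\bigm|\, X_1^{(t,x)} = x_1,\ldots,X_{i-1}^{(t,x)} = x_{i-1}\bigr] \;\le\; \P\bigl[X_i^{(t,x)} = 1\bigr],
\]
as claimed; indeed, within each fibre over the identities $(e_i,t_i')$, the two sides are equal.

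The main obstacle I anticipate is bookkeeping rather than any deep probabilistic difficulty: one must be careful with the implicit convention that treats $\P[X_i^{(t,x)} = 1]$ as a quantity that already pins down the identities $(e_i,t_i')$, and must verify that the machine-disjoint assumption (\Cref{assump:disjoint}) ensures the sequence $\mathcal{X}^{(t,x)}$ is unambiguously indexed by time so that $\mathcal{H}$ above is well-defined. Once these conventions are fixed, the independence of the fresh random draw inside each $\Resample$ call does all the real work.
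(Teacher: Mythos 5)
Your core argument is the same as the paper's: the coin flip deciding $X_i^{(t,x)}$ is a fresh uniform sample drawn at time $t_i'$ by $\Resample(\job(e_i))$, so conditioning on the past cannot raise its success probability; you make this precise by conditioning on an enriched history $\mathcal{H}$ and applying the tower law, and you correctly note the convention (shared with \Cref{claim:upperbound_prob_exp}) that $\P[X_i^{(t,x)}=1]$ is read as conditional on the identity $(e_i,t_i')$. The paper's proof is exactly this observation stated in one sentence — but with its emphasis on where the observation could fail, and that is the one point you handle incorrectly.

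The machine-disjoint assumption (\Cref{assump:disjoint}) is not about making $\mathcal{X}^{(t,x)}$ "unambiguously indexed by time"; its role is to guarantee that no two experiments in $\mathcal{X}^{(t,x)}$ come from the \emph{same} $\Resample(u)$ call (same job, same timestep). If two routines of one job both touched the machine $x$, two mutually exclusive sibling experiments would enter the sequence, and conditioning on the earlier sibling being $0$ strictly increases the success probability of the later one — the claim itself is false without the assumption, which is precisely the counterexample the paper records. In your write-up the issue surfaces at the assertion that, conditional on $\mathcal{H}$, the outcomes $X_1^{(t,x)},\ldots,X_{i-1}^{(t,x)}$ are already prescribed: $\mathcal{H}$ contains only the $\Resample$ coins strictly before $t_i'$, so an earlier relevant experiment taken at time $t_i'$ itself is not determined by $\mathcal{H}$. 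Under \Cref{assump:disjoint} any such experiment belongs to a different job, whose $\Resample$ call at time $t_i'$ uses coins independent of the one deciding $X_i^{(t,x)}$, so you can enlarge $\mathcal{H}$ by those outcomes and the tower-law computation survives; without the assumption the same step would condition on partial information about the very sample deciding $X_i^{(t,x)}$, and the inequality breaks. With that correction your proof is a valid, more explicit rendering of the paper's argument.
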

\begin{proof}
	By \Cref{assump:disjoint}, this is true simply because knowing the results of the past experiments and other experiments taken at the same timestep as $X_i^{(t,x)}$  cannot increase the
	probability that $X_i^{(t,x)}$ being $1$. Without the assumption, for some $i$ and $j<i$,
	it is possible that $\P[X_i^{(t,x)} = 1 | X_{j }^{(t,x)} = 0] > \P[X_i^{(t,x)} = 1]$
	if $t(X_i^{(t,x)}) = t(X_j^{(t,x)})$ and $\job(X_i^{(t,x)}) = \job(X_j^{(t,x)})$, i.e.,
	$X_i^{(t,x)}$ and $X_j^{(t,x)}$ are being sampled with the same $\Sampling(\job(X_i^{(t,x)})$ call.
\end{proof}

\subsection{Key Step 3}
\label{sub:key 3}
In this section, we prove \Cref{lem:bound deg hat}.
To simplify our proofs, we say that  \emph{time $t'$ is $(t,e)$-relevant} if there is a $(t,e)$-relevant experiment $X$ created at time $t(X) = t'$.
Since, for each time step $t'$, the algorithm can only create one  $(t,e)$-relevant experiment, we have the following observation:
\begin{observation}\label{obs:rel time}
	The number of $(t,e)$-relevant experiments $R(t,e)$ is exactly the number of $(t,e)$-relevant time steps.
\end{observation}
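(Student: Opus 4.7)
The plan is to exhibit a bijection between the set of $(t,e)$-relevant experiments and the set of $(t,e)$-relevant time steps, from which the equality $\mathrm{Rel}(t,e) = |\{(t,e)\text{-relevant time steps}\}|$ follows immediately.

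First, I would set up the natural map $\Phi: X \mapsto t(X)$ sending each $(t,e)$-relevant experiment to its creation time. By the very definition stated just before the observation, a time step $t'$ is $(t,e)$-relevant precisely when there exists some $(t,e)$-relevant experiment $X$ with $t(X) = t'$. Hence $\Phi$ is surjective onto the set of $(t,e)$-relevant time steps, and the observation reduces to showing that $\Phi$ is injective, i.e.\ that for every time step $t'$ the preimage $\Phi^{-1}(t')$ contains at most one experiment.

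For injectivity I would trace back to how experiments are created by the algorithm. Experiments arise only from invocations of $\Sampling(\cdot)$, and each invocation of $\Sampling(u)$ at time $t'$ produces exactly one experiment for each edge incident to $u$ in $G^{t'}$. In particular, setting $u = \job(e)$, a single call $\Sampling(u)$ at time $t'$ gives rise to exactly one experiment $X$ with $e(X) = e$ and $t(X) = t'$. It remains to check that at most one call to $\Sampling(u)$ takes place in any given time step $t'$: the algorithm has two possible triggers for $\Sampling(u)$ at time $t'$ — the scheduled trigger $t' \in \Schedule(u)$ and the touch trigger in which the adversary deletes the routine currently handling $u$ — and under the convention implicit in the algorithm description these are coalesced into a single $\Sampling(u)$ call per time step. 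Consequently $|\Phi^{-1}(t')| \le 1$, and combined with surjectivity this makes $\Phi$ a bijection.

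The step I expect to be the only real subtlety is the coalescing convention above: if one interpreted the algorithm so that the scheduled trigger and the touch trigger at the same time step led to two distinct $\Sampling(u)$ calls, the statement would degrade to $\mathrm{Rel}(t,e) \le 2 \cdot |\{(t,e)\text{-relevant time steps}\}|$, which is still good enough for the downstream union bound but is not the equality claimed. I would therefore make the single-call convention explicit in the algorithm description, after which injectivity, and hence the observation, is immediate.
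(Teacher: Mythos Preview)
Your approach is essentially the same as the paper's. The paper treats this as a triviality: immediately before stating the observation, it writes ``Since, for each time step $t'$, the algorithm can only create one $(t,e)$-relevant experiment,'' and the observation is meant to follow directly from that sentence. Your bijection via $\Phi:X\mapsto t(X)$ is just a careful unpacking of exactly this one-line justification---surjectivity is the definition of a $(t,e)$-relevant time step, and injectivity is the quoted sentence. Your caveat about the single-call convention is a fair point of hygiene that the paper glosses over, but it does not change the argument.
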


Now, we state the following crucial lemma. It says that, there are at most $\log (t)$ experiments that are $(t,e)$-relevant.
\begin{lemma}
	\label{claim:log_relevant}
	$Rel(t,e) \leq \log (t)$ for every $t$ and $e\in R^t$.
\end{lemma}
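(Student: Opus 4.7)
To prove $Rel(t,e) \le \log(t)$, I will show that the ``deficit'' $t - s$ of each relevant time $s$ at least halves when moving to the next one, so a geometric argument gives the desired logarithmic bound. Enumerate the $(t,e)$-relevant times (equivalently, by \Cref{obs:rel time}, the $(t,e)$-relevant time steps) as $s_1 < s_2 < \cdots < s_k$, set $u = \job(e)$, and write $\delta_i := t - s_i$. The goal reduces to proving $\delta_{i+1} \le \delta_i / 2$.

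The key reformulation is to translate relevance into a statement about touches of $u$. Entries are inserted into $\Schedule(u)$ only at touches, so every element of $\Schedule^{s}(u)$ has the form $\tau + 2^j$ for some touch $\tau \le s$ and integer $j \ge 0$. Therefore $\Schedule^{s}(u) \cap (s,t) = \emptyset$ is equivalent to: for every touch $\tau \le s$ of $u$, no power of two lies in the open interval $(s-\tau,\, t-\tau)$. Considering the latest touch $\tau(s) \le s$ and choosing $j$ with $2^j \le s - \tau(s) < 2^{j+1}$, the requirement $\tau(s) + 2^{j+1} \ge t$ becomes
\[
   s - \tau(s) \;\ge\; t - s,
   \qquad \text{equivalently,} \qquad
   s \;\ge\; \tfrac{t + \tau(s)}{2}.
\]
(When $s$ itself is a touch, i.e.\ $\tau(s) = s$, the same computation degenerates to $s \ge t-1$.)

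Next, I claim that the latest touch responsible for $s_{i+1}$ lies strictly after $s_i$, i.e.\ $\tau(s_{i+1}) > s_i$. Either $s_{i+1}$ is a touch, in which case $\tau(s_{i+1}) = s_{i+1} > s_i$; or $s_{i+1}$ is a scheduled resampling, so $s_{i+1} = \tau' + 2^{j'}$ for some touch $\tau'$, and then $\tau' > s_i$ must hold, because otherwise $s_{i+1}$ would already lie in $\Schedule^{s_i}(u) \cap (s_i, t)$, contradicting the relevance of $s_i$. Combining this with the previous step, $s_{i+1} \ge (t + \tau(s_{i+1}))/2 > (t + s_i)/2$, hence $\delta_{i+1} \le \delta_i / 2$. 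Iterating, $1 \le \delta_k \le \delta_1 / 2^{k-1} \le t / 2^{k-1}$, giving $k \le \log(t) + O(1)$. The main obstacle is the reformulation carried out above --- recognizing that the doubling schedule $\tau+1, \tau+2, \tau+4, \ldots$ planted at each touch leaves ``at most one usable slot per deficit scale'' and distilling this into the clean inequality $s \ge (t + \tau(s))/2$; once this is in hand, the ``latest touch after $s_i$'' observation is immediate from relevance of $s_i$, and the halving conclusion follows by arithmetic.
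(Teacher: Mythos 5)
Your proof is correct and takes essentially the same route as the paper's: the doubling schedule planted by the touch responsible for each relevant time forces the next relevant time past the midpoint $(s_i+t)/2$, so the gap to $t$ at least halves between consecutive relevant times, giving an $O(\log t)$ bound. The only differences are cosmetic bookkeeping (you argue via the latest touch $\tau(s)$ and the deficits $\delta_i$, the paper via the scheduled time $s'$ created by the responsible touch), and your additive $O(1)$ slack over the stated bound $\log(t)$ matches the looseness already present in the paper's own proof and is harmless in all downstream uses.
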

\begin{proof}
	By \Cref{obs:rel time}, we will bound the total number of $(t,e)$-relevant time steps.
	Suppose $t'$ is $(t,e)$-relevant. It suffices to show that if there is another time $t''>t'$ which is $(t,e)$-relevant, then $t'' \geq (t' + t)/2$. This means that
	each consecutive $(t,e)$-relevant time steps decrease the gap to the fixed time step $t$ by at least half. So this can happen at most $\log(t)$ times.

	To prove the claim, observe that $t'' \notin \Schedule^{t'}(\job(e))$ as $t'$ is $(t,e)$-relevant.
	 Hence, the adversary must touch $\job(e)$ at some timestep $s\geq t'$. When that happens, we add $s+1, s+2, \ldots$ into $\Schedule^{s}(\job(e))$. Let $s' = s +2^{\log {(t-s)}-1}$. It is clear that
	$$s' \geq s + 2^{\lceil \log (t-s) \rceil -1} \geq (s+t)/2 \geq (t' + t )/2.$$
	Because any timestep in $(t',s']$ cannot be $(t,e)$-relevant, $t''$ must be greater than $s'$.
	Hence, $t'' \geq (t' + t)/2$ as claimed.
\end{proof}

The above implies that the expected value of $\widehat{\deg}_{A^t}(x)$ is not too far from the target load of $x$ at time $t$.

\begin{lemma}\label{lem:expect deg hat}
	$\E [\widehat{\deg}_{A^t}(x)  ]  \leq \log{(t)}\target^t(x)$.
\end{lemma}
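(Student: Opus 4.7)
The plan is to compute $\E[\widehat{\deg}_{A^t}(x)]$ directly by linearity of expectation, regroup the sum over experiments by their associated edge, and then apply the deterministic per-edge bound $\mathrm{Rel}(t,e) \le \log(t)$ from \Cref{claim:log_relevant} to match the definition of $\target^t(x)$.

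More concretely, first I would partition the sequence $\hat{\mathcal{X}}^{(t,x)}$ according to which edge incident to $x$ each experiment is associated with. Every $(t,x)$-relevant experiment $\hat{X}$ has $e(\hat{X})\in R^t(x)$ by definition of $(t,x)$-relevance, so I can write
\[
\widehat{\deg}_{A^t}(x) \;=\; \sum_{e\in R^t(x)}\ \sum_{\hat{X}\in\hat{\mathcal{X}}^{(t,x)}:\, e(\hat{X})=e}\hat{X}.
\]
The inner sum has exactly $\mathrm{Rel}(t,e)$ terms, and each term is an independent Bernoulli with success probability $1/\deg_{G^t}(\job(e))$ by construction of $\hat{\mathcal{X}}^{(t,x)}$.

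Next, applying linearity of expectation (together with the tower property, since the number of experiments $\mathrm{Rel}(t,e)$ is itself a random variable depending on the adversary and the algorithm's history) gives
\[
\E\bigl[\widehat{\deg}_{A^t}(x)\bigr] \;=\; \E\!\left[\sum_{e\in R^t(x)} \frac{\mathrm{Rel}(t,e)}{\deg_{G^t}(\job(e))}\right].
\]
Here I am being careful that, although $\mathrm{Rel}(t,e)$ is random, the success probability attached to each $\hat{X}$ with $e(\hat{X})=e$ is the fixed quantity $1/\deg_{G^t}(\job(e))$ (this is how the independent copies $\hat{X}$ were defined).

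Finally, the main step is to plug in \Cref{claim:log_relevant}, which provides the \emph{deterministic} upper bound $\mathrm{Rel}(t,e)\le\log(t)$ for every $t$ and every $e\in R^t$. Pulling this uniform bound out of the expectation yields
\[
\E\bigl[\widehat{\deg}_{A^t}(x)\bigr] \;\le\; \log(t)\sum_{e\in R^t(x)}\frac{1}{\deg_{G^t}(\job(e))} \;=\; \log(t)\cdot\target^t(x),
\]
where the last equality is exactly the definition of $\target^t(x)$. I do not anticipate any real obstacle: all the conceptual work sits inside \Cref{claim:log_relevant}, which is the whole point of the proactive resampling schedule, and the rest of the argument is just bookkeeping with linearity of expectation.
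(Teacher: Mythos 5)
Your proposal is correct and follows essentially the same route as the paper: group the experiments in $\hat{\mathcal{X}}^{(t,x)}$ by their edge $e\in R^t(x)$, note each contributes expectation $1/\deg_{G^t}(\job(e))$, and invoke the deterministic bound $Rel(t,e)\le\log(t)$ from \Cref{claim:log_relevant}. Your explicit handling of the randomness of $Rel(t,e)$ via the tower property is a slightly more careful statement of the same bookkeeping the paper performs.
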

\begin{proof}
	We have the following
%
	$$\E[\widehat{\deg}_{A^t}(x) ] = \E[  \sum_{ \hat X \in \hat {\mathcal X} ^{(t,x)}} \hat X ] \leq \log{(t)} \sum_{e \in R^t(x)} 1/\deg_{G^t}(\job(e)) = \log(t)\cdot \target^t(x),$$
where the first and last equalities are by definitions of $\widehat{\deg}_{A^t}(x)$  and $\target^t(x)$, respectively. It remains to prove the inequality.

Observe that $\sum_{\hat{X}\in\hat{\mathcal{X}}^{(t,x)}\mid e(X_{i}^{(t,x)})=e}\E[\hat{X}]=Rel(t,e)/\deg_{G^{t}}(\job(e))$.
This is because the number of terms in the sum is exactly the number
of $(t,e)$-relevant experiments $Rel(t,e)$, and we precisely define
each $\hat{X}_{i}^{(t,x)}\in{\cal \hat{X}}^{(t,x)}$ where $e=e(X_{i}^{(t,x)})$
so that $\E[\hat{X}_{i}^{(t,x)}]=\P[\hat{X}_{i}^{(t,x)}=1]=1/\deg_{G^{t}}(\job(e))$.
Therefore, by \Cref{claim:log_relevant}, we have
$$ \E[\sum_{\hat{X}\in\hat{\mathcal{X}}^{(t,x)}}\hat{X}]=\sum_{e\in R^{t}(x)}\sum_{\hat{X}\in\hat{\mathcal{X}}^{(t,x)}\mid e(X_{i}^{(t,x)})=e}\E[\hat{X}]\le \log(t)\cdot\sum_{e\in R^{t}(x)}1/\deg_{G^{t}}(\job(e)).$$
\end{proof}

The last step is to show that the expectation bound from \Cref{lem:expect deg hat} is concentrated. However, since $\target(x)$ for the machine $x$ can be very small ($\ll1$),
it is not enough to use the standard multiplicative Chernoff's bound.
Instead, we will apply the version with both additive error and multiplicative error stated below.

%
%
%
%


\begin{lemma}[Additive-multiplicative Chernoff's bound~\cite{soda/BadanidiyuruV14}]
	Let $X_1, \ldots, X_n$ be independent binary random variables.
	Let $S = \sum_{i =1}^n X_i$. Then for all $\delta \in[0,1]$ and $\alpha > 0$,
	$$\P[S \geq (1+\delta) \E[S] + \alpha ]\leq \exp( - \frac{\delta \alpha}{3}).$$
\end{lemma}


\paragraph*{Proof of \Cref{lem:bound deg hat}.}
By
plugging $\delta = 1$ and $\alpha = 30\log{|M|}$ into the above bound, we have
$$\P\left[\widehat{\deg}_{A^t}(x) \geq 2 \E[\widehat{\deg}_{A^t}(x)] + 30\log{|M|}\right] \leq
\exp( - 30\log{|M|} / 3) = |M|^{-10}.$$
This completes the proof of \Cref{lem:bound deg hat} because $\E [\widehat{\deg}_{A^t}(x)  ]  \leq \log{(t)}\target^t(x)$ by \Cref{lem:expect deg hat}.

\section{Lifting the Machine-Disjoint Assumption}
\label{sec:lift_assumption}

In the proof of \Cref{lemma:overhead} in \Cref{sec:reduce to load},
we assume \Cref{assump:disjoint} which says that, for a fixed job $u$, routines for $u$
are machine-disjoint. In this section, we show how to remove it.

\paragraph*{Reiterating the pain point.}
Each experiment $X_i$ is defined \textbf{with respect to} a sampling on each edge.
In any sequence $\mathcal X^{(t,x)}$, there could be two $(t,x)$-relevant experiments
$X_i^{(t,x)}$ and $X_{j}^{(t,x)}$ where $i<j$ such that
$t(X_i^{(t,x)}) = t(X_j^{(t,x)})$ and $\job(X_i^{(t,x)}) = \job (X_j^{(t,x)})$.
By our sampling process, these two variables are negatively correlated
as knowing $X_i^{(t,x)} = 1$ would immediately imply that $X_j^{(t,x)} = 0$ and vice versa.

To see why the condition in \Cref{lem:doerr} may not be true in this case,
let consider the case where we have two random variables $X$ and $Y$ where $\P[X=1] = p$ be the probability that $X$ being $1$.
If these two variables are correlated in such a way that $X+Y = 1$, then we cannot show that  $\hat X , \hat Y$ such that $X + Y \preceq \hat X + \hat Y$ using \Cref{lem:doerr} unless $\P[\hat X] = \P[\hat Y] = 1$.
We want to make sure that variables in the sequence we want to analyze do not have
such correlation.

\paragraph*{Proving strategy.} 
Here, we will define two other sequences $\mathcal Y^{(t,x)}$ and $\hat{\mathcal Y}^{(t,x)}$. These two sequences are defined analogous to $\mathcal X^{(t,x)}$ and
$\hat{\mathcal X}^{(t,x)}$.
We then define $\widetilde{\deg}_{A^t}(x) = \sum_{\hat Y \in \hat{\mathcal Y}^{(t,x)}} \hat Y$.
We will replace $\widehat{\deg}_{A^t}(x)$ in Key Steps 2 and 3 by $\widetilde{\deg}_{A^t}(x)$.
More precisely, instead of \Cref{lem:bound deg bar}, we will prove $\overline{\deg}_{A^t}(x) \preceq \widetilde{\deg}_{A^t}(x)$, and 
we will prove that, w.h.p.,
$\widetilde{\deg}_{A^t}(x) \leq 2\target^t(x) + O( \log |M|) $ instead of \Cref{lem:bound deg hat}. 

\paragraph*{Redefining relevant experiments.} For a fixed $\mathcal X^{(t,x)}$,
we let $\pi^{(t,x)}(t',u) = \{ X_i^{(t,x)} : t(X_i^{(t,x)}) = t', \job(X_i^{(t,x)}) = u  \}$
be the set of $(t,x)$-relevant experiments containing $u$ taken at time $t'$.
We then let $Y^{(t,x)}(t',u)$ be a boolean random variable that is one
if any of experiments in $\pi^{(t,x)}(t',u)$ succeed. In other words,
$$ Y^{(t,x)} (t',u) = \sum_{X \in \pi^{(t,x)}(t',u)} X.$$
Note that $Y^{(t,x)}(t',u)$ is a boolean random variable because
only at most one variable in $\pi^{(t,x)}(t',u)$ can be true as
they are being sampled with the same $\Sampling(u)$ call.
We say that $Y^{(t,x)}(t',u)$ is \textbf{well-defined} if $|\pi^{(t,x)}(t',u)|>0.$
Note that $$\P[Y^{(t,x)}(t',u) = 1] = \frac{|\pi^{(t,x)}(t',u)|} { \deg_{G^{t'}}(u)}.$$

Now let us define more notations to help count the number of relevant experiments in form of $Y^{(t,x)}(.,.)$.
\begin{definition}
    We say that a timestep $t'$ is $(t,x,u)$-relevant if
    $Y^{(t,x)}(t',u)$ is well-defined.
    Let $Rel(t,x,u)$ be the number of $(t,x,u)$-relevant timesteps, this is exactly
    the number of possible $t'$ such that $Y^{(t,x)}(t',u)$ is well-defined.
    We also let $\widetilde{Rel}(t,x) = \sum_u Rel(t,x,u)$ be the number of possible pairs of
    $t',u$ such that $Y^{(t,x)}(t',u)$ is well-defined.
\end{definition}
We then define a sequence $\mathcal Y^{(t,x)} = Y_1^{(t,x)}, Y_2^{(t,x)}, \ldots, Y_{\widetilde{Rel}(t,x)}^{(t,x)}$ to be the sequence of well-defined $Y^{(t,x)}(.,.)$ order by the time of these experiments. Notice that $\overline{\deg}_{A^t}(x) = \sum_{X \in \mathcal X^{(t,x)}} X = \sum_{Y \in \mathcal Y^{(t,x)}} Y.$

Now we discuss the replacement of $\hat{\mathcal X}^{(t,x)}$.
For each $Y_i^{(t,x)}$, we let $\hat{Y}_i^{(t,x)}$ be an independent random variable that is true with probability $ \frac{ |\pi^{(t,x)}(t',u)| }{\deg_{G^t}(u)}$.
Then the sequence $\hat{\mathcal Y}^{(t,x)}$ is  $\hat{Y}_1^{(t,x)}, \hat{Y}_2^{(t,x)}, \ldots, \hat{Y}_{\widetilde{Rel}(t,x)}^{(t,x)}$.
As mentioned above, $\widetilde{\deg}_{A^t}(x) = \sum_{\hat Y \in \hat{\mathcal Y}^{(t,x)}} \hat Y$.

We state the two main claims below.

\begin{claim}[Key Step $2^\prime$]
    $\overline{\deg}_{A^t}(x) \preceq \widetilde{\deg}_{A^t}(x)$.
\end{claim}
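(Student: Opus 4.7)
The plan is to invoke \Cref{lem:doerr} with the sequence $Y_1^{(t,x)}, \ldots, Y_{\widetilde{Rel}(t,x)}^{(t,x)}$ playing the role of the $X_i$'s, and with the independent sequence $\hat{Y}_1^{(t,x)}, \ldots, \hat{Y}_{\widetilde{Rel}(t,x)}^{(t,x)}$ playing the role of the $X_i^*$'s. The observation that makes the bundling useful is that each $Y_i^{(t,x)}$ is already a boolean random variable (at most one experiment inside a bundle $\pi^{(t,x)}(t',u)$ can succeed, since all experiments in the bundle come from the \emph{same} $\Sampling(u)$ call), so the negative correlation between the raw experiments $X_i^{(t,x)}$ that blocked the direct argument has been collapsed inside each $Y_i^{(t,x)}$. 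Moreover, one still has $\overline{\deg}_{A^t}(x)=\sum_{X\in\mathcal{X}^{(t,x)}} X = \sum_{Y\in\mathcal{Y}^{(t,x)}} Y$, because every $(t,x)$-relevant experiment lies in exactly one bundle $\pi^{(t,x)}(t',u)$ determined by its creation time and job.

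First I would check the marginal bound $\P[Y_i^{(t,x)}=1]\le \P[\hat{Y}_i^{(t,x)}=1]$. Writing $t' = t(Y_i^{(t,x)})$ and $u = \job(Y_i^{(t,x)})$, one has
$$\P[Y_i^{(t,x)}=1] \;=\; \frac{|\pi^{(t,x)}(t',u)|}{\deg_{G^{t'}}(u)} \;\le\; \frac{|\pi^{(t,x)}(t',u)|}{\deg_{G^{t}}(u)} \;=\; \P[\hat{Y}_i^{(t,x)}=1],$$
where the inequality uses $t'\le t$ together with the fact that the graph is only decremental, exactly as in \Cref{claim:upperbound_prob_exp}.

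The key step is to upgrade this to the conditional bound
$$\P[Y_i^{(t,x)}=1 \mid Y_1^{(t,x)}=y_1,\ldots,Y_{i-1}^{(t,x)}=y_{i-1}] \;\le\; \P[\hat{Y}_i^{(t,x)}=1],$$
which is where the main obstacle lies. Distinct bundles $Y_j^{(t,x)}$ and $Y_i^{(t,x)}$ arise from distinct $\Sampling(\cdot)$ calls, either at different timesteps or for different jobs at the same timestep; in either case their underlying random coins are independent. So once one fixes the history that determines \emph{which} bundles exist, the outcome of the internal coin flips inside the $i$-th bundle is independent of the outcomes inside the earlier bundles. The wrinkle is that conditioning on $Y_1^{(t,x)},\ldots,Y_{i-1}^{(t,x)}$ can alter the adversary's subsequent deletions, and hence can alter which experiments end up being $(t,x)$-relevant, the composition of $\pi^{(t,x)}(t',u)$, and the value of $\deg_{G^{t'}}(u)$. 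I would handle this by conditioning on the full history (adversary actions and internal randomness) up to immediately before the $\Sampling(u)$ call at time $t'$ that generates $Y_i^{(t,x)}$; this history determines $\pi^{(t,x)}(t',u)$ and $\deg_{G^{t'}}(u)$, and conditional on it the sampling is a fresh uniform draw, so
$$\P[Y_i^{(t,x)}=1 \mid \text{history}] \;=\; \frac{|\pi^{(t,x)}(t',u)|}{\deg_{G^{t'}}(u)} \;\le\; \frac{|\pi^{(t,x)}(t',u)|}{\deg_{G^{t}}(u)} \;=\; \P[\hat{Y}_i^{(t,x)}=1],$$
as above. Averaging over histories consistent with $(y_1,\ldots,y_{i-1})$ and noting that the bound does not depend on the conditioning yields the desired inequality. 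Applying \Cref{lem:doerr} then gives $\sum_i Y_i^{(t,x)} \preceq \sum_i \hat{Y}_i^{(t,x)}$, i.e.\ $\overline{\deg}_{A^t}(x)\preceq \widetilde{\deg}_{A^t}(x)$.
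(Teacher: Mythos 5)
Your proposal is correct and follows essentially the same route as the paper: the paper also reduces to the conditional bound $\P[Y_i^{(t,x)}=1\mid Y_1^{(t,x)},\ldots,Y_{i-1}^{(t,x)}]\le\P[\hat Y_i^{(t,x)}=1]$ and invokes \Cref{lem:doerr}, noting that the bundled variables are no longer negatively correlated and that the marginal bound holds by the decremental-degree argument of \Cref{claim:upperbound_prob_exp}. Your explicit conditioning on the full history before the relevant $\Sampling(u)$ call is just a more detailed writing of the same argument the paper imports from \Cref{sub:key 2}.
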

\begin{proof}
As $\overline{\deg}_{A^t}(x) = \sum_{Y \in Y^{(t,x)}} Y$,
by \Cref{lem:doerr}, we need to show that
$\P[Y_i^{(t,x)}  = 1 | Y_1^{(t,x)}, \ldots, Y_{i-1}^{(t,x)}] \leq \P[\hat Y_i^{(t,x)}].$
By replacing $\mathcal X^{(t,x)}$ and $\hat{\mathcal X}^{(t,x)}$ with
$\mathcal Y^{(t,x)}$ and $\hat{\mathcal Y}^{(t,x)}$,
the proof here can be done with the same arguments we used in \Cref{sub:key 2}. Note that the proof works because random variables from $\mathcal Y^{(t,x)}$ are \emph{not} negatively correlated.
\end{proof}

\begin{claim}[Key Step $3^\prime$]
	$ \widetilde{\deg}_{A^t}(x) \le 2 \log{(t)} \target^t(x) + O(\log{|M|})$ with probability $1-1/{|M|}^{10}$.

\end{claim}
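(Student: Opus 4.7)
The plan is to mirror the three-step structure used for \Cref{lem:bound deg hat} in \Cref{sub:key 3}, with the bundle variables $\hat{Y}^{(t,x)}(t',u)$ playing the role of the single-edge variables $\hat{X}_i^{(t,x)}$. Since $\hat{\mathcal{Y}}^{(t,x)}$ is by construction a sequence of independent boolean random variables, the concentration step reduces to a direct application of the additive--multiplicative Chernoff bound of \cite{soda/BadanidiyuruV14} once we have an appropriate bound on $\E[\widetilde{\deg}_{A^t}(x)]$.

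For the expectation bound, I would unfold $\E[\hat{Y}^{(t,x)}(t',u)] = |\pi^{(t,x)}(t',u)|/\deg_{G^t}(u)$, sum over all well-defined pairs $(t',u)$, and observe that $\sum_{t',u} |\pi^{(t,x)}(t',u)|$ equals the total number of $(t,x)$-relevant experiments grouped by their job. Regrouping instead by the underlying edge $e \in R^t(x)$, \Cref{claim:log_relevant} ensures that each such edge contributes at most $\log(t)$ relevant experiments. Substituting this bound and simplifying gives
\[
\E[\widetilde{\deg}_{A^t}(x)] \;\le\; \log(t) \sum_{e \in R^t(x)} \frac{1}{\deg_{G^t}(\job(e))} \;=\; \log(t)\, \target^t(x),
\]
which is the exact analogue of \Cref{lem:expect deg hat}.

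Finally, applying the additive--multiplicative Chernoff bound with $\delta = 1$ and $\alpha = 30\log|M|$ to the independent variables $\hat{Y}_i^{(t,x)}$ yields
\[
\P\bigl[\widetilde{\deg}_{A^t}(x) \ge 2\,\E[\widetilde{\deg}_{A^t}(x)] + 30\log|M|\bigr] \;\le\; \exp(-10\log|M|) \;=\; |M|^{-10},
\]
which combined with the expectation bound gives the claim. The one point that needs care---more a bookkeeping subtlety than a real obstacle---is that the probabilities $|\pi^{(t,x)}(t',u)|/\deg_{G^t}(u)$ depend on the realized history of the process, so a priori $\widetilde{\deg}_{A^t}(x)$ is a sum whose summands have random success probabilities. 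This is resolved by noting that the expectation bound is derived by a purely combinatorial counting argument and therefore holds almost surely for every realization of the bundling; Chernoff can then be applied conditionally on the structure of $\hat{\mathcal{Y}}^{(t,x)}$, and the tail bound transfers unconditionally.
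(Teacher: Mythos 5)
Your proposal is correct and takes essentially the same route as the paper: the paper also reduces Key Step $3'$ to the argument of \Cref{lem:bound deg hat}, noting that $\E[\widetilde{\deg}_{A^t}(x)]=\E[\widehat{\deg}_{A^t}(x)]$ -- your regrouping of $\sum_{(t',u)}|\pi^{(t,x)}(t',u)|/\deg_{G^t}(u)$ by edges together with \Cref{claim:log_relevant} is exactly this computation, as in \Cref{lem:expect deg hat} -- and then applies the same additive--multiplicative Chernoff bound with $\delta=1$ and $\alpha=30\log|M|$. Your closing remark about conditioning on the realized bundling (since the deterministic bound $Rel(t,e)\le\log(t)$ holds for every history) is a slightly more careful treatment of a point the paper leaves implicit, but it does not change the approach.
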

\begin{proof}
    We can follow exactly the proof in \Cref{sub:key 3}. The only crucial part
    is to show that $\E[\widetilde{\deg}_{A^t}(x)] = \E[\widehat{\deg}_{A^t}(x)]$.
    The other parts can be argued exactly the same way.

    %
    This is true by the way we define the process. For each $\hat Y_i^{(t,x)}$ that is being $1$
    with probability $c/\deg_{G^t}(\job(Y_i^{(t,x)})$ for some integer $c$, we have
    $c$ different boolean variables $\hat X_1,\hat X_2,\ldots,\hat X_c$ in $\hat{\mathcal X}^{(t,x)}$ that is being $1$
    with probability $1/\deg_{G^t}(\job(Y_i^{(t,x)})).$ Hence, the two expectations are identical
    by linearity of expectation.
\end{proof}



\section{A Fully-dynamic-to-decremental Reduction for Spanners}\label{sec:reduction}

In this section, we give a reduction from a fully-dynamic spanner to a decremental spanner.
This reduction is due to~\cite{BaswanaKS12} and we provide it here for the
completeness of our paper.

Let $E_1 \ldots E_j$ be a partition of $E$, then the observation below states that
the union of spanners of $E_1 \ldots E_j$ is a spanner of $E$.

\begin{observation}[Observation~5.2 in~\cite{BaswanaKS12}]
    \label{obs:union_spanners}
    For a given graph $G = (V,E)$, let $E_1 \ldots E_j$ be a partition of the set of edges $E$,
    and let $\mathcal{E}_1 \ldots \mathcal{E}_j$ be respectively the $t$-spanner of subgraphs
    $G_1 = (V,E_1), \ldots, G_j = (V,E_j)$. Then $\bigcup_i \mathcal{E}_i$ is a
    $t$-spanner of the original graph $G=(V,E)$.
\end{observation}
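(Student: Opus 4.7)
The plan is to verify the two inequalities that define an $\alpha$-spanner separately, where $\alpha = t$. Write $H := \bigcup_i \mathcal{E}_i$. The lower bound $\dist_G(u,v) \le \dist_H(u,v)$ for any $u,v \in V$ is immediate because $H$ is a subgraph of $G$: each $\mathcal{E}_i$ is a subgraph of $G_i = (V, E_i)$, which is a subgraph of $G$, so edges of $H$ are edges of $G$ and any path in $H$ is a path in $G$.

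For the upper bound, I would fix arbitrary $u,v \in V$ and take a shortest path $P = (u = w_0, w_1, \ldots, w_d = v)$ in $G$, where $d = \dist_G(u,v)$. Since $E_1, \ldots, E_j$ partition $E$, each edge $(w_{k-1}, w_k)$ of $P$ lies in exactly one $E_{i_k}$, and hence is an edge of $G_{i_k}$. In particular, $\dist_{G_{i_k}}(w_{k-1}, w_k) = 1$. By the assumption that $\mathcal{E}_{i_k}$ is a $t$-spanner of $G_{i_k}$, we obtain
\[
\dist_{\mathcal{E}_{i_k}}(w_{k-1}, w_k) \le t \cdot \dist_{G_{i_k}}(w_{k-1}, w_k) = t.
\]

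Concatenating, for $k = 1, \ldots, d$, the resulting short paths (each of which lives inside $\mathcal{E}_{i_k} \subseteq H$) produces a walk from $u$ to $v$ in $H$ of total length at most $\sum_{k=1}^d t = t \cdot d = t \cdot \dist_G(u,v)$. Therefore $\dist_H(u,v) \le t \cdot \dist_G(u,v)$, which combined with the lower bound yields the stretch guarantee required for $H$ to be a $t$-spanner of $G$.

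There is no real obstacle here; the only subtle point worth stating explicitly is that partitioning (rather than covering) is used only to ensure that every edge of $P$ is actually an edge of some $G_{i_k}$, which is what lets us invoke the spanner property at the unit-distance level. The argument goes through even if $E_1, \ldots, E_j$ merely cover $E$, since each edge of $P$ still belongs to at least one part.
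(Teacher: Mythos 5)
Your proof is correct, and it is the standard argument (the paper itself does not spell out a proof, simply citing Observation~5.2 of \cite{BaswanaKS12}, whose proof is exactly this edge-by-edge replacement along a shortest path plus the trivial subgraph lower bound). Your closing remark that a cover of $E$ suffices in place of a partition is also accurate.
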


With this observation, the idea behind the reduction is to split into $O(\log n)$ subgraphs in
such a way that every subgraph, excepts one subgraph, is a decremental instance.

Formally, let $\ell_0$ be the greatest integer such that $2^{\ell_0} \leq n^{1+1/k}$. We do the following.
\begin{enumerate}
 \item We partition $E$ into $E_0 \ldots E_j, j = \lceil \log_2 n^{1-1/k} \rceil$ such that $|E_i| \leq 2^{\ell_0 + i}$. Each edge will belong to only one set $E_i$ and we keep track of this information.
 \item For each $E_i$, $i>0$, we maintain $H_i = (V,\mathcal E_i)$, which is a
 $(2k-1)$-spanner of $(V,E_i)$.
 \item We maintains a binary counter $\mathbf{C}$ which counts from $0$ to $\frac{n(n-1)}{2}$.
 This will be used to decide when to rebuild $E_i$.
\end{enumerate}

In the beginning, we set $E_j = E$ and $E_i = \emptyset$ for all $i < j$. The counter $\mathbf C$ is set to $0$.
Any edge deletion of $e \in E_i$ for any $i$ is handled as in the decremental case.
When an edge $e$ is inserted, we increment the counter $\mathbf{C}$ by one. Let $g$ be the highest bit of $\mathbf C$ that gets flipped.
If $g \leq \ell_0$, then we put $e$ in $E_0$ and $\mathcal E_0$. Otherwise, we insert $e$ into $E_h$, where $h = g - \ell_0$,
move all edges from $E_i$, $i<h$ to $E_h$. At this moment $E_i = \emptyset$ for all $i <h$. We then rebuilt the spanner $\mathcal E_h$.

From Observation~\ref{obs:union_spanners}, $\bigcup_i \mathcal E_i$ is a $(2k-1)$-spanner of $G$.

\begin{lemma}[Restate Lemma~\ref{lemma:fully_dyn_reduction}]
	\label{lemma:fully_dyn_reduction_restate}
	Suppose that for a graph $G$ with $n$ vertices and $m$ initial edges undergoing only edge deletions, there is an algorithm that maintains a $(2k-1)$-spanner $H$ of size $O(S(n))$  with $O(F(m))$ total recourse where $F(m) = \Omega(m)$,
	then there exists an algorithm that maintains a $(2k-1)$-spanner $H'$ of size $O(S(n) \log n)$
	in a fully dynamic graph with $O(F((U) \log n))$ total recourse. Here $U$ is the number of updates made throughout the algorithm, starting from an empty graph.
\end{lemma}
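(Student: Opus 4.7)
The plan is to follow the construction already described before the lemma: maintain the partition $E_0, E_1, \ldots, E_j$ with $j = \lceil \log_2 n^{1-1/k} \rceil = O(\log n)$, keep $\mathcal{E}_0 = E_0$ explicitly (whose size is at most $2^{\ell_0} = O(n^{1+1/k})$, and which is trivially a $(2k-1)$-spanner of itself), and run a fresh instance of the assumed decremental algorithm on each $E_h$ for $h \ge 1$, restarting it from scratch whenever the binary counter $\mathbf{C}$ triggers a rebuild of level $h$.

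I would first verify correctness and size. The stretch of $H' = \bigcup_i \mathcal{E}_i$ is $(2k-1)$ by Observation~\ref{obs:union_spanners}, since each $\mathcal{E}_i$ is a $(2k-1)$-spanner of $(V, E_i)$ and the $E_i$'s partition $E$. For the size, $|\mathcal{E}_0| = O(n^{1+1/k})$ and each of the $j = O(\log n)$ spanners $\mathcal{E}_h$ has size $O(S(n))$, so the total size is $O(S(n) \log n)$.

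The core of the proof is the recourse bound. Level $0$ contributes $O(1)$ recourse per update (since both insertions and deletions on $E_0$ translate directly into the same single edge change on $\mathcal{E}_0 = E_0$), hence $O(U)$ in total. For each level $h \ge 1$, I would use the standard binary-counter observation that bit $g := h + \ell_0$ is the \emph{highest} flipped bit of $\mathbf{C}$ during at most $U/2^{g+1}$ increments over $U$ updates. Each such event rebuilds $E_h$ from scratch while $|E_h| \le 2^{g}$, so a fresh invocation of the decremental algorithm handles the initialization and all subsequent deletions within this generation with total recourse $O(F(2^{g}))$; the cost of discarding the old $\mathcal{E}_h$ (at most $O(S(n))$ edges) and installing the new one is absorbed, since $F(m) = \Omega(m)$ dominates $S(n)$ when $m \ge 2^{\ell_0} \ge S(n)$. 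Thus level $h$ contributes $O\bigl((U/2^{g+1}) \cdot F(2^{g})\bigr)$ total recourse.

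Summing over $h = 1, \ldots, j$ gives
\[
\sum_{g=\ell_0+1}^{\ell_0+j} \frac{U}{2^{g+1}} \cdot F(2^{g}) \;=\; \frac{U}{2} \sum_{g} \frac{F(2^{g})}{2^{g}}.
\]
Under the mild regularity assumption that $F(m)/m$ is non-decreasing (which is trivially satisfied in our application, where Lemma~\ref{thm:greedy_spanner} gives $F(m) = O(m)$ so the ratio is constant), each summand is at most $F(U)/U$, and the $O(\log n)$ terms sum to $O(F(U) \log n)$. Combined with the $O(U) = O(F(U))$ contribution from level $0$, this yields the claimed total recourse of $O(F(U) \log n)$. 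The main obstacle is precisely this amortization: one must ensure that, in a single generation of $E_h$, the initial spanner construction, all intermediate deletions handled by the decremental algorithm, and the eventual discard at the next rebuild are collectively bounded by $O(F(|E_h|))$. Once that accounting is in place, the geometric sum across levels is routine.
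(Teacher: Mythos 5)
Your proof is correct and rests on the same lazy-rebuilding construction, but your recourse accounting takes a genuinely different route from the paper's. The paper charges rebuilds to edges: since an edge's level can only go up, each of the at most $U$ inserted edges participates in at most $\log n$ rebuilt graphs, so the total size of all rebuilt instances is $O(U\log n)$, and $F$ is then applied once to this total (implicitly using superadditivity of $F$), yielding $O(F(U\log n))$. You instead charge per level: bit $g=h+\ell_0$ is the highest flipped bit at most $O(U/2^{g+1})$ times, each generation of level $h$ costs $O(F(2^{g}))$, and you sum the resulting series using the regularity assumption that $F(m)/m$ is non-decreasing, obtaining $O(F(U)\log n)$ --- which under that assumption is at least as strong as the stated $O(F(U\log n))$ and coincides with it for the linear $F$ of the actual application. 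Note that each route needs some mild regularity of $F$ beyond the stated $F(m)=\Omega(m)$ (yours: $F(m)/m$ monotone; the paper's: superadditivity; the former implies the latter), so this is not a defect peculiar to your argument. Two small points to tighten: (a) in the sum, the terms with $2^{g}$ exceeding the number of insertions should be dropped outright (those levels are never rebuilt) rather than bounded by $F(U)/U$, since for them $F(2^g)/2^g$ may exceed $F(U)/U$; (b) your justification that discarding the old $\mathcal{E}_h$ is absorbed relies on the new $E_h$ having size at least $2^{\ell_0}$, which need not hold because deletions may have emptied the lower levels --- it is cleaner to charge the at most $\min(S(n),2^{g})$ removals to the $O(F(2^{g}))=\Omega(2^{g})$ budget of the generation being discarded, or simply to observe that total removals from $H'$ never exceed total additions, so the discard cost is a constant factor overhead.
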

\begin{proof}[Proof of Lemmma~\ref{lemma:fully_dyn_reduction}]
    We use the reduction above to partition the graph into $E_1, \ldots, E_j$.
    We then use the decremental algorithm to maintain each $\mathcal E_i$ for all $i > 0$.

    We first show that the size of $|H'| = |\bigcup_i \mathcal E_i| = O(\log n S(n))$.
    As we have $O(\log n)$ subgraphs, $|\bigcup_i \mathcal E_i| = \sum_i S(|E_i|) \leq O( S(n)\log n)$.

    Now we show the total recourse.
    Let $\mathcal G_1, \mathcal G_2, \ldots, \mathcal G_k$ be all the graph we rebuilt throughout all the timesteps.
    Then the total recourse is bounded by $\sum_i F(|\mathcal G_i|) \leq F(\sum_i |\mathcal G_i|)$.
    Notice that the level of any edge $e$ can only go up, so $e$ can contribute the recourse to only $\log n$ different graphs. Hence, this inequality becomes
    $$F(\sum_i |\mathcal G_i|) \leq F( (U) \log n ).$$
\end{proof}

\section{Missing Proofs from \Cref{sec:3spanner}}
\label{sec:app:missing:3spanner}

\begin{claim}[Restate \Cref{cl:static:stretch-size}]
    The subgraph $H = (V, E_1 \cup E_2 \cup E_3)$ is a $3$-spanner of $G$ consisting of at most $O(n\sqrt{n})$ edges.
\end{claim}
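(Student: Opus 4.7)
To prove \Cref{cl:static:stretch-size}, I would handle the size bound and the stretch bound separately, since the construction is designed so that each is transparent from the definitions of $E_1,E_2,E_3$.

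For the \emph{size bound}, I would bound the three sets individually. For $E_1$: for each bucket $V_i$ (there are $\sqrt n$ of them) and each vertex $v\in V\setminus V_i$, we add at most one edge $(v,c_i(v))$, giving $|E_1|\le \sqrt n\cdot n = O(n\sqrt n)$. For $E_2$: each bucket $V_i$ contributes at most $\binom{\sqrt n}{2}=O(n)$ intra-bucket edges, so summing over $\sqrt n$ buckets gives $|E_2|=O(n\sqrt n)$. For $E_3$: for each unordered pair $(u,u')$ with $u,u'\in V_i$ we add at most two edges through the witness $w_{uu'}$, and the number of such pairs is $\sqrt n\cdot \binom{\sqrt n}{2}=O(n\sqrt n)$, hence $|E_3|=O(n\sqrt n)$. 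Adding up gives the claimed $O(n\sqrt n)$ bound.

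For the \emph{stretch bound}, it suffices (since $H\subseteq G$) to show $\dist_H(u,v)\le 3$ for every edge $(u,v)\in E$. I would split into cases according to which buckets $u$ and $v$ lie in. If $u,v$ are in the same bucket $V_i$, then by construction $(u,v)\in E_2\subseteq E(H)$, so $\dist_H(u,v)=1$. Otherwise, let $u\in V_i$ and $v\in V_j$ with $i\neq j$. Since $u\in V_i\cap N_G(v)$ is nonempty, there is an $i$-partner $c_i(v)\in V_i$ with $(v,c_i(v))\in E_1$. If $c_i(v)=u$, then $(u,v)\in E_1$ and $\dist_H(u,v)=1$. Otherwise $u$ and $c_i(v)$ are two distinct vertices in $V_i$ whose common neighborhood contains $v$, so when $E_3$ is constructed they receive some witness $w_{u,c_i(v)}\in N_G(u)\cap N_G(c_i(v))$ (possibly $v$ itself), and both edges $(u,w_{u,c_i(v)})$ and $(w_{u,c_i(v)},c_i(v))$ lie in $E_3$. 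Concatenating with $(c_i(v),v)\in E_1$ gives a path $u\to w_{u,c_i(v)}\to c_i(v)\to v$ in $H$ of length $3$.

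The argument is essentially bookkeeping; the only step that requires care is choosing the right ``anchor'' in the opposite bucket when $(u,v)$ crosses buckets. The key observation making this work is that the presence of the edge $(u,v)$ itself certifies that $V_i\cap N_G(v)\neq\emptyset$, which guarantees both that an $i$-partner $c_i(v)$ exists and that the pair $\{u,c_i(v)\}\subseteq V_i$ has a nonempty common neighborhood, so a witness in $E_3$ is guaranteed to exist. There is no genuine obstacle, and together the two parts yield the claim.
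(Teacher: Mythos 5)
Your proof is correct and follows essentially the same route as the paper: bound $|E_1|,|E_2|,|E_3|$ separately by $O(n\sqrt n)$, and for each edge $(u,v)$ crossing buckets, use a partner edge from $E_1$ plus the two witness edges from $E_3$ to get a path of length at most $3$. The only (immaterial) difference is that you anchor via $c_i(v)$ in $u$'s bucket while the paper anchors via $c_j(u)$ in $v$'s bucket, which is a symmetric choice.
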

\begin{proof} We need to show that (1) $H$ is a $3$-spanner and (2) $E_H = E_1 \cup E_2 \cup E_3$ has size at most $O(n \sqrt{n})$.
GG
        \paragraph*{Stretch.} Consider an edge $e = (u,v)$ where $u \in V_i, v \in V_j$.
    We show that $H$ has a path of length at most 3 between $u$ and $v$.
    The easy case is when $(u,v) \in E_H$.
    This gives us a path of one edge.
    It happens when $u = c_i(v)$ or $v = c_j(u)$, or $i = j$.
    Suppose $(u,v ) \notin E_H$.
    Consider $v' = c_j(u)$. Since $u$ is a common neighbor between $v$ and $v'$,
    $P(v,v')$ is not empty. As $e \notin E_H$, $u \neq w_{vv'}$.
    As, the path $u, v', w_{vv'}, v$ has length exactly $3$,
    the stretch part is concluded.

    \paragraph*{Size.} Each vertex $u$ has upto $\sqrt{n}$ partners.
    Since we have $n$ vertices, $|E_1| = O(n \sqrt{n})$.
    For $E_2$, the graph induced on $V_i$ has at most $O(n)$ edges.
    Since we have $\sqrt{n}$ buckets, $|E_2| = O(n \sqrt{n})$.
    For $E_3$, $|E_3|$ is bounded by the number of witnesses we need.
    Since we have $O{\sqrt{n}}$ buckets, and we have $O(n)$ pairs of
    vertices within the same bucket, for all buckets, $|E_3|$ must
    be bounded by $O(n \sqrt{n})$.
    We conclude the proof by saying that $|E_H| = O(|E_1| + |E_2| + |E_3|)
    = O(n\sqrt{n})$

\end{proof}

\subsection{Making the update time worst case}
\label{sub:app:spread:work}

It is evident that the update time in Lemma~\ref{lm:worstcase} is in worst-case. The whole algorithm is amortized only because we are {\em replacing} $E_3$ from scratch at the start of each phase, which takes $\tilde{O}(n^2)$ time, and this time needs to be amortized over the length of the phase.
Using very standard techniques from the existing literature on  dynamic algorithms (see e.g.~\cite{thorup2005worst,baswana2016dynamic,NanongkaiSW17,kiss2021deterministic}), we can easily convert this into an overall worst case update time guarantee.
The idea is to {\em spread out} the $\tilde{O}(n^2)$ cost of rebuilding at the start of a given phase through a sufficiently large chunk of updates in the phase preceding it.

For completeness, we will describe the idea in more detail here.
Let $G_i$ be the graph after $i$ updates.
We will use one instance of our algorithm to handle
$L = \tilde{\Theta}(n\sqrt{n})$, that is,
the $i$-th instance will be used to handle the time steps $[(i-1)L, iL)$.
For our idea to work, any copy must be able to handle $2L$ updates (which is fine in our case).
At the beginning, we initiate the first copy $D_1$ with time $\Otil(n^2)$.
We want to initiate $D_2$, as well as feed $D_2$ with $L$ updates, so that $D_2$ is ready to use
at the timestep $L$.
This can be done in the following manner.
\begin{itemize}
    \item During time steps $[0,L/3)$, we initiate $D_2$ with the graph $G_0$,
    \item During time steps $[L/3, 2L/3)$, we carefully feed the surviving\footnote{Some edges in the spanner of $D_2$ might be deleted between the time step $[0,2L/3]$, we must not add deleted edges in our actual output.}
    output from $D_2$ into our actual output,
    \item During time steps $[2L/3,L)$, we update $D_2$ with updates from time steps $[0,L)$, $3$ updates at a time.
\end{itemize}

Hence, at time step $L$, we can switch from $D_1$ to $D_2$, disregard $D_1$, and start initiating $D_3$.
More generally, suppose at time step $iL$, after we have initiated $D_i$, which we will use for time step $[iL, (i+1)L)$.
In the following $L$ time steps, to disregard $D_{i-1}$ and initiate $D_{i+1}$, we do the following.
\begin{itemize}
    \item During time steps $[iL,iL + L/3)$, we initiate $D_{i+1}$ with the graph $G_{iL}$ and slowly disregard the output from $D_{i-1}$,
    \item During time steps $[iL + L/3,iL + 2L/3)$, we carefully feed the surviving output from $D_{i+1}$ into our actual output,
    \item During time steps $[iL + 2L/3,(i+1)L)$, we update $D_{i+1}$ with updates from time steps $[iL,(i+1)L)$, $3$ updates at a time.
\end{itemize}

Then, at time step $(i+1)L$, we completely disregard $D_{i-1}$, and completely initiate $D_{i+1}$, hence maintaining all the desired properties.
Our actual output at time step $t$, is consisting of $E_{D_i}$ and $E_{D_{i+1}}$, which are output of $D_i$ and $D_{i+1}$, respectively.
Notice that, $(V,E_{D_i})$ is a $3$-spanner of $G_t$, hence, $(V, E_{D_i} \cup E_{D_{i+1}})$ is also a $3$-spanner of $G_t$.
This is because a spanner with extra edges is still a spanner. Hence, as long as $|E_{D_i} \cup E_{D_{i+1}}|$ is not too large, we can keep both of them in our output at the same time.
We conclude by saying that, this idea of maintaining multiple copies, along with our \Cref{lm:worstcase:updatetime}, implies \Cref{thm:main det worst case}.

\textbf{Note.} As a corollary, we note that the idea above does not have anything to do with Spanner. Rather, it is applicable to any dynamic problem, as long as, it can handle batch updates in worst-case time, and that the bottleneck of the algorithm is on the initialization time.

\end{document}